\newtheorem{lemma}{Lemma}
\newtheorem{proposition}{Proposition}
\newtheorem{assumption}{Assumption}
\newtheorem{definition}{Definition}
\newcolumntype{d}[1]{D{.}{.}{#1}}     
\definecolor{linkcol}{HTML}{0066CC}                     
\definecolor{lightgray}{gray}{0.85}
\itshape\color{gray},
\titlespacing{\section}{0pt}{2ex}{1ex}
\titlespacing{\subsection}{0pt}{1ex}{0ex}
\titlespacing{\subsubsection}{0pt}{0.5ex}{0ex}
\definecolor{linkcol}{HTML}{db0404}
\crefname{assumption}{Assumption}{Assumptions}
\newcommand{\IR}{\text{IR}}
\newcommand{\IRSlack}{\text{IR-Slack}}
\begin{document}

\title{Automatic Renewal Trials and Rationally Inattentive Customer}
\author{Felicia Nguyen\thanks{Department of Marketing - Emory University. \href{mailto:pnguy38@emory.edu}{pnguy38@emory.edu} }\\
Emory University}
\date{\today}
\maketitle

\begin{abstract}
\small
\onehalfspacing
\noindent The ``free trial'' followed by automatic renewal is a prevalent business model in the digital economy. While traditional models view trials as a way for consumers to learn a product's value, we propose a complementary theory based on rational inattention \citep{Sims2003}. In our framework, consumers already know their valuation but face cognitive costs associated with remembering to cancel an unwanted subscription. We model this using a Shannon attention cost, where a consumer's attention decays over the trial period. This creates a fundamental trade-off for firms: longer trials increase ``inattentive revenue'' from consumers who forget to cancel, but they also reduce the initial attractiveness of the offer. This dynamic leads to an optimal trial length, even when consumers do not need time to learn the product's value. Our model predicts that optimal renewal prices and trial lengths are complements; longer trials are associated with higher post-trial prices. Analyzing consumer protection measures, such as click-to-cancel regulations, we find that making cancellation easier motivates firms to shorten trial periods. Furthermore, we demonstrate that introductory prices (paid trials) and trial lengths act as strategic substitutes. This framework provides a new lens for understanding subscription contracts and evaluating consumer protection policies in digital markets.
\end{abstract}

\noindent\textbf{Keywords}: Rational Inattention; Free Trials; Subscription Contracts; Digital Services; Pricing.

\newpage

\section{Introduction}\label{sec:intro}
\setlength{\epigraphwidth}{0.7\textwidth}
\epigraph{\textit{``The struggle of man against power is the struggle of memory against forgetting.''}}{Milan Kundera}

The proliferation of subscription-based services has become a defining feature of the modern economy. From streaming entertainment (e.g., Netflix, Spotify) and software as a service (Adobe Creative Cloud, Microsoft 365) to news media and e-commerce (Amazon Prime), or even meal preps (e.g., Home Chef, Factor), the ``free trial with automatic paid subscription renewal'' model is ubiquitous. A central approach in the customer acquisition and growth strategy toolkit for these services is an introductory trial offer, typically a free or low-cost period, after which the consumer is automatically enrolled in a recurring payment plan (commonly referred to as the ``negative option'' model by regulatory bodies). The subscription mode accounts for almost half the share of mobile apps revenue, despite the small share of subscription-based apps. Given the multi-billion-dollar scale of this market, understanding the strategic design of these contracts is highly valuable.

The dominant theoretical framework to examine product trial offers in existing literature follows the paradigm of consumer learning \citep{hoch1986consumer,iyengar2007model}. In this sense, consumers are uncertain about their valuation of a new product. A free trial serves as a mechanism to resolve this uncertainty, allowing consumers to learn their personal fit and make an informed purchase decision. The firm offers the trial to signal quality and persuade high-valuation consumers to adopt, balancing the cost of the trial against the future stream of revenue from converted subscribers. The optimal duration of such a trial is determined by the time required for this learning process to conclude.

While consumer learning is undeniably an important factor, it provides an incomplete picture of the subscription landscape. We often observe trial periods that appear excessively long, relative to any reasonable learning period. A user can likely assess a music streaming service's library and interface within a few days, yet 30-day, 60-day, or even longer trials are standard practice. Furthermore, a substantial body of evidence points to consumer inertia, forgetfulness, and procrastination as significant drivers of subscription retention. Surveys consistently reveal that a large number of consumers pay for subscriptions they no longer use, with forgetfulness cited as a primary reason (nearly half, according to a recent Forbes survey \citealt{orentas2024streaming}). This suggests that trial periods may serve a dual purpose: one of facilitating learning, and another, perhaps less benign, of exploiting consumer inattention.

\begin{figure}
    \centering
    \includegraphics[width= 0.8 \linewidth]{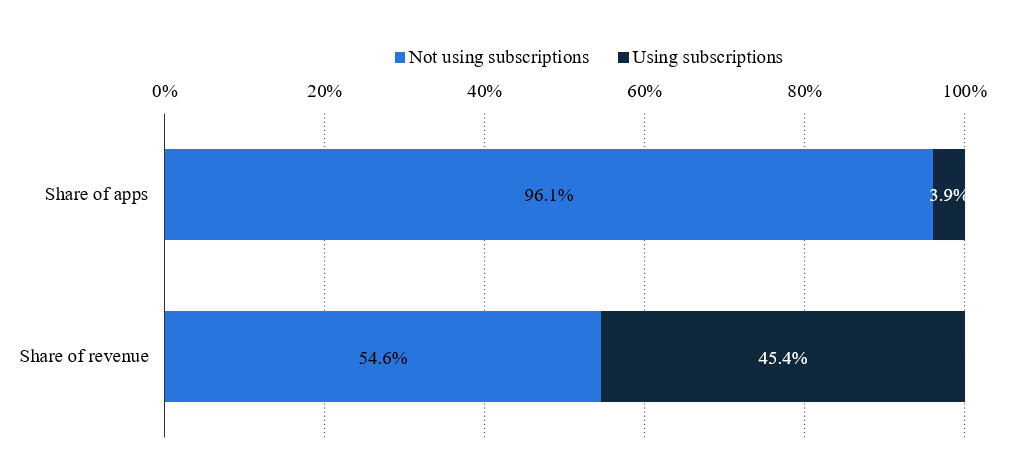}
    \caption{Distribution of subscription-based mobile apps and their revenue worldwide as of August 2024 (Source: Statista).}
    \label{fig:substats}
\end{figure}

We develop a formal theory of subscription contract design, in the case of auto-renewal, that isolates and explores this second channel. We propose a model where the primary strategic function of the trial length is to modulate consumer attention. To separate our mechanism from consumer learning, we assume that consumers know their valuation of the service early on\footnote{As discussed later in the paper, these two factors are likely complementary, and our assumption is purely to keep the model parsimonious.}. The central friction in our model is not uncertainty about value, but the cognitive cost of future actions. A consumer who signs up for a trial and finds that the service is not worth the renewal price must remember to perform a specific action, cancellation, by a specific deadline. This act of timely recall is not costless; it requires cognitive resources.

We formalize this friction using the rational inattention framework pioneered by \citet{Sims2003} and adapted for discrete choice by \citet{MatejkaMcKay2015}. In our model, a consumer who wishes to cancel can exert mental effort to increase the probability of successfully doing so. The cost of this effort is governed by their "attention sensitivity." Our key behavioral assumption, grounded in the psychology of prospective memory \citep{krishnan1999prospective, mcdaniel2000strategic}, is that this attention sensitivity decays with the length of the trial period. A task due tomorrow is more salient and easier to manage than a task due three months from now.

This intuitive assumption generates a novel and fundamental trade-off for the firm. On one hand, extending the trial period encourages attentional decay. This makes it cognitively more expensive for low-valuation consumers to ensure cancellation, leading to a higher rate of accidental renewals. This generates what we term \emph{inattentive revenue}, i.e., profit derived purely from consumer forgetfulness. This creates a strong incentive to lengthen the trial, a mechanism absent in standard learning models. On the other hand, the consumers in our model are sophisticated. They rationally anticipate their own cognitive limits and the associated costs of monitoring a distant deadline. A longer trial increases the risk of a costly mistake, lowering their ex-ante expected utility from signing up. We term this deterrent effect \emph{IR-Slack}. It acts as a powerful balancing force, incentivizing the firm to shorten the trial to increase initial adoption.

On the other hand, the consumers in our model are fully rational and sophisticated. They are aware of their own cognitive limitations and anticipate the challenges of managing a subscription with a distant deadline. They understand that a longer trial increases the expected cost of monitoring and the risk of a costly mistake. This potential cognitive burden lowers their expected utility from signing up for the subscription in the first place. We label this component the \textit{IR-Slack}, and this acts as a constraint, making the firm's offer appear less attractive and providing a strong incentive to shorten the trial period.

The optimal contract, consisting of the trial length ($T$) and renewal price ($P$), balances this tension, resulting in moderate-length trials we often observe in practice. The model thus explains why firms offer trials that are longer than necessary for learning but not exceedingly so. We then employ this framework to generate testable predictions about the structure of subscription contracts. By imposing standard assumptions on the distribution of consumer valuations, we find that trial length and renewal price are strategic complements. A firm that lengthens its trial period to enhance its inattentive revenue stream will also find it optimal to increase its renewal price to extract more value from this channel. This prediction diverges from the consumer learning framework, where the relationship between trial length and price is often unclear.

Our model also provides a lens for analyzing regulatory interventions. Concerns over ``dark patterns'' have spurred regulator proposals like ``click-to-cancel'' laws, recently introduced in California and proposed federally by the FTC \citep{Luguri2021}. We model such policies as an exogenous reduction in the cost of consumer attention. We show that this forces firms to retreat from the inattention margin, leading to shorter optimal trial lengths. The impact on pricing, however, is nuanced, depending crucially on the price elasticity of demand among willing subscribers. This highlights the necessity of theory-informed policy design to anticipate unintended consequences.

Finally, we extend our baseline model to the case of paid trials. We show that the introductory price and the trial length act as strategic substitutes. An introductory price functions as a \emph{screening device} to select for more committed, potentially higher CLV customers. This plan design will reduce the pool of ``low-valuation'' consumers who are the source of inattentive revenue, therefore diluting the incentive to use a long trial. This ``dilution principle'' illustrates how different contract features can substitute for each other in a firm's strategy to manage consumer behavior.

Our study contributes to three distinct strands of literature. First, we advance the rational inattention literature by endogenizing the cost of attention, wherein the firm’s contract design actively shapes the consumer's cognitive environment. Second, we contribute to behavioral industrial organization by providing a new micro-foundation for consumer inertia in a market populated by sophisticated consumers. Third, we offer a novel framework for business model strategy, positioning attention, rather than learning, as a central driver of subscription contract design.

The rest of the paper proceeds as follows. First, \Cref{sec:lit} provides a detailed review of the related literature. \Cref{sec:model} formally introduces the primitives of the model, as well as the description of the consumer's and firm's problems. In \Cref{sec:general} we present our main results that hold for a general class of information cost functions.  Next, \Cref{sec:shannon} specializes the model to the widely used Shannon entropy cost structure and imposes regularity conditions to derive sharper predictions. Additionally, in \Cref{sec:extension} we extend the model to the (less but still common) paid trials setting. In \Cref{sec:policy}, we synthesize the findings into a discussion of managerial and policy implications. Finally,  \Cref{sec:conclusion} concludes and outlines avenues for future research. Detailed proofs of our claims are provided in the accompanying Appendices.
\section{Related Literature}\label{sec:lit}

This research sits at the intersection of three influential streams of literature: the economic theory of rational inattention, behavioral industrial organization with a focus on contract design, and the literature on digital subscriptions and customer management. By synthesizing insights from each domain, we build a comprehensive theory of contract design based on attentional friction that bridges economic theory with managerially relevant practice.

\subsection{Rational Inattention and Bounded Rationality in Choice}

The foundation of our theoretical framework rests on the theory of rational inattention, introduced by \citet{Sims1998, Sims2003}. This theory provides a rigorous micro-foundation for information frictions, positing that agents are not perfectly informed, not because information is unavailable, but because processing such information is cognitively costly. Agents must therefore optimally allocate their finite attention. Our work builds on the tractable discrete-choice formulation of \citet{MatejkaMcKay2015}, where Shannon entropy models the cost of distinguishing between alternatives, leading to a generalized logit choice model.

While early applications of rational inattention focused primarily on macroeconomic phenomena, its principles have increasingly found application in microeconomic and business contexts. The framework provides theoretical underpinning for consumer behaviors such as non-responsiveness to small price changes and choice inertia \citep{MackowiakWiederholt2009}. Several studies have applied rational inattention to explain strategic pricing behavior, including rigid pricing when consumers are inattentive to price \citep{matvejka2015rigid}, discrete pricing when sellers are inattentive \citep{matvejka2016rationally}, and quality-based strategies when consumers are inattentive to quality \citep{martin2017strategic}. Our contribution to this stream of research is examining the dynamic case when consumers are inattentive to \emph{timing}, a novel yet extremely common and applicable setting in subscription markets. This temporal dimension of inattention has received limited attention in prior work, despite its clear relevance to modern digital commerce.

Recent methodological advances have made rational inattention models more empirically tractable. \citet{joo2023rational} develop a new framework for empirical discrete choice demand estimation based on \citet{MatejkaMcKay2015}'s work, while \cite{brown2024endogenous} introduce a closed-form tractable model of RI-based discrete choice through the elegant assumption of Cardell distribution shocks. Readers interested in empirical applications in marketing and related fields should refer to \citet{turlo2025discrete} for a detailed overview. These developments provide the empirical foundation for testing our theoretical predictions. Empirical researchers can easily extend these existing approaches to our framework by incorporating a dynamic state-space component. 

A major theoretical contribution of our paper is endogenizing the cost of attention. In most extant RI models, the agent's information processing capacity (our $\tau_0$) is treated as an exogenous parameter. In our model, the effective attention sensitivity $\tau(T)$ becomes a direct function of a firm's strategic choice: the trial length $T$. This innovation connects our work to growing interest in how market environments and firm strategies can shape consumer cognition \citep{deClippel2014}. By making the consumer's attention capacity an outcome of the firm's contract design, we create a feedback loop between market structure and cognitive constraints, enabling analysis of how firms might strategically manipulate the informational environment to their advantage.

\subsection{Behavioral Industrial Organization and Exploitation Contracts}

Our paper contributes to the rapidly growing field of behavioral industrial organization, which examines market outcomes when firms interact with consumers who exhibit psychological biases. A significant strand of this literature focuses on the design of "exploitation contracts" that generate profit from consumer mistakes. The seminal work of \citet{GabaixLaibson2006} on markets with shrouded attributes demonstrates how firms may hide the price of add-ons (e.g., printer ink) to exploit "myopic" consumers. Our model shares the theme of firms profiting from a secondary, often overlooked aspect of a product, which in our case is the cognitive cost of cancellation.

However, our conception of the consumer differs fundamentally from models that assume consumer naivety. Rather than positing a fraction of naive consumers, all consumers in our model are sophisticated and forward-looking. They are "rationally inattentive," not naively forgetful. Consumers are fully aware of their own cognitive limitations and anticipate that a long trial will make cancellation more difficult. This anticipation generates what we term "IR-Slack": a counterbalancing loss component that moderates the firm's incentive to fully exploit inattention.

This approach aligns more closely with recent behavioral models that assume consumer sophistication, particularly those dealing with time-inconsistent agents such as those with present-biased preferences \citep{DellaVignaMalmendier2004, HeidhuesKoszegibotond2017}. \citet{DellaVignaMalmendier2004} analyze contracts for gyms and credit cards, demonstrating that firms will optimally design contracts to exploit consumers' self-control problems, even when consumers are sophisticated. Our paper provides a parallel analysis where the friction is inattention rather than present bias, contributing to this literature by proposing a new, attention-based mechanism for consumer inertia and showing how it can be embedded within a contract design problem with sophisticated agents. Recent empirical evidence by \citet{rodemeier2025buy} lends credence to our framework, as that study shows, through a field experiment, that consumers are sophisticated about their inattention in rebate redemption.

This approach also resonates with the behavioral economic literature on "sludge," which examines how firms can increase cognitive friction to guide consumer choices, often to the consumer's detriment \citep{thaler2018nudge}. Our model formalizes these intuitions within a rigorous economic framework, providing testable predictions about when and how firms will deploy such strategies.

Empirical validation of our theoretical framework comes from a recent large-scale field experiment by \citet{Miller2023}. In a study with over two million newspaper readers, they experimentally vary the presence of an automatic renewal clause. Their findings validate the two central, opposing forces in our model. First, they document powerful inertia, finding that a large fraction of consumers who are defaulted into a paid subscription remain subscribed while exhibiting little to no usage, corresponding to the \textit{inattentive revenue} channel we model. Second, and critically, they find strong evidence of consumer sophistication, showing that initial take-up of auto-renewal offers is 24\%--36\% lower than for equivalent auto-cancel offers. This provides direct empirical measurement of the choice-based deterrence that we label the \textit{IR-Slack}. Their estimate that a majority of inert consumers are sophisticated and anticipate their future inertia lends crucial support to our decision to model consumers as rational agents aware of their own cognitive frictions.

\subsection{Digital Economy, Subscription Models, and Customer Management}

The literature on digital economy and subscription models provides a crucial context for our theoretical framework. The traditional explanation for free trials centers on facilitating consumer learning and reducing perceived risk for experience goods \citep{israel2005services, iyengar2007model}. In this sense, trials represent a form of promotional spending to drive adoption. \citet{lee2003new} demonstrate how seeding strategies and free trials can launch new products in the presence of network effects, while \citet{wang2018signaling} show how free trials can signal quality in markets with asymmetric information.

Our model provides a complementary rationale based on Rational Inattention that can explain trial lengths that extend beyond any plausible learning period. The learning and attention mechanisms are not mutually exclusive, and in fact, a complete model of subscription trials would likely include both. However, our contribution is to isolate the attention channel and derive its unique strategic implications, particularly relevant in markets where product quality is easily observable but cancellation requires active effort. As far as we know, ours is the first paper to establish a theoretical connection between subscription model and rational inattention, and the empirical literature on this connection is still highly sparse, with only one recent field experiment by \citet{einav2025selling}.

The freemium model literature offers additional insight into our setting. For example, \citet{pauwels2008moving} and \citet{cao2023free} analyze drivers of conversion from free to paid tiers, while several other studies such as \citet{lee2017designing} and \citet{li2019optimal} examine how firms can optimize the freemium experience to maximize long-term value, by balancing the trade-off between growth and monetization. Our focus on trial length as a strategic variable complements these studies by highlighting an underexplored dimension of freemium strategy. Similarly, research on digital platform engagement \citep{yoganarasimhan2023design} and user stickiness \citep{ray2012research, zhou2024users} provides behavioral foundations for understanding why consumers might remain subscribed despite low usage.

Our work also connects to the extensive literature on customer relationship management (CRM) and customer inertia. It is well-documented that acquiring a new customer is often more expensive than retaining an existing one, leading firms to focus on maximizing customer lifetime value (CLV) \citep{Gupta2004, Venkatesan2007, fader2018customer}. Existing literature has extensively studied switching costs and customer lock-in strategies \citep{Burnham2003, Jones2007, dey2013consumer}. While much of this literature focuses on contractual or technological lock-in \citep{Farrell1989}, our model provides a specific, micro-founded mechanism for psychological lock-in: the inertia arising from the cognitive effort required to terminate a relationship. The strategic implications extend to subscription pricing models more broadly. \citet{danaher2002optimal} study how different pricing structures for usage and access influence users' usage and retention, while \citet{tian2020optimizing} show how firms can use menu design to encourage particular subscription choices. Our model contributes to this literature by focusing on temporal aspects of subscription design, specifically how trial length can be strategically chosen to leverage consumer inattention.

Finally, our analysis of recently proposed "click-to-cancel" regulations and other policy interventions speaks to the growing literature on "dark patterns" in user interface design \citep{Luguri2021, Mathur2019}. These are design choices that nudge users toward outcomes that benefit the firm, but not necessarily the user. A long trial period paired with a complex, multi-step cancellation process can be viewed as a contractual dark pattern. Our model provides a formal economic framework for analyzing the incentives behind such practices and for evaluating the market-wide consequences of policies designed to counter them, enabling policymakers to make informed trade-off decisions.

By formalizing the trade-offs involved, we move beyond purely descriptive accounts of these practices to predictive analysis of how firms will respond to regulation. This provides a valuable tool for both managers seeking to understand competitive dynamics in subscription markets and policymakers evaluating the welfare implications of various regulatory interventions in the digital economy.
\section{Model}\label{sec:model}

We develop a model of a monopoly subscription platform to analyze how a firm designs its introductory contract, comprising a trial length and a post-trial price, for a market of consumers who are fully rational but subject to attentional frictions. Our framework isolates the role of trial length as a strategic tool for managing consumer attention, distinct from its traditional role in facilitating learning. The model is structured to be rich enough to capture the essential behavioral trade-offs while remaining tractable enough to yield sharp, testable predictions.

\subsection{Primitives and Timeline}\label{sec:primitives}

\paragraph{Consumers and Valuations.} We consider a market with a unit mass of consumers, each characterized by a per-period valuation $v$ for the platform's service. This valuation is the consumer's private information and is drawn from a cumulative distribution function $F(v)$ with continuous density $f(v)>0$ on the support $[0,1]$. An important modeling choice we make, a departure from the prior literature on trial offers, is that consumers know their product valuation $v$ from the outset. This assumption allows us to cleanly isolate the role of attention from the role of learning. In a learning model \citep{iyengar2007model}, the trial period's primary function is to allow consumers to resolve uncertainty about $v$. By assuming $v$ is known, we create a setting where, from a purely informational perspective, no trial period is necessary. The existence of a non-zero optimal trial length in our model will therefore be attributable solely to attentional frictions, providing a complementary and so far overlooked rationale for this ubiquitous product marketing tool.

\begin{table}[!htbp]
\centering
\caption{Summary of Model Notation}
\label{tab:notation}
\begin{tabularx}{\textwidth}{@{} l c X @{}} 
\toprule
\textbf{Symbol} & \textbf{Type} & \textbf{Definition} \\
\midrule
\multicolumn{3}{l}{\textit{\textbf{Consumer Primitives}}} \\
$v$ & Parameter & Consumer's per-period valuation for the service, $v \sim F$ on $[0,1]$ \\
$F(v), f(v)$ & Parameter & CDF and PDF of consumer valuations \\
$\tau_0$ & Parameter & Baseline attention sensitivity (at $T=0$) \\
$\beta$ & Parameter & Rate of attention decay over time \\
\midrule
\multicolumn{3}{l}{\textit{\textbf{Firm / Contract Primitives}}} \\
$T$ & Choice Variable (Firm) & Length of the free trial period, $T \in \mathbb{N}$ \\
$P$ & Choice Variable (Firm) & Post-trial automatic renewal price, $P \in (0,1]$ \\
\midrule
\multicolumn{3}{l}{\textit{\textbf{Consumer Choice and Attention}}} \\
$q$ & Choice Variable (Consumer) & Probability of successfully remembering to cancel \\
$\tau(T)$ & Outcome Variable & Effective attention sensitivity, $\tau(T) = \tau_0 / (1+\beta T)$ \\
$q^*$ & Outcome Variable & Optimal cancellation probability chosen by the consumer \\
\midrule
\multicolumn{3}{l}{\textit{\textbf{Key Aggregate Outcomes}}} \\
$\Pi(T,P)$ & Outcome Variable & Firm's total profit \\
$\IR(T,P)$ & Outcome Variable & Inattentive Revenue: profit from users who forget to cancel \\
$U(T,P)$ & Outcome Variable & Ex-ante expected consumer utility from the contract \\
$\IRSlack(T,P)$ & Outcome Variable & Marginal harm to consumer utility from increasing $T$ \\
\midrule
\multicolumn{3}{l}{\textit{\textbf{Refinement Assumption Parameters}}} \\
$\varepsilon$ & Parameter (A1) & Constant price elasticity of "happy subscriber" demand \\
$\kappa$ & Parameter (A1) & Scaling constant for iso-elastic demand \\
$h(v)$ & Property (A2) & Hazard rate of the valuation distribution, $f(v)/(1-F(v))$ \\
\midrule
\multicolumn{3}{l}{\textit{\textbf{Extension: Paid Trials (Section 6)}}} \\
$P_0$ & Choice Variable (Firm) & Upfront introductory price for the trial period \\
$\eta(P_0)$ & Function & Fraction of consumers who sign up at intro price $P_0$ \\
$\varepsilon_0(P_0)$ & Outcome Variable & Elasticity of the sign-up rate with respect to $P_0$ \\
$P^{\text{aug}}$ & Outcome Variable & Total expected post-trial profit per subscriber \\
$\theta$ & Parameter (A3) & Constant elasticity of sign-up demand \\
\bottomrule
\end{tabularx}
\end{table}

\paragraph{The Firm and the Contract.} A monopolist firm offers a subscription contract defined by a pair $(T,P)$, where $T \in \mathbb{N}$ is the length of the free trial period (e.g., in days or weeks), and $P \in (0,1]$ is the price charged automatically upon renewal at the end of the trial. The firm's marginal cost of production is normalized to zero, a common approximation for digital goods. We ignore time discounting within the initial subscription cycle for simplicity. The monopoly setting allows us to focus purely on the firm's contract design incentives without the confounding effects of competition, representing markets for highly differentiated products (e.g., specialized SaaS like Salesforce) or platforms with dominant market positions (e.g., Adobe Creative Cloud).

\paragraph{Sequence of Events.} The strategic interaction unfolds over a clear sequence of stages, which mirrors the typical customer journey for a subscription service:
\begin{itemize}[leftmargin=1.8em]
    \item \textbf{Stage 0 (Contract Announcement):} The firm chooses and publicly commits to a contract $(T,P)$. This commitment is essential; the firm cannot change the terms ex-post for consumers who have signed up.
    \item \textbf{Stage 1 (Sign-up Decision):} Each consumer observes the contract terms $(T,P)$. Knowing their own valuation $v$, they form expectations about the future costs and benefits of the contract, including the cognitive burden of managing the subscription. Based on this forward-looking assessment, they decide whether to sign up.
    \item \textbf{Stage 2 (Monitoring Decision):} During the trial period, a consumer with a low valuation ($v<P$) who intends to cancel must decide on a monitoring strategy. This is an abstract choice representing the level of costly cognitive effort they will exert to set up a "reminder" to cancel the subscription before it renews. This could involve literally setting a calendar alert, placing a sticky note on a monitor, or simply trying to keep the task "top of mind."
    \item \textbf{Stage 3 (Renewal):} At the end of the trial, the subscription automatically renews at price $P$ unless the consumer has successfully executed a cancellation action.
\end{itemize}

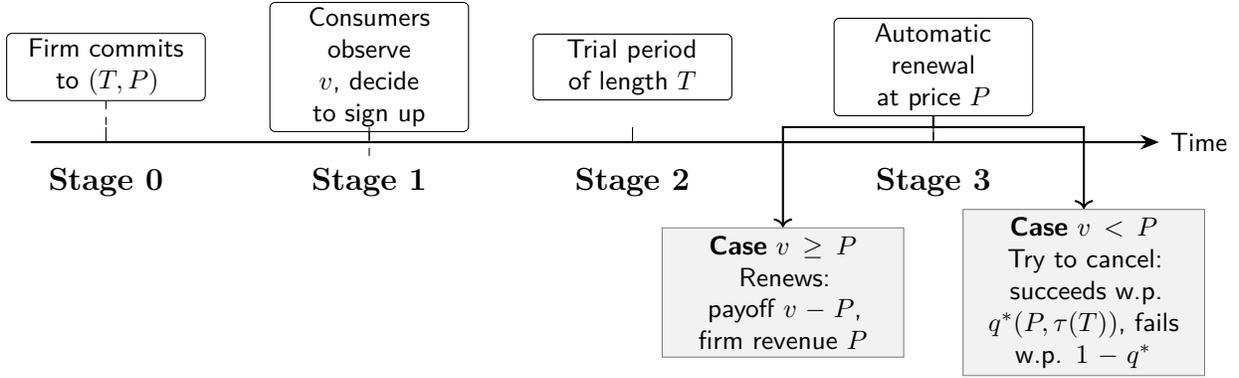
\begin{figure}[h!]
\centering
\begin{tikzpicture}[
    font=\sffamily\footnotesize,
    timeline/.style={-{Stealth[length=2.5mm,width=2mm]}, thick},
    stage/.style={anchor=north,font=\bfseries},
    event/.style={rectangle, draw=black, rounded corners=2pt,
                  text width=2.5 cm, align=center, inner sep=2 pt},
    outcome/.style={rectangle, draw=black!50, fill=gray!10,
                    text width=3 cm, align=center, inner sep=3pt},
  ]
  \draw[timeline] (0,0) -- (15,0) node[anchor=west] {Time};
  \foreach \x/\label in {1/Stage~0, 4.5/Stage~1, 8/Stage~2, 12/Stage~3}
    {
      \draw (\x,0) -- (\x,0.2);
      \node[stage] at (\x,-0.2) {\label};
    }
  \node[event] (design) at (1,1) {Firm commits to $(T,P)$};
  \node[event] (signup) at (4.5,1) {Consumers observe $v$, decide to sign up};
  \node[event] (trial)  at (8,1) {Trial period of length $T$};
  \node[event] (renew)  at (12,1) {Automatic renewal at price $P$};
  \foreach \n in {design,signup,renew}
    \draw[dashed] (\n.south) -- ++(0,-0.4);
  \node[outcome] (keep) at (10,-2)
    {\textbf{Case $v\ge P$}\\ Renews:\\ payoff $v-P$, firm revenue $P$};
  \node[outcome] (cancel) at (14,-2)
    {\textbf{Case $v<P$}\\ Try to cancel: \\ 
     succeeds w.p. $q^{*}(P,\tau(T))$, fails w.p. $1-q^{*}$};
  \draw[->,thick] (renew) -- ++(0,-0.8) -| (keep.north);
  \draw[->,thick] (renew) -- ++(0,-0.8) -| (cancel.north);
\end{tikzpicture}
\caption{Timeline of the subscription contract game}
\label{fig:timeline}
\end{figure}

\subsection{Minimal Assumptions (MA)}\label{sec:MA}
Our core mechanism is built on a minimal set of five assumptions that define the consumer's environment and decision-making process. These are designed to be general yet capture the essential features of the problem.

\begin{enumerate}[label=\textbf{MA-\arabic*}, leftmargin=2.3em]
    \item\label{MA1} \textbf{Valuation Support:} The valuation density $f(v)$ is positive and continuous on $[0,1]$. This is a standard technical assumption that ensures a heterogeneous market with active demand at all relevant prices, making the firm's pricing and contract design problem non-trivial.

    \item\label{MA2} \textbf{Information Cost (Shannon):} The cognitive cost to achieve a cancellation probability of $q \in (0,1)$ is given by $C(q;\tau)=\frac{1}{\tau}\mathcal{I}(q)$, where $\mathcal{I}(q)=q\ln q+(1-q)\ln(1-q)$ is the Shannon Mutual Information function and $\tau>0$ is the consumer's attention sensitivity. This widely used formulation, grounded in information theory and pioneered in economics by \citet{Sims2003}, provides a rigorous micro-foundation for costly information processing. It has been successfully applied in numerous contexts, from macroeconomics to marketing, and its key feature is the convexity of the cost function \citep{MatejkaMcKay2015}. This convexity captures the intuitive idea of increasing marginal costs to achieving perfect certainty: it is relatively easy to improve one's chances of remembering from 50\% to 60\%, but it requires immense effort to improve from 99\% to 100\%.

    \item\label{MA3} \textbf{Memory Decay:} The consumer's attention sensitivity $\tau$ is a decreasing function of the trial length $T$, given by the specific functional form $\displaystyle\tau(T)=\frac{\tau_{0}}{1+\beta T}$, with $\tau_{0}>0$ and $\beta>0$. Here, $\tau_0$ represents the consumer's baseline attention for an immediate task, which can be seen as an individual trait (e.g., being naturally organized) or as being influenced by the technological environment (e.g., access to sophisticated calendar apps). The parameter $\beta$ captures the rate of attention decay over time. This assumption is grounded in the psychology framework of \emph{prospective memory}, which documents that the ability to remember to perform a delayed intention decays over time \citep{mcdaniel2000strategic, Kliegel2002}. The hyperbolic form is chosen for its tractability and its ability to elegantly nest the standard full-attention model (when $\beta=0$)\footnote{Hyperbolic decay is also consistent with the common model of delay discounting in behavioral economics \citep{ainslie1991derivation}, or the neuroscience finding about animal integration of past experiences \cite{danskin2023exponential}.}. When $\beta>0$, the trial length $T$ becomes a strategic instrument for the firm to directly and predictably influence the cognitive friction faced by its customers. In Appendix F, we provide a robustness analysis of alternative functional forms of memory decay.

    \item\label{MA4} \textbf{Cancellation Rule:} If a consumer with $v<P$ successfully remembers to cancel (which occurs with their chosen probability $q$), they do so and avoid the charge. Otherwise, due to inattention, they remain subscribed for the period and pay the price $P$. This represents the "automatic" nature of the renewal.

    \item\label{MA5} \textbf{Outside Option:} The utility from not subscribing, or from successfully canceling, is normalized to zero. This is a standard normalization that simplifies the analysis without loss of generality.
\end{enumerate}
\subsection{Refinement Assumptions for Tractability}\label{sec:extra-assump}
To derive sharp, analytical predictions that are both parsimonious and empirically relevant, we introduce two standard refinement assumptions on the structure of demand. These assumptions are common in the industrial organization and marketing literature and will be invoked specifically in Section 5 to obtain our main comparative static results.

\begin{assumption}[A1: Isoelastic Happy-Subscriber Demand]\label{ass:A1}
For renewal prices $P$ in the empirically relevant range, the survivor function of valuations is isoelastic: $1-F(P)=\kappa P^{-\varepsilon}$ with constants $\kappa>0$ and $\varepsilon\in(0,1)$.
\end{assumption}
\paragraph{Economic Interpretation.} This assumption posits that once a consumer’s valuation clears the hurdle of being a potential subscriber, their willingness-to-pay exhibits a power-law tail. This is a common feature of demand systems for established goods and is consistent with the observation that many mature subscription services price on the inelastic portion of their demand curve. A constant elasticity of demand arises, for example, from a population of consumers with Cobb-Douglas utility functions. It implies that each 1\% price increase shrinks the segment of "happy subscribers" (those with $v \ge P$) by a constant $\varepsilon$ percent. As we will show, this structure greatly simplifies the firm's marginal revenue calculation, allowing for a clean analytical characterization of the optimal price and its response to policy shocks.

\paragraph{Real-World Intuition.} Streaming services provide a compelling illustration of isoelastic tails. For example, after Netflix’s January 2022 price hike from \$13.99 to \$15.49 (an 11\% increase), its subscriber churn rate increased from 2.3\% to 3.3\%, implying a demand elasticity of approximately 0.4.\footnote{Data from Antenna Research, as reported in industry analyses of \href{https://www.antenna.live/insights/netflix-q122-retrospective}{Netflix's Q1-2022 performance}.} Similarly, Spotify’s Premium tiers have exhibited monthly churn in the 1.5\% to 2.0\% range across multiple price points and markets, consistent with an inelastic demand response.\footnote{See, for example, Inderes Equity Research, “Spotify: churn and pricing” (2023).} With $\varepsilon<1$ in these prominent cases, our model's prediction that policies like "click-to-cancel" will raise renewal prices modestly while sharply shortening trial lengths becomes directly relevant.

\begin{assumption}[A2: Low-Value Mass and Increasing Failure Rate]\label{ass:C1}
The valuation distribution $F$ has a positive mass of low-value users ($F(\underline v)>0$ for some $\underline v>0$) and a weakly increasing hazard rate, $h(v)=f(v)/(1-F(v))$, on the interval $[0,1]$.
\end{assumption}
\paragraph{Economic Interpretation.} This assumption serves two crucial roles. First, the existence of a low-value segment ensures that there is always a pool of consumers who need to cancel, making the cancellation problem economically relevant. It also ensures that the consumer participation constraint will eventually bind as the trial length grows and the expected costs of managing the subscription become prohibitive for these low-value types. Second, the Increasing Failure Rate (IFR) property is a standard regularity condition in contract theory and industrial organization that guarantees the firm's profit function is well-behaved (specifically, concave in price), guaranteeing a unique optimal solution \citep{lariviere2001selling}. It ensures that the marginal revenue from "happy" subscribers diminishes sufficiently quickly, which is necessary for a well-defined optimum.  This assumption is weaker than assuming a specific parametric family; it requires only that the ``conditional dropout'' rate $h(v)$ does not decline.  

\paragraph{Real-World Intuition.} The existence of a low-value segment is empirically well-documented. In every real subscription market, there are some “dabblers” who try the service and quickly realize it is not worth paying for (e.g., people who sign up for a fitness app once and never open it again). Survey panels for digital news, fitness apps, and other services consistently record 10–20\% of trial users reporting a willingness-to-pay near zero after the trial. These are the users for whom the cancellation decision is most critical. Furthermore, most common distributions used to fit empirical willingness-to-pay data, such as the Weibull, log-normal, and truncated normal distributions, exhibit the IFR property. Together, these conditions ensure our model captures a key real-world friction: firms face a genuine risk of trial rejection and loss of goodwill if the cancellation process is perceived as too burdensome, which is precisely the mechanism that disciplines the firm's use of long trials in our model.

\subsection{Information–Cost Paradigm \citep{MatejkaMcKay2015}}\label{sec:info}
To model the consumer's cancellation decision, we employ the rational inattention (RI) framework based on Shannon entropy, as formalized for discrete choice by \citet{MatejkaMcKay2015}. This approach posits that processing information to move from a state of uncertainty to a state of certainty is cognitively costly.

\paragraph{Shannon entropy and mutual information.}
For a binary random variable that takes value $C\in\{\text{Cancel},\text{No‐Cancel}\}$ with probabilities $(q,1-q)$,
the Shannon entropy (in nats) is
\[
H_{\mathrm{bin}}(q)\;=\;-\Bigl[q\ln q + (1-q)\ln(1-q)\Bigr]\;\;\ge 0,
\qquad q\in(0,1).
\]
Assuming the customer starts from an uninformative prior over the cancellation $p=(\tfrac12,\tfrac12)$\footnote{In Appendix F, we show that our results hold in the case of informative priors. The uninformative prior assumption is for brevity.}, following the Rational Inattention framework \citep{Sims2003}, we define the "amount" of information that has been used by an customer in making the optimal decision as the \textit{mutual information} processed by them, or equivalently the Kullback–Leibler divergence of $(q,1-q)$ from $p$, is
\begin{align}
\mathcal{I}(q\parallel p)
  &= q\ln\!\Bigl(\tfrac{q}{1/2}\Bigr)
     +(1-q)\ln\!\Bigl(\tfrac{1-q}{1/2}\Bigr)                      \nonumber\\
  &= q\ln q+(1-q)\ln(1-q) +\ln 2                                  \label{eq:MI}\\
  &= -H_{\mathrm{bin}}(q) \;+\; \ln 2.                            \nonumber
\end{align}
It satisfies $\mathcal{I}(q\parallel p)\ge 0$ with equality at
$q=\tfrac12$ (no attention).

\begin{definition}[Shannon mutual–information cost]\label{def:cost_MI}
Given an \emph{attention sensitivity} $\tau>0$, the cognitive cost of
implementing reminder probability $q$ is
\[
C(q;\tau)
  \;=\;\frac{1}{\tau}\,
        \mathcal{I}(q\parallel\tfrac12).
\tag{C‐MI}\label{eq:cost_MI}
\]
Higher $\tau$ makes information processing cheaper.\footnote{Here, we use $\tau$ to denote sensitivity, since that is a main focus of our model. Readers familiar with RI framework will realize this is the reciprocal of $\lambda$, the unit cost of attention, in \citet{MatejkaMcKay2015}.}
\end{definition}

Because $\ln 2/\tau$ is a \emph{constant} w.r.t.\ $q$, we may drop it
without affecting any optimization.  Accordingly, throughout the proofs
we use the algebraically lighter equivalent
\[
C(q;\tau)=\frac{1}{\tau}\Bigl[q\ln q+(1-q)\ln(1-q)\Bigr].
\tag{C}\label{eq:cost_simplified}
\]

The Shannon Mutual information function has several properties that make it a compelling measure of cognitive friction. The parameter $\tau$ is central, acting as an inverse measure of the "price of attention." A consumer with a high $\tau$ may be naturally well-organized, have access to better reminder technology (e.g., sophisticated calendar apps), or face a simpler, less onerous cancellation task; for them, achieving certainty is relatively cheap. The cost function $C(q;\tau)$ itself is convex in $q$, capturing the crucial economic intuition of increasing marginal effort. For instance, moving from complete uncertainty ($q=0.5$) to moderate certainty ($q=0.6$) requires less cognitive effort than moving from high certainty ($q=0.99$) to near-perfect certainty ($q=0.999$). This reflects the real-world difficulty and escalating mental burden of eliminating all possibility of error. Furthermore, the cost is minimized and equal to zero only when no new information is processed and the consumer remains at their uninformative prior ($q=0.5$). This specific functional form is not only tractable, i.e., leading directly to the widely-used logit choice model, but is also deeply rooted in the information-theoretic foundations of the RI literature \citep{Sims2003}, providing a disciplined micro-foundation for modeling bounded rationality.

\subsection{Consumer Monitoring Problem}\label{sec:consumer}

To model the cancellation decision of a consumer with $v < P$, we employ the rational inattention framework. The consumer faces a binary choice: `Cancel` or `Do Not Cancel`. The `Cancel` action is optimal and yields a payoff of 0. The `Do Not Cancel` action is a mistake that results in a payment of $P$. The consumer can reduce the probability of making this mistake by devoting cognitive resources (e.g., setting reminders, mentally rehearsing the task) to monitoring the deadline. Let $q\in[0,1]$ be the probability that the consumer successfully remembers to cancel. Achieving a high value of $q$ is cognitively costly. The consumer decision problem is
\[
\min_{q\in[0,1]}
   \;\;\Bigl[(1-q)P\Bigr]
   \;+\;
   \frac{1}{\tau(T)}\,\mathcal{I}\!\bigl(q\parallel\tfrac12\bigr),
\tag{1$'$}\label{eq:consumer_obj_MI}
\]

or, dropping the constant $\ln 2/\tau(T)$, this reduces to,
\[
\min_{q\in[0,1]}
   \;\;(1-q)P+\frac{1}{\tau(T)}\Bigl[q\ln q+(1-q)\ln(1-q)\Bigr].
\tag{1}\label{eq:consumer_obj_clean}
\]

\begin{lemma}[Optimal monitoring (RI logit)]\label{lem:qstar}
The objective in \eqref{eq:consumer_obj_clean} is strictly convex on
$(0,1)$ and attains its unique minimum at
\[
q^{*}(P,\tau)
   \;=\;\frac{1}{1+\exp(-\tau P)}.
\]
Moreover, the cancellation probability $q^*$ is increasing in price $P$ and attention $\tau$, but decreasing in trial length $T$:
\[
\frac{\partial q^{*}}{\partial P}= \tau q^{*}(1-q^{*})>0,\quad
\frac{\partial q^{*}}{\partial \tau}= P\,q^{*}(1-q^{*})>0,\quad
\frac{\partial q^{*}}{\partial T}= -\beta\tau_{0}P
       \frac{q^{*}(1-q^{*})}{(1+\beta T)^{2}}<0.
\]
\end{lemma}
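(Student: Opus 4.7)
The plan is to reduce the lemma to a one-variable convex minimization, solve the first-order condition in closed form, and then read off the comparative statics by direct differentiation of the resulting logistic expression.

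First, I would argue existence, uniqueness, and interiority of the minimizer. Write the objective in \eqref{eq:consumer_obj} as $\Phi(q) := (1-q)P + H(q)/\lambda(T)$. Since $H$ is strictly convex on $(0,1)$ with $H''(q) = 1/[q(1-q)] > 0$, and the term $(1-q)P$ is affine, $\Phi$ is strictly convex on $(0,1)$, so any critical point is the unique global minimizer. To rule out corner solutions, observe that $H'(q) = \ln(q/(1-q)) \to -\infty$ as $q \to 0^+$ and $H'(q) \to +\infty$ as $q \to 1^-$, so $\Phi'(q) = -P + H'(q)/\lambda(T)$ is strictly negative near $0$ and strictly positive near $1$. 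Hence the minimizer lies in the open interval $(0,1)$ and is characterized by the first-order condition $\Phi'(q) = 0$.

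Next I would solve the FOC. Setting $H'(q^*) = \lambda P$ gives $\ln(q^*/(1-q^*)) = \lambda P$, so $q^*/(1-q^*) = e^{\lambda P}$, which rearranges to $q^*(P,\lambda) = 1/(1 + e^{-\lambda P})$, establishing the claimed logistic form.

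For the comparative statics, I would exploit the identity $\sigma'(x) = \sigma(x)(1-\sigma(x))$ for the logistic $\sigma$. Writing $q^* = \sigma(\lambda P)$ and applying the chain rule yields $\partial q^*/\partial P = \lambda\, q^*(1-q^*)$ and $\partial q^*/\partial \lambda = P\, q^*(1-q^*)$, both strictly positive since $P, \lambda > 0$ and $q^* \in (0,1)$. For the trial-length effect, I would compose with $\lambda(T) = \lambda_0/(1+\beta T)$, whose derivative is $\lambda'(T) = -\beta\lambda_0/(1+\beta T)^2$, so $\partial q^*/\partial T = P\,q^*(1-q^*)\,\lambda'(T) = -\beta \lambda_0 P\, q^*(1-q^*)/(1+\beta T)^2 < 0$.

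There is essentially no hard step: the only subtlety is the boundary analysis that justifies passing from an inequality-constrained problem on $[0,1]$ to the unconstrained FOC on $(0,1)$, which is why I would flag the divergence of $H'$ at the endpoints explicitly before invoking strict convexity. Everything else is an elementary application of the logistic derivative identity together with the hyperbolic form of $\lambda(T)$ stipulated by assumption \ref{MA3}.
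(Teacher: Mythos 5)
Your proposal is correct and follows essentially the same route as the paper's own proof in Appendix A.1: strict convexity of the objective via $H''(q)=1/[q(1-q)]>0$, boundary divergence of $H'$ to rule out corners, the same first-order condition yielding the logistic form, and identical chain-rule computations for the comparative statics. No gaps.
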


\begin{proof}[Sketch]
Take the derivative of \eqref{eq:consumer_obj_clean}:
\(
\partial_q L= -P+\tau^{-1}\ln\!\bigl(q/(1-q)\bigr).
\)
Setting it to zero yields
$\ln\!\bigl(q/(1-q)\bigr)=\tau P$, i.e. the stated logit.
Strict convexity follows because
$\partial_{qq}L=\tau^{-1}/[q(1-q)]>0$.
The comparative statics are direct differentiations,
and $\partial_T q^{*}$ uses the chain rule with
$d\tau/dT=-\beta\tau_{0}/(1+\beta T)^{2}$.
\end{proof}

The results of this lemma are intuitive. Consumers will try harder to remember to cancel (i.e., choose a higher $q^*$) when the financial stakes are higher (a larger $P$) and when paying attention is cognitively cheaper (a higher $\tau$). The result that $q^*$ decreases with trial length $T$ is a direct consequence of our memory decay assumption (MA-3) and forms the central behavioral mechanism of our model. The logistic functional form is a hallmark of the Shannon entropy cost structure and provides an analytically tractable foundation for the rest of our analysis.

\subsection{Aggregate Outcomes and Firm Objective}\label{sec:agg}
With the individual consumer's behavior defined, we can now specify the aggregate market outcomes and the firm's optimization problem.

\paragraph{Inattentive-Revenue Component.}
The firm's total revenue comes from two distinct sources. The first is standard revenue from "willing subscribers" with $v \ge P$ who value the service above its price. The second, and the focus of our behavioral model, is revenue from "inattentive subscribers." These are consumers with $v < P$ who optimally would have canceled but who failed to do so due to attentional friction. We define the revenue from this second group as \textit{inattentive revenue}.

\begin{equation}
\boxed{\;
\text{IR}(T,P)=
P\!\int_{0}^{P}\!\bigl[\,1-q^{*}(P,\tau(T))\bigr]f(v)\,dv
\;}
\tag{2}\label{eq:IR}
\end{equation}
Since the cancellation probability $q^*$ is constant for all consumers with $v<P$, this expression simplifies to $\text{IR}(T,P) = P \cdot F(P) \cdot [1-q^*(P, \tau(T))]$. This component of profit is central to our analysis, as it is the channel through which the firm's choice of $T$ affects its revenue.

\paragraph{Total Profit.} Hence, we define the firm's total profit as the sum of standard revenue from willing subscribers, as in standard economic models, and inattentive revenue from (rationally) ``inattentive subscribers''.
\begin{equation}
\Pi(T,P)\;=\;
\underbrace{P\bigl[1-F(P)\bigr]}_{\text{Standard Revenue}}\;+\;\underbrace{\IR(T,P)}_{\text{Inattentive Revenue}}.
\tag{3}\label{eq:profit}
\end{equation}

\paragraph{Expected Consumer Utility.} A forward-looking consumer's ex-ante expected utility from accepting the contract is the sum of the expected surplus for high-valuation types and the expected losses for low-valuation types. The losses for the latter group include both the expected monetary loss from failing to cancel and the cognitive cost incurred in trying to remember.
\begin{equation}
\boxed{
\;
U(T,P)
  \;=\;
  \underbrace{\int_{P}^{1}\!\bigl(v-P\bigr)\,f(v)\,dv}_{\text{Surplus of “happy’’ subscribers}}
  \;-\;
  \underbrace{P\,F(P)\,\bigl[1-q^{*}(P,\tau(T))\bigr]}_{\text{Expected monetary loss from forgetting}}
  \;-\;
  \underbrace{\frac{1}{\tau(T)}\,
              I\!\bigl(q^{*}(P,\tau(T))\bigr)\,F(P)}_{\text{Cognitive cost of monitoring}}
\;}
\label{eq:utility}
\end{equation}

The first term, $\int_{P}^{1}(v-P)f(v)dv$, represents the standard consumer surplus enjoyed by "happy subscribers." These are consumers whose valuation $v$ is greater than or equal to the price $P$. They intend to use the service post-trial and derive a net benefit from doing so. The next two terms capture the expected outcome for the "unhappy subscribers" (or more accurately, the "unwilling subscribers"), those with $v < P$ who would prefer to cancel. This term is itself composed of two distinct sources of disutility. The first component, $-P\,F(P)\,\bigl[1-q^{*}(P,\tau(T))\bigr]$, is the expected monetary loss. It is the price $P$ multiplied by the probability of failing to cancel, aggregated over all consumers in this segment. The second component, $-\frac{1}{\tau(T)}\,I\!\bigl(q^{*}(P,\tau(T))\bigr)\,F(P)$, represents the aggregate cognitive cost of attention. This is a crucial feature of our model: even consumers who successfully cancel (and thus pay nothing) still incur a utility loss from the mental effort required to ensure they remembered to do so.

This comprehensive utility function means that consumers in our model are sophisticated and forward-looking. When deciding whether to sign up (Stage 1), they do not just consider the price $P$; they also anticipate the future hassles. The sophisticated consumers understand that a long trial (high $T$) or a high price (high $P$) will increase both their risk of a costly mistake and the cognitive burden of avoiding it. For consumers to be willing to sign up for the trial, this aggregate expected utility must be non-negative. This gives rise to an individual rationality (IR) or participation constraint, $U(T,P)\ge0$, which is the central constraint on the firm's behavior.

\begin{definition}[IR-Slack]\label{def:IR-Slack}
We define the \textbf{IR-Slack} as the marginal harm to consumer utility from extending the trial period. To derive this, we apply the envelope theorem to the consumer's cost-minimization problem in Equation \eqref{eq:consumer_obj_MI}. The derivative of the minimized loss $L^* = (1-q^*)P + C(q^*;\tau)$ with respect to $\tau$ is:
\[
\frac{\partial L^*}{\partial \tau} = \frac{\partial}{\partial \tau}\left(\frac{1}{\tau}\mathcal{I}(q^* \parallel 1/2)\right) = -\frac{1}{\tau^2}\mathcal{I}(q^* \parallel 1/2).
\]
The derivative of aggregate utility with respect to $T$ is then:
\[
\frac{\partial U}{\partial T} = \frac{\partial U}{\partial \tau}\frac{d\tau}{dT} = \left(-F(P)\frac{\partial L^*}{\partial \tau}\right)\frac{d\tau}{dT} = \left(F(P)\frac{1}{\tau^2}\mathcal{I}(q^* \parallel 1/2)\right) \left(-\frac{\beta\tau^2}{\tau_0}\right) = -\frac{\beta}{\tau_0}F(P)\mathcal{I}(q^* \parallel 1/2).
\]
The IR-Slack is the negative of this quantity:
\[
\IRSlack(T,P) := -\frac{\partial U(T,P)}{\partial T} = \frac{\beta}{\tau_{0}}F(P)\,\mathcal{I}(q^* \parallel 1/2) > 0.
\tag{5}\label{eq:slack_revised}
\]
\end{definition}

The above is a critical counterbalancing force that constrains the firm's incentive to extend trials too long. The consumers in our model are sophisticated; they are not passive victims of their own inattention but are forward-looking agents who anticipate the cognitive costs and potential for error associated with managing a long-term commitment. When a firm increases the trial length $T$, consumers recognize that this makes the contract more difficult and costly to manage. This anticipated "hassle cost" reduces their ex-ante valuation of the firm's offer.

The IR-Slack measures this marginal erosion of consumer utility. A firm that ignores this effect will find its offer becomes unattractive, leading to lower sign-up rates and a smaller customer base. This finding aligns perfectly with the large-scale field experiment by \citet{Miller2023}, who find that auto-renewal offers (which have a higher perceived hassle cost) see significantly lower take-up rates than auto-cancel offers. The IR-Slack, therefore, represents the cost side of the firm's trade-off, a cost paid in the currency of consumer goodwill and participation. The formula is intuitive: the marginal harm to consumers is proportional to the rate of attention decay ($\beta/\tau_0$), the cognitive effort required to overcome it (captured by $I(q^*)$), and the size of the population that is vulnerable to this friction ($F(P)$).

\paragraph{Final Firm Objective.} The platform's objective is to maximize its profit, $\Pi(T,P)$, as defined in Equation \eqref{eq:profit}, subject to the consumer participation constraint, $U(T,P)\ge0$. Because consumer utility $U$ is decreasing in both $P$ and $T$ (through the IR-Slack), this constraint will bind at the optimum for a monopolist seeking to extract maximum value. A firm that makes its offer too onerous (either too high a price or too long a trial) will find its customer base vanishes. This is a standard feature in models with sophisticated consumers who must be induced to participate. Therefore, the firm's problem is:
\[
\max_{T,P}\;\; \Pi(T,P)
\quad
\text{such that}\quad U(T,P)\ge 0.
\tag{6}\label{eq:firmproblem}
\]
Because \(U(T,P)\) is decreasing in both \(T\) and \(P\),
the constraint binds at the optimum.  Letting \(\mu>0\) denote the
Lagrange multiplier, the Lagrangian is
\(\mathcal{L}=\Pi(T,P)-\mu\,U(T,P)\).

This constrained optimization problem defines the optimal contract $(T^*, P^*)$ and forms the basis for all subsequent analysis. The binding constraint implies that the firm will design the contract to extract all consumer surplus, setting the terms such that the ex-ante utility for the marginal consumer (or the average consumer, depending on the setup) is exactly zero.

\section{Functional Form Free Results}\label{sec:general}

In this section, we derive the foundational results of our model. These propositions are "functional-form-free" in the sense that they do not depend on the specific Shannon entropy cost function, but rather on a minimal set of qualitative properties that any reasonable model of costly attention should exhibit. Specifically, we require only that (a) the consumer's ability to pay attention, $\tau(T)$, decreases as the task's deadline, $T$, moves further into the future, and (b) the consumer's chosen probability of success, $q^*$, increases with their ability to pay attention, $\tau$. These general results reveal the main economic tensions that drive the firm's strategic decisions and provide a robust theoretical backbone for the sharper, more specialized predictions that will follow.

\begin{proposition}[Attention vs. Trial Length]\label{prop:F1}
A longer trial period unambiguously decreases the consumer's effort to cancel. For any given renewal price $P>0$, if the rate of attention decay is positive ($\beta > 0$), then an increase in trial length $T$ leads to a strict decrease in the optimal cancellation probability $q^*(P, \tau(T))$.
\[
\frac{\partial q^{*}(P,\tau(T))}{\partial T}\;<\;0 .
\]
\end{proposition}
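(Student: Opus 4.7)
The plan is to prove the proposition by a direct application of the chain rule, relying only on the two qualitative properties emphasized at the start of this section: the monotonic decay of attention in $T$ from Assumption MA-3 and the monotonic response of the optimal monitoring probability to attention sensitivity from Lemma 1. No further structure on the cost function is needed, which is precisely what makes the result \emph{functional-form-free}.

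First, I would write the composite derivative via the chain rule as
\[
\frac{\partial q^{*}(P,\lambda(T))}{\partial T}
= \frac{\partial q^{*}}{\partial \lambda}\bigg|_{\lambda=\lambda(T)} \cdot \lambda'(T).
\]
Second, I would sign each factor in turn. From MA-3, $\lambda(T)=\lambda_{0}/(1+\beta T)$, so $\lambda'(T)=-\beta\lambda_{0}/(1+\beta T)^{2}$, which is strictly negative whenever $\beta>0$. From Lemma 1, $\partial q^{*}/\partial \lambda = P\,q^{*}(1-q^{*})$, which is strictly positive for any $P>0$ because the logistic form of $q^{*}$ lies in the open interval $(0,1)$. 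The product of a strictly positive and a strictly negative quantity is strictly negative, yielding the claim.

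Only elementary differentiation and sign-tracking are involved, so I do not anticipate any real obstacle. The one subtlety worth flagging is that strictness of the inequality relies on $q^{*}\in(0,1)$, i.e., neither corner of the monitoring problem binds. Lemma 1 already provides this, since $q^{*}=1/(1+\exp(-\lambda P))$ stays strictly inside $(0,1)$ for every finite $\lambda>0$ and $P>0$; consequently $q^{*}(1-q^{*})$ never vanishes and the resulting inequality is strict rather than weak. A brief remark at the end of the proof would make clear that the argument extends verbatim to any specification in which $\lambda(\cdot)$ is strictly decreasing and $q^{*}$ is a strictly monotone function of $\lambda$, reinforcing the general applicability claimed by the section.
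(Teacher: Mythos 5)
Your proposal is correct and follows exactly the paper's own argument (Appendix A.2): decompose $\partial q^{*}/\partial T$ via the chain rule, sign $\partial q^{*}/\partial\lambda = P\,q^{*}(1-q^{*})>0$ using Lemma~\ref{lem:qstar}, sign $\lambda'(T)=-\beta\lambda_{0}/(1+\beta T)^{2}<0$ using MA-3, and conclude the product is strictly negative. Your added remark that strictness hinges on $q^{*}\in(0,1)$ is a nice touch but does not change the substance; no gaps.
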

\begin{proof}
The proof is a direct application of the chain rule. The change in the optimal cancellation probability with respect to the trial length is given by $\frac{\partial q^*}{\partial T} = \frac{\partial q^*}{\partial \tau} \frac{d\tau}{dT}$. From the consumer optimization problem in \Cref{lem:qstar}, a higher attention sensitivity $\tau$ makes it cheaper for them to exert mental effort, leading to a higher chosen success probability, so $\frac{\partial q^*}{\partial \tau} > 0$. From our core behavioral assumption on attention decay, \Cref{MA3}, a longer trial reduces attention sensitivity, so $\frac{d\tau}{dT} < 0$ for any $\beta>0$. The product of a positive term and a negative term is necessarily negative, thus complete the proof.
\end{proof}

\begin{figure}
    \centering
    \includegraphics[width=0.5\linewidth]{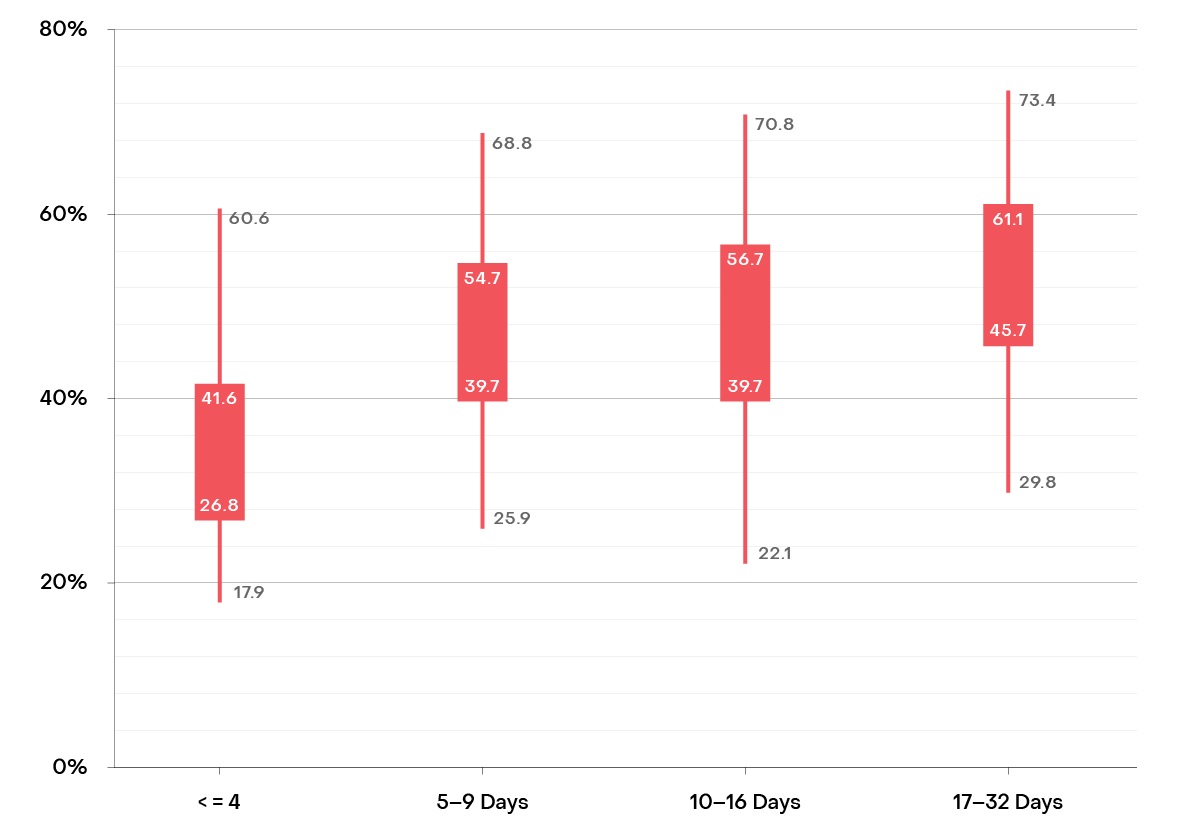}
    \caption{Conversion rate by trial length for mobile apps (Source: RevenueCat).}
    \label{fig:conversiontrial}
\end{figure}

This proposition describes the primary behavioral channel through which the firm's strategic plan design influences consumer outcomes. The trial length $T$ is not just a passive waiting period, it is an active strategic instrument that shapes the consumer's cognitive environment. By extending the trial, the firm makes the future cancellation task less salient and more susceptible to being forgotten. This effectively raises the cognitive price of remembering. The rationally inattentive consumer, who is aware of this increased difficulty, optimally responds by reducing their monitoring effort, which manifests as a lower probability of successful cancellation. This result provides the firm with a direct and predictable approach to influence the rate of what we term "accidental renewals.". This analytical result also fits well with industry reports, such as RevenueCat, a top revenue management platform for subscription apps\footnote{\href{https://www.revenuecat.com/state-of-subscription-apps-2025/}{State of Subscription Apps 2025.}}, as shown in Figure \ref{fig:conversiontrial}.

\begin{proposition}[Inattentive Revenue Monotonicity]\label{prop:F2}
The firm's inattentive revenue, $\text{IR}(T,P)$, as defined in Equation \ref{eq:IR}, is weakly increasing in the trial length $T$. If $\beta > 0$ and there is a positive mass of consumers with valuations below the price ($F(P)>0$), it is strictly increasing.
\end{proposition}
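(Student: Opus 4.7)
The plan is to prove this as a direct corollary of \Cref{prop:F1}, exploiting the fact that the integrand in the definition of $\IR(T,P)$ does not actually depend on the dummy variable $v$. First I would note that the optimal cancellation probability $q^{*}(P,\lambda(T))$ derived in \Cref{lem:qstar} depends only on the price $P$ and the attention sensitivity $\lambda(T)$, not on the individual valuation $v$ of the unwilling subscriber. This allows the integral in \eqref{eq:IRdef} to be pulled out cleanly, yielding the product form
\[
\IR(T,P) \;=\; P\,F(P)\,\bigl[1 - q^{*}(P,\lambda(T))\bigr].
\]

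Second, I would differentiate this expression with respect to $T$, holding $P$ fixed. Since neither $P$ nor $F(P)$ depends on $T$, the only $T$-dependent term is $1 - q^{*}$, giving
\[
\frac{\partial \IR(T,P)}{\partial T} \;=\; -\,P\,F(P)\,\frac{\partial q^{*}(P,\lambda(T))}{\partial T}.
\]
\Cref{prop:F1} then supplies the sign: when $\beta>0$, the derivative $\partial q^{*}/\partial T$ is strictly negative, so the right-hand side is strictly positive whenever $P>0$ and $F(P)>0$. The two hypotheses of the strict version of the claim therefore directly produce the strict inequality.

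Third, I would dispatch the weakly-increasing claim by observing the two boundary cases not covered above. If $\beta=0$, then $\lambda(T)\equiv\lambda_{0}$ is independent of $T$, so $q^{*}$ and hence $\IR$ are constant in $T$; if $F(P)=0$, then $\IR(T,P)\equiv 0$ identically. In either case the derivative is zero, which is consistent with weak monotonicity, completing the statement.

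There is essentially no technical obstacle here: the entire content of the proposition is already encoded in \Cref{prop:F1} together with the separability of $q^{*}$ from $v$. The only thing worth emphasizing in the write-up is \emph{why} the integrand is independent of $v$, since a reader might expect a more delicate argument involving heterogeneous monitoring choices across consumer types. In our Shannon specification the first-order condition pins down a single $q^{*}$ for every consumer who strictly prefers to cancel, so aggregation is trivial and the comparative static inherits its sign mechanically from the behavioral response at the individual level.
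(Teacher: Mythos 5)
Your proposal is correct and follows essentially the same route as the paper's own proof in Appendix A.3: factor $\IR(T,P)=P\,F(P)\,[1-q^{*}(P,\lambda(T))]$ using the $v$-independence of $q^{*}$, differentiate in $T$, and import the sign from \Cref{prop:F1}. Your explicit treatment of the $\beta=0$ boundary case is a small, harmless addition beyond what the paper writes out.
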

\begin{proof}
By definition, inattentive revenue is $\text{IR}(T,P) = P \cdot F(P) \cdot [1-q^*(P, \tau(T))]$. Holding the price $P$ constant, the only term in this expression that depends on the trial length $T$ is the cancellation probability $q^*$. From \Cref{prop:F1}, we know that $\partial q^*/\partial T < 0$. This implies that the probability of \textit{failing} to cancel, which is $1-q^*$, must be strictly increasing in $T$. Since $P>0$ and we assume $F(P)>0$, the entire expression for $\text{IR}$ must be strictly increasing in $T$.
\end{proof}

Intuitively, this rationalizes the firm's primary incentive to offer long trial periods, an incentive cannot be fully explained by consumer learning. From a purely revenue-focused perspective, a longer trial is always better. It systematically increases the number of consumers who, despite preferring to cancel, fail to do so and are subsequently charged the renewal price. This finding establishes an attention-based explanation for a key stylized fact: firms frequently offer trial periods (e.g., 30-60 days) that far exceed the time necessary to evaluate the product's features. If inattentive revenue were the only consideration and consumers signed up regardless of the burden of the trial, the firm’s profit would rise monotonically as the trial got longer, and the firm would be incentivized to extend trial length until the chance of forgetting is 100\%. In reality, a longer trial also makes the offer less attractive ex ante, captured by the IR-Slack term, so the firm eventually loses participants. Balancing these two forces yields a finite, moderate optimal trial length.

\begin{proposition}[Existence of an Interior Optimal Trial Length]\label{prop:F3}
If the attention decay rate is positive ($\beta>0$) and there is a positive mass of consumers who wish to cancel (per \Cref{ass:C1}), there exists a finite optimal trial length \(T^{*}\in(0,\infty)\)that solves the firm's optimization problem. This $T^*$ satisfies the first-order condition:
\[
\frac{\partial \mathcal{L}}{\partial T} = \frac{\partial \Pi}{\partial T} - \mu\frac{\partial U}{\partial T} = 0
\]
which simplifies to:
\begin{equation}
\frac{\partial \IR}{\partial T}
   \;=\;
   \mu\;\IRSlack(T,P),
\qquad
\mu>0.
\tag{T‑FOC}\label{eq:TFOC}
\end{equation}
\end{proposition}
Equation~\eqref{eq:TFOC} states that the marginal inattentive revenue generated by a longer trial must be the same as the Lagrange–weighted marginal tightening of the participation constraint. This is a central result of our model. It establishes that the tension between exploiting consumer inattention and maintaining the attractiveness of the contract leads to a well-defined, finite optimal trial length. The firm's decision rule is to increase the trial length as long as the marginal benefit from higher inattentive revenue ($\partial \text{IR}/\partial T$) exceeds the marginal cost from the erosion of consumer utility (IR-Slack). At $T=0$, the marginal benefit is typically high. As $T$ becomes very large, attention sensitivity $\tau(T)$ approaches zero, consumers anticipate a near-certain failure to cancel, and the IR-Slack becomes prohibitively large. The optimal $T^*$ is found where these two marginal effects are perfectly balanced. This provides a clear, micro-founded rationale for the intermediate trial lengths commonly observed in the market, a phenomenon that is difficult to explain solely through learning models. This prediction also aligns with recent field evidence from \citet{yoganarasimhan2023design}, who find that for a SaaS product, shortening a trial from 30 days to 7 days actually \emph{increased} the subscription rate, suggesting the 30-day trial was already beyond the optimal length, consistent with our model's prediction of an interior optimum.

\begin{proposition}[Uniform Attention Boost]\label{prop:F4}
Consider an exogenous shock that makes attention uniformly cheaper, modeled as a shift in the attention sensitivity function from $\tau(T)$ to $\tau'(T)=\gamma\tau(T)$ with $\gamma>1$. For any given contract $(T,P)$:
\begin{enumerate}
    \item Aggregate consumer utility $U(T,P)$ weakly increases, i.e., \(U(T,P;\gamma)\ge U(T,P;1)\).
    \item The firm's inattentive revenue $\text{IR}(T,P)$ strictly decreases (unless $q^{*}=1$ initially). That is, \(\IR(T,P;\gamma)\le \IR(T,P;1)\), with strict inequalities except when \(q^{*}=1\).
\end{enumerate}
\end{proposition}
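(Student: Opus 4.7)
The plan is to split the proposition into its two claims and handle them separately, since Part~2 follows essentially immediately from Lemma~1 while Part~1 requires a short revealed-preference/envelope argument. For Part~2, I would write $\IR(T,P) = P \cdot F(P) \cdot (1-q^*(P,\lambda(T)))$. The prefactor $P\,F(P)$ is unaffected by the attention shock, so the entire effect flows through $1-q^*$. Lemma~1 gives the logistic form $q^* = 1/(1+e^{-\lambda P})$, which is strictly increasing in $\lambda$, so scaling $\lambda$ by $\gamma > 1$ strictly increases $q^*$ and strictly decreases $1-q^*$ whenever $q^* < 1$; the caveat $q^*=1$ covers only the degenerate boundary approached as $\lambda P \to \infty$, matching the statement.

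For Part~1, I would package the low-type consumer's problem as a value function and invoke revealed preference. Define
\[
V(\lambda) := \min_{q \in [0,1]} \bigl\{ P(1-q) + (\text{effective cognitive cost of reaching } q \text{ at sensitivity } \lambda) \bigr\}.
\]
Cheaper attention ($\lambda \to \gamma\lambda$ with $\gamma > 1$) weakly lowers this cost at every fixed $q$, so the consumer at $\gamma\lambda$ can always mimic $q^*(\lambda)$ to obtain a weakly lower objective, and reoptimization only tightens the bound; thus $V(\gamma\lambda) \le V(\lambda)$. Rearranging Equation~\eqref{eq:utility}, the happy-subscriber surplus $\int_P^1 (v-P)f(v)\,dv$ is independent of $\lambda$ and the remaining two terms aggregate to $-F(P)\cdot V(\lambda(T))$. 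Hence $U(T,P)$ is weakly increasing in $\lambda$, which is Part~1.

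The main obstacle is bookkeeping: correctly identifying how $\lambda$ enters the low-type's effective cost under the paper's convention $H(q) = q\ln q + (1-q)\ln(1-q)$, so that the key monotonicity step, namely that the bracketed cost at fixed $q$ is weakly decreasing in $\lambda$, goes through without a sign slip. Once that is verified, both claims reduce to one-line monotonicity arguments. A direct envelope computation $dV/d\lambda = \partial_\lambda(\text{cost})|_{q=q^*}\le 0$ would also work and, combined with $F(P)>0$ and $q^* \in (0,1)$, would additionally upgrade Part~1 to strict inequality, though the statement only requires the weak version.
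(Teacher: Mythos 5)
Your Part 2 is correct and is exactly the paper's argument: $\IR=P\,F(P)\,[1-q^{*}]$, the prefactor is unaffected, and $q^{*}$ is strictly increasing in $\lambda$, so nothing more is needed.

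Part 1 is where the gap is, and it sits precisely at the step you deferred as ``bookkeeping.'' Under the paper's convention the cognitive cost is $C(q;\lambda)=\tfrac{1}{\lambda}H(q)$ with $H(q)=q\ln q+(1-q)\ln(1-q)\le 0$ on $(0,1)$. At any fixed $q\in(0,1)$ this cost is therefore weakly \emph{increasing} in $\lambda$ (a negative number divided by a larger $\lambda$ is closer to zero), not decreasing. Your mimicking step --- ``the consumer at $\gamma\lambda$ can always mimic $q^{*}(\lambda)$ to obtain a weakly lower objective'' --- thus runs in the wrong direction: mimicking yields a weakly \emph{higher} objective, so revealed preference delivers $V(\gamma\lambda)\ge V(\lambda)$ and hence $U$ weakly \emph{decreasing} in $\gamma$, the opposite of the claim. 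The envelope computation you offer as a backup confirms this rather than rescuing it: $dV/d\lambda=\partial_{\lambda}\bigl[\tfrac{1}{\lambda}H(q)\bigr]_{q=q^{*}}=-\tfrac{1}{\lambda^{2}}H(q^{*})\ge 0$ precisely because $H(q^{*})\le 0$; indeed one can solve the low type's problem in closed form to get $V(\lambda)=-\tfrac{1}{\lambda}\ln\bigl(1+e^{-\lambda P}\bigr)$, which is increasing in $\lambda$. So your claimed ``upgrade to strict inequality'' also points the wrong way. To be fair, the paper's own Appendix A.5 stumbles on the same sign: it asserts $\partial C/\partial\lambda<0$ ``because $H(q^{*})<0$,'' when that very fact makes the leading term $-\tfrac{1}{\lambda^{2}}H(q^{*})$ positive. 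The statement only survives if the information cost is normalized to be nonnegative, e.g.\ $\tfrac{1}{\lambda}\bigl[H(q)+\ln 2\bigr]$, in which case both your revealed-preference argument and the envelope computation go through verbatim. As written, however, your proposal does not close Part 1: the single monotonicity fact on which the whole argument rests is false under the paper's stated definitions, and you flagged it but did not verify it.
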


This proposition can be interpreted as analyzing the static effects of a pro-consumer policy like a "click-to-cancel" law or a technological innovation like a popular subscription management app. Such a change represents a clear welfare gain for consumers at a fixed contract. It lowers their cognitive costs of managing subscriptions and reduces the likelihood of costly mistakes, thereby increasing their expected utility from any given offer. For the firm, however, the immediate effect is negative on the inattention margin. The policy or technology directly undermines the firm's ability to profit from consumer forgetfulness. This sets up a natural conflict of interest: firms may have an incentive to lobby against such regulations or design their platforms to be incompatible with third-party management tools, while consumer advocacy groups would champion them. This proposition provides the static foundation for the full equilibrium analysis of policy effects in \Cref{sec:shannon}.

\begin{proposition}[Convexity of Aggregate Inattentive Loss]
\label{prop:F5}
Fix a contract $(T,P)$ and let the population’s baseline attention
sensitivity $\tau$ be heterogeneous with distribution
$G$ on $(0,\infty)$.  Define the aggregate monetary loss from
inattentive renewals as
\[
\mathcal L(G)
  = P\int_{0}^{P}\int_{0}^{\infty}
        \bigl[1-q^{*}(P,\tau)\bigr]\,
        f(v)\,dG(\tau)\,dv
  = P\,F(P)\!
        \int_{0}^{\infty}
          \bigl[1-q^{*}(P,\tau)\bigr]\,dG(\tau),
\]
where
\(q^{*}(P,\tau)=1/(1+\exp(-\tau P))\).

Suppose $G_{1}$ and $G_{2}$ have the \emph{same mean} of
$1/\tau$ but $G_{2}$ is a \textit{mean-preserving spread} of
$G_{1}$ (i.e.\ $G_{2}$ places more probability mass in the extreme
low- and high-attention tails).  
Then
\[
G_{2}\ {\succ}_{\text{SOSD}}\ G_{1}
\quad\Longrightarrow\quad
\mathcal L(G_{2})\;>\;\mathcal L(G_{1}).
\]

Hence $\mathcal L(G)$ is \textbf{strictly convex under
second-order stochastic dominance}: increasing heterogeneity in
attention costs, while holding the average cost fixed, always raises the
economy-wide inattentive loss.
\end{proposition}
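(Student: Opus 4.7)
The plan is to turn the proposition into a Jensen-type inequality for a single scalar integrand. First I would observe that $q^{*}(P,\lambda)$ is independent of $v$, so the inner integral collapses and $\mathcal{L}(G) = P\,F(P)\int_{0}^{\infty}\phi(\lambda)\,dG(\lambda)$, where $\phi(\lambda):=1-q^{*}(P,\lambda)=1/(1+e^{\lambda P})$. Because $\mathcal{L}$ depends on $G$ only through the expectation of the single function $\phi$, the standard characterization of mean-preserving spreads reduces the claim to strict convexity of $\phi$ (in whichever variable the spread is taken): if $G_{2}$ is a non-trivial mean-preserving spread of $G_{1}$ and $\phi$ is strictly convex, then $\mathbb{E}_{G_{2}}[\phi]>\mathbb{E}_{G_{1}}[\phi]$.

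Next I would establish the convexity of $\phi$ by direct differentiation, reusing Lemma~1. Since $\partial q^{*}/\partial\lambda = P\,q^{*}(1-q^{*})$, we get $\phi'(\lambda) = -P\,q^{*}(1-q^{*})$. Differentiating once more gives
\[
\phi''(\lambda) \;=\; -P(1-2q^{*})\cdot Pq^{*}(1-q^{*}) \;=\; P^{2}(2q^{*}-1)\,q^{*}(1-q^{*}).
\]
For $\lambda P>0$ the sigmoid satisfies $q^{*}>1/2$, so $\phi''>0$ and $\phi$ is strictly convex on $(0,\infty)$. Intuitively, extreme attention types dominate the loss: very attentive consumers lose almost nothing, very inattentive ones lose close to $P/2$, so concentrating mass on the tails while preserving the mean necessarily raises the economy-wide loss.

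The main obstacle will be matching the direction of the spread to the variable in which convexity has been established. Because the proposition fixes the mean of $1/\lambda$, the MPS should most naturally be taken in the variable $\mu:=1/\lambda$ with distribution $\tilde G$, and the relevant integrand becomes $\tilde\phi(\mu):=1/(1+e^{P/\mu})$ rather than $\phi(\lambda)$. A chain-rule calculation shows $\tilde\phi''(\mu)$ has the same sign as $P(e^{P/\mu}-1) - 2\mu(e^{P/\mu}+1)$, whose positivity must be verified over the support of the distributions in question---this is the delicate step, because convexity of a sigmoid composed with a reciprocal is not automatic. Once this sign is pinned down (either globally, or by a mild support restriction on the attention-cost distribution, or by staying within a natural parameter range in which the inequality holds), Jensen's inequality applied to $\tilde\phi$ under the MPS relation between $\tilde G_{1}$ and $\tilde G_{2}$ immediately yields $\mathcal{L}(G_{2})>\mathcal{L}(G_{1})$, completing the argument.
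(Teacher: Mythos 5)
Your overall route is the same as the paper's: collapse the inner integral so that $\mathcal L(G)=P\,F(P)\int\phi\,dG$, and reduce the claim to convexity of a single scalar integrand plus the standard mean-preserving-spread/Jensen characterization. Your computation that $\phi(\lambda)=1-q^{*}$ satisfies $\phi''(\lambda)=P^{2}(2q^{*}-1)q^{*}(1-q^{*})>0$ is correct, and so is your observation that this is not the convexity the proposition needs: since the mean being held fixed is that of $\mu=1/\lambda$, the Jensen step must be run on $\tilde\phi(\mu)=1/(1+e^{P/\mu})$, whose convexity in $\mu$ you rightly refuse to take for granted. However, you leave exactly this step open, and it cannot be closed in general: your sign condition $P(e^{P/\mu}-1)-2\mu(e^{P/\mu}+1)>0$ is equivalent, after setting $u=P/\mu=\lambda P$, to $u(e^{u}-1)>2(e^{u}+1)$, which holds only for $u$ above a threshold near $2.4$ (i.e.\ $q^{*}\gtrsim0.92$) and \emph{reverses} for smaller $\lambda P$. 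Since $\tilde\phi$ is increasing and bounded above by $1/2$, it is necessarily concave in $\mu$ on the upper tail, so a mean-preserving spread in $1/\lambda$ can lower $\mathcal L$ if mass sits in the low-attention region --- precisely the population the proposition is about. Your proof is therefore incomplete as written; it becomes valid only under a support restriction of the kind you mention (all types with $\lambda P$ above the threshold), or if the proposition is restated with the spread taken in $\lambda$ itself, in which case your global convexity of $\phi(\lambda)$ finishes the argument immediately.

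For comparison, the paper's Appendix~A.6 takes the same change of variables $z=1/\lambda$ and asserts
\[
\psi''(z)=\frac{P^{2}e^{-P/z}}{z^{4}\bigl[1+e^{-P/z}\bigr]^{3}}>0,
\]
but this drops the product-rule term coming from differentiating the $1/z^{2}$ factor in $\psi'(z)$; the correct expression is
\[
\psi''(z)=\frac{P^{2}w(1-w)}{z^{4}(1+w)^{3}}-\frac{2Pw}{z^{3}(1+w)^{2}},
\qquad w=e^{-P/z},
\]
whose sign is exactly your condition and is not globally positive. So the ``delicate step'' you flagged is not merely delicate --- it is the point at which the paper's own proof is in error, and your instinct to demand a verified sign or a support restriction is the correct resolution.
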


This proposition yields powerful implications for welfare analysis and policy design by examining the societal cost of ``attention inequality.'' The core of the result is that the individual probability of making a costly mistake, $1-q^*$, is a strictly convex function of the underlying cost of attention, $1/\tau$. This mathematical property has a clear and compelling economic interpretation: interventions that improve attention have sharply diminishing returns.

Consider two types of consumers. The first is a "very attentive" consumer with a high $\tau$ (and thus a low attention cost). They already have a high probability of remembering to cancel, $q^*$, so their failure probability, $1-q^*$, is close to zero. Making them even more attentive (e.g., by giving them a better calendar app) provides a negligible benefit; it is hard to improve upon near-perfection. The second is a "very inattentive" consumer with a low $\tau$ (and a high attention cost). Their probability of failure, $1-q^*$, is substantial. A small improvement in their attention sensitivity can lead to a large reduction in their probability of making a costly mistake.

This convexity has three main implications. First, it means that from a social welfare perspective, policies or product designs that target the "worst-off" consumers in the attention space are far more cost-effective than uniform interventions. For example, a policy that provides simple subscription management tools to low-income or elderly populations (who may have lower average $\tau$) could deliver outsized welfare gains compared to a broad public awareness campaign. Second, it suggests that markets with high heterogeneity in consumer attention are more vulnerable to exploitation. A market composed of a mix of extremely attentive and extremely inattentive individuals will suffer a greater aggregate monetary loss from this friction than a market where everyone has the same, average level of attentiveness. The gains from exploiting the highly inattentive more than offset the losses from failing to exploit the highly attentive. Finally, this provides a new lens through which to view market segmentation. A firm might find it profitable to identify and specifically target low-attention consumers with contracts that have particularly long trial periods or high renewal prices, as these are the consumers from whom the most inattentive revenue can be extracted. This raises important questions about fairness and the ethics of using data to target consumers based on their cognitive traits.

\section{Shannon-Cost Specialization and Main Predictions}\label{sec:shannon}

To derive sharp and empirically testable predictions, we now leverage the specific functional forms from our model: the Shannon information cost, which yields the logistic choice probability for $q^*$, and the hyperbolic decay function for attention sensitivity, $\tau(T)$. Crucially, we also invoke our refinement assumptions, \Cref{ass:A1,ass:C1}, to ensure the model is tractable and well-behaved. This specialization allows us to move from the general qualitative results of the previous section to specific, quantitative predictions about the firm's optimal contract and its response to policy shocks.

\subsection{Optimal Renewal Price at a Fixed Trial Length}\label{sec:priceFOC}

For any given trial length $T$, the firm chooses the renewal price $P$ to maximize its total profit function $\Pi(T,P)$, subject to the consumer participation constraint. The structural assumptions we have imposed allow for a clean characterization of the optimal price $P^*(T)$, which forms the basis for understanding the firm's broader strategy.

\begin{lemma}[First-Order Condition for Price]\label{lem:FOC-full}
For any fixed trial length $T$, the interior optimal price $P^{*}(T)$ solves:
\begin{align}
0 =\;&
\underbrace{\bigl[1-F(P)-P f(P)\bigr]}_{\text{Standard Marginal Profit}}
+
\underbrace{\bigl[1-q^{*}\bigr]\bigl\{F(P)+P f(P)\bigr\}
-
P F(P)\tau\,q^{*}(1-q^{*})}_{\text{Marginal Inattentive Profit}},
\tag{FOC-Price}\label{FOC-full}
\end{align}
with $\tau=\tau(T)$ and $q^*=q^*(P,\tau)$.
\end{lemma}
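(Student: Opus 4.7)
The plan is straightforward: substitute the closed form of inattentive revenue into the profit function of Equation (3), differentiate with respect to $P$ at fixed $T$, apply the product rule, and invoke \Cref{lem:qstar} for the comparative static of $q^{*}$.

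First, I would rewrite total profit by noting that $q^{*}(P,\lambda(T))$ does not depend on $v$, so the integral defining IR in (2) collapses to $\text{IR}(T,P) = P\,F(P)\,[1 - q^{*}(P,\lambda(T))]$. This gives
\[
\Pi(T,P) = P[1-F(P)] + P\,F(P)\,[1-q^{*}(P,\lambda(T))].
\]
At fixed $T$, $\lambda = \lambda(T)$ is a constant for the purposes of this derivative, so only the explicit $P$-dependence of $q^{*}$ needs tracking.

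Next, I would differentiate the two summands separately. The standard revenue $P[1-F(P)]$ yields the familiar monopoly expression $1-F(P)-Pf(P)$, which supplies the first bracket of the target FOC. For the inattentive-revenue summand, I would apply the product rule, grouping $PF(P)$ and $(1-q^{*})$ as the two factors: differentiating $PF(P)$ produces $F(P)+Pf(P)$, weighted by the failure probability $(1-q^{*})$, while differentiating $(1-q^{*})$ gives $-PF(P)\,\partial q^{*}/\partial P$. Substituting $\partial q^{*}/\partial P = \lambda\,q^{*}(1-q^{*})$ from \Cref{lem:qstar} delivers the final term $-PF(P)\,\lambda\,q^{*}(1-q^{*})$. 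Summing the three contributions and setting $\partial\Pi/\partial P = 0$ reproduces (FOC-Price) verbatim.

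The derivation is a routine application of the chain and product rules, so there is no deep obstacle. The one conceptual point worth flagging is that the lemma characterizes an \emph{unconstrained} stationary point of $\Pi$ in $P$ at fixed $T$, prior to imposing the participation constraint of the full problem (6); the inner price optimization is a natural building block before the outer choice of $T$ is considered jointly with $U=0$. Sufficiency for a genuine interior maximum $P^{*}(T)\in(0,1)$ is not itself claimed by the lemma, but it would follow from the IFR condition in \Cref{ass:C1}, which tames the standard marginal-revenue component, together with the bounded and well-behaved contribution of the inattentive term whenever $q^{*}$ stays away from $\{0,1\}$.
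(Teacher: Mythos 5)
Your derivation matches the paper's proof in Appendix B.1 essentially line for line: collapse the IR integral to $P\,F(P)\,[1-q^{*}]$, differentiate the standard-revenue and inattentive-revenue summands by the product rule at fixed $\lambda=\lambda(T)$, and substitute $\partial q^{*}/\partial P=\lambda q^{*}(1-q^{*})$ from \Cref{lem:qstar}. The proposal is correct and takes the same approach; your closing remark that sufficiency and uniqueness are deferred to the IFR condition is also consistent with how the paper handles this in \Cref{prop:R1}.
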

This first-order condition precisely characterizes the firm's complex pricing trade-off. The first bracket, $[1-F(P)-Pf(P)]$, is the standard marginal profit from happy subscribers, which is negative in the relevant region where the firm operates. It captures the classic tension between the revenue gained from a higher price and the revenue lost from consumers who no longer purchase. The second part of the equation represents the marginal profit from the inattentive segment, which is the novel contribution of our model. It has three distinct components: (1) the gain from adding a marginal consumer to the inattentive pool (the $F(P)$ term), (2) the gain from charging a higher price to the existing inattentive pool (the $P f(P)$ term), and (3) the loss from the fact that a higher price incentivizes more consumers to pay attention and successfully cancel (the final term containing $-\tau q^*(1-q^*)$). The optimal price $P^*$ balances these competing effects. Unlike simpler models, this FOC explicitly shows how the optimal price depends not just on the demand curve, but also on the cognitive environment ($\tau$) faced by consumers.

\begin{proposition}[Uniqueness of $P^{*}$]\label{prop:R1}
Under Assumption \ref{ass:C1} (IFR), equation \eqref{FOC-full} admits a \emph{unique} root $P^{*}(T)\in(0,1)$.
\end{proposition}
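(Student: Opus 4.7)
The plan is to recast \eqref{FOC-full} as the scalar equation $G'(P^{*}) = 1$, where $G(P) := P\,F(P)\,q^{*}(P,\lambda(T))$ is the expected payment flowing through the inattentive-mistake channel. This follows immediately from the factored form $\Pi(T,P) = P[1 - F(P)q^{*}(P,\lambda(T))] = P - G(P)$, so $\partial\Pi/\partial P = 1 - G'(P)$. Uniqueness of an interior root is then equivalent to showing that $G'$ crosses the level~$1$ exactly once on $(0,1)$.

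Existence of at least one such root is straightforward: $\Pi(T,0)=0$ and $\Pi(T,1) = 1 - q^{*}(1,\lambda) \in (0,1)$, together with continuity of $\Pi$ and the fact that $\Pi$ is not increasing at the upper endpoint under the maintained assumptions, ensure an interior maximum, which must satisfy the FOC.

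For uniqueness, I would argue that at every interior critical point $P^{*}$, the second-order condition $\Pi_{PP}(T,P^{*})<0$ holds strictly, so every critical point is a strict local maximum. Two strict local maxima would force an intermediate critical point with $\Pi_{PP}\ge 0$ by Rolle's theorem, contradicting the universal strict negativity and therefore yielding at most one root. Direct differentiation and the identities $q^{*\prime} = \lambda q^{*}(1-q^{*})$ and $q^{*\prime\prime} = \lambda(1-2q^{*})q^{*\prime}$ give
\[
-\Pi_{PP}(T,P) = G''(P) = q^{*}(2f+Pf') + 2\lambda q^{*}(1-q^{*})(F+Pf) + PF\lambda^{2} q^{*}(1-q^{*})(1-2q^{*}),
\]
in which only the last term carries a negative sign, since $q^{*} > 1/2$ everywhere on $P>0$. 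The key manipulation is to substitute the critical-point identity $PF\lambda q^{*}(1-q^{*}) = 1 - q^{*}(F+Pf)$ into this troublesome term, converting it into a bounded, positive contribution; after collecting, $G''(P^{*})$ collapses to $q^{*}(2f+Pf') + \lambda(1-q^{*})\bigl[\,2 - PF\lambda q^{*}\,\bigr]$. The IFR hypothesis from \Cref{ass:C1}, equivalently the strict monotonicity of Myerson's virtual valuation $v - (1-F(v))/f(v)$, then controls the sign of the first summand at any candidate critical point, while $F\le 1$ combined with a second pass through the FOC forces $PF\lambda q^{*} < 2$, making the second summand positive as well.

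The main obstacle is the wrong-signed $q^{*\prime\prime}$ term for $q^{*}>1/2$: without the critical-point substitution, this negative contribution can in principle dominate the positive ones, so a global convexity argument for $G$ is unavailable and direct monotonicity of $G'$ cannot be claimed. The crux is therefore the FOC-substitution step, which exploits the defining relation at a critical point to trade the problematic term for an expression whose sign IFR is specifically designed to handle.
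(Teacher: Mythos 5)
Your reformulation $\Pi(T,P)=P\bigl[1-F(P)\,q^{*}(P,\lambda(T))\bigr]=P-G(P)$ with $G(P):=PF(P)q^{*}$ is correct and genuinely cleaner than anything in the paper, which never writes a dedicated proof of this proposition and instead leans on the computation in its Appendix B.2 that the left-hand side of \eqref{FOC-full} is globally decreasing in $P$. Your algebra also checks out: the expression for $G''$, and its collapse at a critical point to $q^{*}(2f+Pf')+\lambda(1-q^{*})\bigl[2-PF\lambda q^{*}\bigr]$ after substituting the FOC identity, are both correct, and the ``every critical point is a strict local maximum, hence there is at most one'' device is a legitimate way to get uniqueness without the global concavity that (as you rightly observe) is unavailable here.

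The gap is in the final sign determination, and it is genuine on both summands. First, the claim that ``a second pass through the FOC forces $PF\lambda q^{*}<2$'' does not hold: the FOC gives only $PF\lambda q^{*}(1-q^{*})=1-q^{*}(F+Pf)\le 1$, hence $PF\lambda q^{*}\le 1/(1-q^{*})$, which is unbounded as $q^{*}\to1$. Concretely, for $F$ uniform and $\lambda$ large the critical point sits near $P^{*}=1/2$ with $q^{*}\approx1$, so $PF\lambda q^{*}\approx\lambda/4$, which exceeds $2$ once $\lambda>8$; the bracket is then negative. (The overall term still vanishes because $\lambda(1-q^{*})\to0$ exponentially, but your stated argument does not deliver that.) Second, IFR does not control the sign of $q^{*}(2f+Pf')$: weak monotonicity of $h=f/(1-F)$ yields only $f'\ge -f^{2}/(1-F)$, hence $2f+Pf'\ge f\,[\,2-Ph(P)\,]$, and nothing in this nonstandard FOC bounds $Ph(P)$ by $2$ at a candidate critical point; a truncated exponential density with a large rate parameter is IFR yet has $2f+Pf'<0$ over most of $(0,1)$. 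So the proof as written does not close: you would need either additional bounds tying $\lambda$, $P^{*}$, and the hazard rate together, or a different grouping of the terms of $G''(P^{*})$. Separately, your existence step ($\Pi$ ``not increasing at the upper endpoint'') is asserted rather than proved, since $\Pi'(1)=1-G'(1)$ can be positive when $f(1)$ and $\lambda$ are small, though the paper is equally silent on existence.
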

The IFR property imposed by Assumption \ref{ass:C1} is a standard regularity condition in industrial organization and marketing that ensures the firm's profit function is single-peaked with respect to price \citep{lariviere2001selling}. This is a crucial technical condition. Without it, the firm might face a profit function with multiple local maxima, making it impossible to definitively characterize \emph{the} optimal price. By ensuring uniqueness, the IFR property allows us to confidently use the first-order condition to identify the global maximum and to conduct well-defined comparative statics. It guarantees that for any trial length $T$ the firm might consider, there is a single, unambiguous best-response price $P^*(T)$.

\subsection{Trial Length and Price as a Joint Choice}\label{sec:TvsP}

We now investigate the central strategic question of the paper: how does the firm's choice of trial length interact with its choice of renewal price? The structural assumptions allow for a definitive answer, yielding one of the model's most powerful predictions.

\begin{definition}[Critical Baseline Attention]\label{def:lambdacrit}
\(
  \displaystyle\tau^{\text{crit}}
     :=\sup_{P\in(0,1)}\frac{f(P)}{1-F(P)}.
\)
\end{definition}
This value captures the maximum pricing power the firm would have in a standard market without inattention, as the inverse hazard rate is a key component of the Lerner index markup. The condition that follows, $\tau_0 > \tau^{\text{crit}}$, essentially means that the attention-based effects must be sufficiently strong to overcome the standard pricing incentives of the firm. If attention is not a powerful enough factor, the firm's behavior will be dominated by traditional monopoly pricing logic.

\begin{proposition}[Price Responds Positively to Trial Length]\label{prop:R2}
If \(\beta>0\) and \(\tau_{0}>\tau^{\text{crit}}\), then
\(
  \partial P^{*}(T)/\partial T > 0
\)
for all interior \(T\).
\end{proposition}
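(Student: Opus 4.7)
The plan is to invoke the implicit function theorem on the price FOC in \Cref{lem:FOC-full}, exploiting the observation that $T$ enters the firm's inner pricing problem only through $\lambda(T)$. Letting $G(P,\lambda):=\partial\Pi/\partial P$ so that the FOC reads $G(P^{*},\lambda(T))=0$, the IFT gives
\[
\frac{dP^{*}}{dT}=-\,\frac{\partial G/\partial\lambda}{\partial G/\partial P}\,\lambda'(T).
\]
\Cref{prop:R1} together with the IFR property in \Cref{ass:C1} makes $\Pi(\cdot,T)$ strictly concave at $P^{*}$, so $\partial G/\partial P<0$, and the memory-decay form with $\beta>0$ gives $\lambda'(T)<0$. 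Proving $\partial P^{*}(T)/\partial T>0$ therefore reduces to showing $\partial G/\partial\lambda<0$: an exogenous boost to attention sensitivity must lower the unconstrained optimal price.

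Next, I would differentiate $G$ directly. Using $\partial q^{*}/\partial\lambda=Pq^{*}(1-q^{*})$ together with the logistic identity $\partial[q^{*}(1-q^{*})]/\partial\lambda=(1-2q^{*})Pq^{*}(1-q^{*})$, a short calculation gives
\[
\frac{\partial G}{\partial\lambda}=-\,Pq^{*}(1-q^{*})\,\Bigl\{\,2F(P)+Pf(P)+F(P)\lambda P(1-2q^{*})\,\Bigr\}.
\]
The prefactor is strictly positive, so I need the brace to be strictly positive. Since $\lambda P>0$ forces $q^{*}>\tfrac12$, the third summand is negative, and a term-by-term argument fails; the two positive summands must be shown to dominate it \emph{at the optimum}.

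The main obstacle is precisely this sign check. My plan is to eliminate the offending high-order term by substituting the FOC itself: solving \eqref{FOC-full} for $F(P)\lambda P(1-q^{*})=1/q^{*}-F(P)-Pf(P)$ and plugging back in collapses the brace to a compact expression involving only $F(P^{*})$, $1/q^{*}$, and the inverse hazard $1/h(P^{*})$. The role of \Cref{def:lambdacrit} is precisely to deliver the uniform bound $h(P^{*})\le\lambda^{\text{crit}}<\lambda_{0}$ that lets the positive pieces swamp the $\tanh$-shaped negative contribution $(2q^{*}-1)$. Verifying that the hazard-rate hypothesis is uniformly tight enough over the admissible range of $P^{*}(T)$ is the delicate algebraic step; once it is in hand, the chain of signs above yields $\partial P^{*}(T)/\partial T>0$, which is the complementarity claim.
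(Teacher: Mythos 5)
Your proposal follows essentially the same route as the paper's Appendix B.2: apply the implicit function theorem to the price first-order condition, get $\partial G/\partial P<0$ from the IFR assumption, $\lambda'(T)<0$ from memory decay, and reduce the claim to the sign of $\partial G/\partial\lambda$. Your expansion of $\partial G/\partial\lambda$ is in fact bookkept more carefully than the paper's equation (B.2.4), and you correctly flag the genuine crux, namely that the summand $F(P)\lambda P(1-2q^{*})$ is negative because $q^{*}>\tfrac12$, so a term-by-term argument cannot close the sign; note, however, that the paper does not close it either, asserting only parenthetically that $\lambda^{\text{crit}}$ is defined so that $\partial g/\partial\lambda<0$, so the FOC-substitution-plus-hazard-bound step you sketch remains open in your write-up to exactly the same degree as in the paper's own proof.
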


The above proposition describes a strategic complementarity between the two main components of the firm's contract. It predicts that firms offering longer trial periods will also set higher renewal prices. This finding has significant implications for both business strategy and market analysis. It suggests that a firm's choice of trial length is not independent of its pricing strategy but is rather an integral part of a coherent plan to segment and monetize its user base.

In Figure \ref{fig:optimal_T_tradeoff}, our simulation confirms the findings so far. From the simulation result, the Lagrangian $\mathcal{L}$ exhibits an inverted U-shape with $T$, and the optimal trial length $T^{*}$ is interior, as predicted in Proposition \ref{prop:F3}. Additionally, $P$ and $T$ are complimentary, with $T^{*}$ increasing in $P$ and vice-versa, as discussed in Proposition \ref{prop:R2} above.\footnote{The set of parameters we used for the simulation is $\{\varepsilon = 0.3, \beta = 0.2, \tau_0 = 1\}$. In this realistic simulation scenario, the elasticity is 30\%, in line with most real applications, customers start with a reasonable attention level (e.g, approximately $100\%$ pay attention to cancellation at the moment they start the trial, at $P = 8$), which gradually decays over time.}

\begin{figure}
    \centering
    \includegraphics[width=0.8\linewidth]{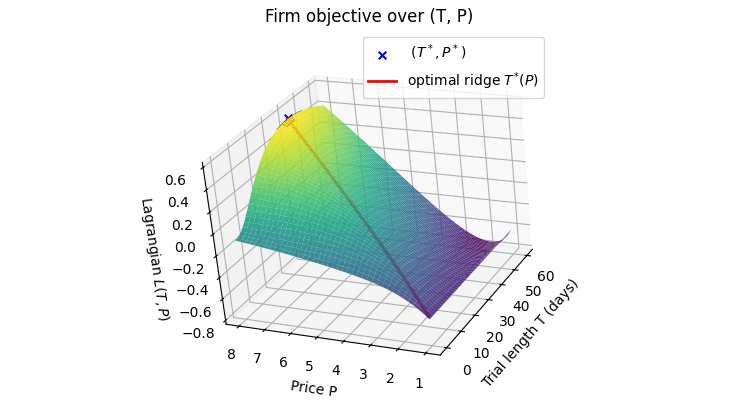}
\caption{The Firm's Optimal Trial Length Trade-off, Conditional on Price (P)}
\label{fig:optimal_T_tradeoff}
\end{figure}
The economic intuition is that the trial length $T$ functions as a strategic parameter that tunes the importance of the inattentive revenue channel. As the firm increases $T$, consumer attention $\tau(T)$ decays, making the revenue from "accidental" renewals a larger and more important component of total profit. The firm's optimal strategy then pivots to better capitalize on this channel. By raising the price $P$, the firm increases the amount of revenue extracted from each consumer who forgets to cancel. While a higher price deters some willing subscribers, this effect is dominated by the enhanced profitability of the inattentive segment when the inattention channel is strong.

This prediction offers a sharp, empirically testable hypothesis that distinguishes our attention-based model from learning-based models. In a learning model, a longer trial might be necessary for a more complex or niche product. The price of such a product could be high (if it is highly valued) or low (if it has a narrow market), so the correlation between trial length and price is ambiguous. Our model predicts a clear positive correlation. This aligns with casual observation of the market, where services with very long trial periods (e.g., some financial software or premium content bundles) often carry high subsequent monthly fees, while services with short, 7-day trials often have more modest pricing. In Appendix D, we provide some evidence of our prediction in popular subscription services. This result connects to the broader strategy literature on firm positioning, suggesting that firms can choose to position themselves as either "value-focused" (short trial, moderate price) or "inertia-focused" (long trial, high price).

\begin{proposition}[Interior Optimal Contract]\label{prop:R3}
With \(\beta>0\) and Assumption \ref{ass:C1}, the joint problem admits a unique interior solution \((T^{*},P^{*})\).
\end{proposition}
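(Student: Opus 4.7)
The plan is to reduce the joint two-variable optimization to a one-dimensional problem via the price best-response from Proposition \ref{prop:R1}, establish existence through a compactness argument, and derive uniqueness from strict monotonicity of the reduced profit along the feasible set. Under Assumption \ref{ass:C1} (IFR), Proposition \ref{prop:R1} gives a unique interior root $P^{*}(T)\in(0,1)$ of \eqref{FOC-full} for each $T\ge 0$; the Implicit Function Theorem then makes $P^{*}(\cdot)$ continuously differentiable. Define the reduced profit $\hat\Pi(T):=\Pi(T,P^{*}(T))$ and reduced utility $\hat U(T):=U(T,P^{*}(T))$, so that \eqref{eq:firmproblem} collapses to $\max_{T\ge 0}\hat\Pi(T)$ subject to $\hat U(T)\ge 0$.

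For existence, I would show the feasible set $\mathcal{F}:=\{T\ge 0:\hat U(T)\ge 0\}$ is compact. At $T=0$, attention is maximal and the aggregate cognitive burden minimal, so $\hat U(0)>0$. As $T\to\infty$, $\lambda(T)\to 0$ and $q^{*}\to 1/2$, so the per-unit IR-Slack $\theta(T)F(P^{*}(T))$ converges to $\tfrac{\beta}{\lambda_{0}}\ln 2\cdot F(P^{*}(T))$; since $P^{*}(T)$ has a strictly positive limit obtained by sending $\lambda\to 0$ in \eqref{FOC-full}, Assumption \ref{ass:C1} keeps $F(P^{*}(T))$ bounded below by a positive constant for large $T$. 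Integrating the resulting negative marginal utility thus drives $\hat U(T)\to-\infty$, so $\mathcal{F}$ is closed and bounded, hence compact; continuity of $\hat\Pi$ delivers a maximizer $T^{*}\in\mathcal{F}$ with $P^{*}:=P^{*}(T^{*})\in(0,1)$.

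For uniqueness, the envelope theorem gives $\hat\Pi'(T)=\Pi_{T}(T,P^{*}(T))=\partial\IR/\partial T>0$ by Proposition \ref{prop:F2}, since $P^{*}(T)$ is the interior argmax of $\Pi(T,\cdot)$. Thus $\hat\Pi$ is strictly increasing on $\mathcal{F}$, so the unique maximizer is $T^{*}=\sup\mathcal{F}$, attained because $\mathcal{F}$ is compact. Interiority $T^{*}>0$ follows from $\hat U(0)>0$, and $P^{*}\in(0,1)$ from Proposition \ref{prop:R1}. The joint optimum $(T^{*},P^{*})=(T^{*},P^{*}(T^{*}))$ is therefore unique and interior.

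The main obstacle is the boundary analysis: rigorously establishing $\hat U(T)\to-\infty$ requires controlling the moving best-response $P^{*}(T)$ so that $F(P^{*}(T))$ does not collapse to zero. Taking $\lambda\to 0$ in \eqref{FOC-full} yields the limiting price condition $F(P)+Pf(P)=2$, and IFR (Assumption \ref{ass:C1}) together with continuity of $F$ guarantees $P^{*}(T)$ has a well-defined positive limit (or approaches the upper endpoint, still keeping $F$ away from zero). Without the Shannon-hyperbolic specification, the explicit limit of the FOC would not take this clean form, and one would need additional regularity on the cost and decay structures to guarantee the requisite divergence---this is where the interplay between the logistic $q^{*}$, the hyperbolic $\lambda(T)$, and the IFR hazard rate becomes essential.
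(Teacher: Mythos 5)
The paper itself never writes out a proof of \Cref{prop:R3}; its intended route is visible from the surrounding machinery: \Cref{prop:R1} gives a unique best-response price $P^{*}(T)$ under IFR, the proof of \Cref{prop:F3} (Appendix A.4) shows that for each fixed $P$ the marginal term $g(T)=P\,\partial\IR/\partial T-\IRSlack(T,P)$ is strictly decreasing and crosses zero exactly once, and the joint optimum is the unique intersection of these two well-behaved best responses. Your proof takes a genuinely different route, and it runs into two substantive problems. First, your envelope step $\hat\Pi'(T)=\Pi_{T}=\partial\IR/\partial T>0$ concludes that reduced profit is strictly increasing, so the optimum sits at $T^{*}=\sup\mathcal{F}$, i.e.\ is pinned down entirely by the participation constraint $\hat U(T^{*})=0$. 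That directly contradicts the paper's own characterization of $T^{*}$ as the interior stationary point where $P\,\partial\IR/\partial T=\IRSlack$ (\Cref{prop:F3}); in your construction that first-order condition would generically fail at the optimum, so you are proving existence and uniqueness of a \emph{different} point than the one the paper calls the interior optimal contract. (The tension traces back to an ambiguity in the paper itself --- equation \eqref{eq:profit} gives $\Pi_T=\partial\IR/\partial T$, yet A.4 asserts $\Pi_T=P\,\partial\IR/\partial T-\IRSlack$ --- but a proof of \Cref{prop:R3} has to adopt the latter marginal calculus to be consistent with \Cref{prop:F3}.)

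Second, even granting your reading of the objective, the two-stage decomposition is not valid for the constrained program \eqref{eq:firmproblem}. You define $P^{*}(T)$ as the \emph{unconstrained} root of \eqref{FOC-full} and then impose $U\ge0$ only on the choice of $T$. But at your candidate optimum the constraint binds, and once it binds the profit-maximizing price is no longer the unconstrained best response: the firm can deviate to a $P\ne P^{*}(T)$ that relaxes $U(T,P)\ge0$ and thereby supports a larger $T$, potentially yielding higher profit. So $\bigl(T^{*},P^{*}(T^{*})\bigr)$ as you construct it need not solve the joint problem, and uniqueness of the genuine joint optimum does not follow. The parts of your argument that do hold up --- the compactness of the feasible set, the limit $q^{*}\to\tfrac12$ and the limiting price condition $F(P)+Pf(P)=2$ obtained by sending $\lambda\to0$ in \eqref{FOC-full}, and the care taken to keep $F(P^{*}(T))$ bounded away from zero --- are useful ingredients, but the architecture needs to be the paper's: establish a unique, continuously differentiable $T^{*}(P)$ from the strict monotonicity of $g(\cdot)$ in A.4, combine it with the unique $P^{*}(T)$ from \Cref{prop:R1}, and show the two best-response curves cross exactly once (e.g.\ via the slope bounds implicit in the proof of \Cref{prop:R2}).
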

This proposition confirms that the central trade-off identified in \Cref{sec:general} leads to a well-defined optimal contract even when the price is chosen optimally. The firm jointly optimizes over both $T$ and $P$. The IFR property from Assumption \ref{ass:C1} ensures the problem is well-behaved, guaranteeing that a unique solution exists. This provides a complete, micro-founded explanation for the contract structures we observe in the market, where trials are neither zero-length nor infinitely long.

\subsection{Policy Analysis: The "Click-to-Cancel" Effect}\label{sec:clickcancel}

Our framework, as established in the previous sections, allows for a nuanced analysis of common policy interventions. In this section, we apply it to the the "click-to-cancel" regulation, recently proposed by the FTC as well as California state legislative, which mandates that online cancellation must be as simple as the sign-up process. We model this as an exogenous increase in attention sensitivity, $\tau(T) \to \gamma\tau(T)$ for $\gamma > 1$. Such a policy effectively lowers the cognitive cost for all consumers to manage their subscriptions.

\begin{proposition}[Comparative Statics under Attention Shock]\label{prop:R4}
In response to an exogenous shock that increases attention sensitivity by a factor $\gamma > 1$, a profit-maximizing firm will adjust its optimal contract $(T^*, P^*)$ as follows:
\begin{enumerate}[label=(\alph*),leftmargin=1.9em]
  \item Optimal trial length falls:
        \(dT^{*}/d\gamma<0\).
  \item The magnitude of the price response is bounded.
  \item Under Assumption \ref{ass:A1} (Isoelastic tail with elasticity \(\varepsilon\)), the sign of the price change simplifies to:
        \(
          \operatorname{sign}(dP^{*}/d\gamma)=\operatorname{sign}(1-\varepsilon).
        \)
\end{enumerate}
\end{proposition}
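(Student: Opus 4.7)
The plan is to apply the implicit function theorem to the system of first-order conditions that determines \((T^{*}(\gamma),P^{*}(\gamma))\): the trial-length FOC from \Cref{prop:F3} and the price FOC \eqref{FOC-full}, each evaluated after substituting the shifted sensitivity \(\gamma\lambda(T)\) for \(\lambda(T)\). Writing the system as \(G_{T}(T,P;\gamma)=0\) and \(G_{P}(T,P;\gamma)=0\), the comparative statics satisfy \(J\,(dT^{*},dP^{*})^{\top}=-(\partial_{\gamma}G_{T},\partial_{\gamma}G_{P})^{\top}\,d\gamma\), where \(J\) is the \(2\times 2\) Jacobian of \((G_{T},G_{P})\) with respect to \((T,P)\). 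By the uniqueness argument underlying \Cref{prop:R3}, under our refinement assumptions \(J\) is negative definite at the interior optimum, so signing each comparative static reduces to signing the direct effects \(\partial_{\gamma}G_{T}\) and \(\partial_{\gamma}G_{P}\), modulo a controlled cross-term.

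For (a), I exploit that under Shannon costs the logistic \(q^{*}(P,\lambda)\) and the entropy \(-H(q^{*})\) depend on \((P,\lambda)\) only through the sufficient statistic \(x=\lambda P\). Under the shock \(x\mapsto\gamma x\); the marginal benefit \(P\cdot\partial\IR/\partial T\) scales with \(q^{*}(1-q^{*})\), which collapses exponentially as \(q^{*}\to 1\), whereas the marginal cost \(\IRSlack\) scales with \(-H(q^{*})\), which vanishes only at the slower rate \(xe^{-x}\). Using \(H'(q)=\ln(q/(1-q))\) and \(H''(q)=1/[q(1-q)]\) from \Cref{def:H}, a direct computation then yields \(\partial_{\gamma}G_{T}<0\) at the pre-shock optimum, and the negative-definite \(J\) propagates this into \(dT^{*}/d\gamma<0\). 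Note that \Cref{prop:F4}(2) already establishes the key partial-equilibrium fact that inattentive revenue strictly falls in \(\gamma\); the present argument transmits that direct effect through the firm's optimal response.

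For (c), \Cref{ass:A1} collapses the standard-margin bracket in \eqref{FOC-full}: iso-elastic tails imply \(Pf(P)/[1-F(P)]=\varepsilon\), so \([1-F(P)-Pf(P)]=(1-\varepsilon)[1-F(P)]\), a term that is independent of \(\gamma\). The \(\gamma\)-derivative of \(G_{P}\) therefore isolates the inattentive-profit block; matching its coefficients through the logistic identities for \(q^{*}\), \(q^{*}(1-q^{*})\), and the derivatives of \(q^{*}\) with respect to \(P\) and \(\lambda\) shows that \(\partial_{\gamma}G_{P}\) carries the sign of \(-(1-\varepsilon)\), which combined with the negative-definite \(J\) yields \(\sgn(dP^{*}/d\gamma)=\sgn(1-\varepsilon)\). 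Part (b) is then immediate: on the compact interior region of \((T,P)\), all partials of \(\Pi\) and \(U\) are bounded and \(\det J\) is bounded away from zero by the concavity underlying \Cref{prop:R3}, so \(\|(dT^{*}/d\gamma,\,dP^{*}/d\gamma)\|\) is bounded by a constant that depends only on the primitives.

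The hardest step is verifying that \(J\) is uniformly negative definite, which requires controlling the cross-partial \(\partial^{2}\Pi/\partial T\partial P\). \Cref{prop:R2} shows that \(T\) and \(P\) are strategic complements, so the cross-partial is positive and, if too large, could flip the sign of \(\det J\) and invert the comparative statics. The key leverage is the condition \(\lambda_{0}>\lambda^{\text{crit}}\), which in \Cref{prop:R2} was used only to guarantee the sign of the cross effect; here the argument must sharpen it into a quantitative bound ensuring the complementarity term stays strictly dominated by the IFR-based own-concavities in \(T\) and in \(P\). Verifying this bound explicitly in the Shannon/hyperbolic-decay environment, and confirming it survives the \(\gamma\)-perturbation, is the main computational effort of the proof.
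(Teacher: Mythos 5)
The paper itself offers no proof of \Cref{prop:R4}: Appendix B stops after the proofs of \Cref{lem:FOC-full} and \Cref{prop:R2}, so there is no argument of record to compare yours against. Your implicit-function-theorem skeleton on the two FOCs is the natural (and essentially only) route, and your reduction of the standard margin to $(1-\varepsilon)[1-F(P)]$ under \Cref{ass:A1} is correct. But as written the proposal has genuine gaps at each of the three places where a sign must actually be produced. First, for part (a) the claim $\partial_{\gamma}G_{T}<0$ is asserted rather than computed, and the rate comparison you invoke does not deliver it: the shock enters the marginal benefit not only through $q^{*}(1-q^{*})$ evaluated at $x=\gamma\lambda P$ but also through an explicit factor of $\gamma$ in $d(\gamma\lambda)/dT$, while the slack carries a $1/(\gamma\lambda)$; tracking all of these, the ratio of marginal benefit to marginal cost behaves like $\gamma^{2}\,q^{*}(1-q^{*})/[-H(q^{*})]\sim \gamma/(\lambda P)$ in the large-$x$ regime, which is \emph{increasing} in $\gamma$ --- the opposite of what you need. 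The asymptotic decay rates of $q^{*}(1-q^{*})$ versus $-H(q^{*})$ alone cannot settle the sign at an arbitrary interior optimum.

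Second, part (c) is internally inconsistent. If $J$ is negative definite and the diagonal dominates (which is exactly what you invoke), Cramer's rule gives $dP^{*}/d\gamma=(-J_{TT}\,\partial_{\gamma}G_{P}+J_{PT}\,\partial_{\gamma}G_{T})/\det J$ with $\det J>0$ and $-J_{TT}>0$, so the response inherits the sign of the direct effect: $\operatorname{sgn}(dP^{*}/d\gamma)=\operatorname{sgn}(\partial_{\gamma}G_{P})$. Your claim that $\partial_{\gamma}G_{P}$ carries the sign of $-(1-\varepsilon)$ therefore yields $\operatorname{sgn}(dP^{*}/d\gamma)=-\operatorname{sgn}(1-\varepsilon)$, contradicting the conclusion you state; either the direct-effect computation or the transmission step has a sign error, and neither is actually carried out (note also that the $\gamma$-dependent terms of \eqref{FOC-full} involve $F+Pf=F+\varepsilon(1-F)$, which does not obviously factor through $1-\varepsilon$). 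Third, you candidly flag that the negative definiteness of $J$ --- the crux, since \Cref{prop:R2} makes the cross-partial $\partial^{2}\Pi/\partial T\partial P$ positive and potentially large --- is left unverified. Until the quantitative bound on that cross term is supplied and the two direct effects are computed explicitly with all $\gamma$ factors tracked, the proof is a plan rather than an argument.
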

This proposition provides a rich and nuanced set of predictions for policymakers, now rigorously derived. The first result, that optimal trial lengths will fall, shows that the policy is effective on its intended margin. By making attention cheaper for consumers, the regulation reduces the profitability of the inattention channel. Firms, recognizing that long trials are now less effective at inducing profitable mistakes, are compelled to shorten them. This forces a strategic shift toward competing more on the intrinsic value of the service rather than on consumer inertia. This finding is highly consistent with the stated goals of regulators in the United States (e.g., the Federal Trade Commission's enforcement actions against ``negative option'' marketing) and the European Union (under the Digital Services Act), who seek to curb business models that rely on consumer errors.

The third result, however, reveals a crucial and potentially unintended consequence that policymakers must consider. The effect of the regulation on price depends critically on the market structure, specifically the demand elasticity of the core subscriber base. Our model predicts that if the demand from these "happy subscribers" is inelastic ($\varepsilon < 1$), firms will react to the loss of inattentive revenue by \emph{raising prices}. Since a wealth of empirical evidence from extant literature suggests that the demand of an existing, loyal customer base is indeed often inelastic \citep{krishnamurthi1991empirical}, this is the most likely outcome in many real-world markets.

This sets up a complex policy trade-off. A regulation designed to protect inattentive, low-valuation consumers could inadvertently harm attentive, high-valuation consumers through higher prices. This can be understood as a form of cross-subsidization in reverse. In the unregulated market, the "inattention tax" paid by forgetful consumers may have kept prices lower for everyone else. When this revenue stream is diminished by the regulation, the firm may recoup the lost profit from its most loyal and price-inelastic users. A full welfare analysis of such a policy is therefore not straightforward; it must weigh the clear benefits to consumers who avoid unwanted charges against the clear costs to those who now face higher prices for a service they willingly consume. This highlights the critical importance of incorporating market structure and demand-side estimates into the evaluation of behavioral regulations.

\subsection{Further Behavioral Properties}\label{sec:tail-inverseU}

\begin{proposition}[Tail Amplification]\label{prop:R5}
Under Shannon cost, the aggregate inattentive loss
\(\mathcal L(G)\) is strictly convex in the upper tail of the
attention-cost distribution \(G\).
\end{proposition}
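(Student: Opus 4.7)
The plan is to specialize \Cref{prop:F5} by exploiting the closed-form logistic of the Shannon kernel, which reduces the entire question to the curvature of a single scalar function. Writing $\psi(\lambda):=1-q^{*}(P,\lambda)=1/(1+e^{\lambda P})$, so that $\mathcal{L}(G)=PF(P)\int_0^\infty \psi(\lambda)\,dG(\lambda)$, direct differentiation yields
\[
\psi''(\lambda)=\frac{P^{2}e^{\lambda P}\bigl(e^{\lambda P}-1\bigr)}{\bigl(1+e^{\lambda P}\bigr)^{3}}>0
\]
for every $\lambda>0$. Under Shannon, then, $\psi$ is strictly convex in the attention-sensitivity coordinate on all of $(0,\infty)$, and the Rothschild--Stiglitz characterization of second-order stochastic dominance immediately implies $\mathcal L(G_2)>\mathcal L(G_1)$ whenever $G_2$ is a mean-preserving spread of $G_1$ in $\lambda$. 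This already strengthens \Cref{prop:F5}, which required a restriction to the convex region of a more general cost structure.

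To obtain the statement in the attention-cost coordinate $c=1/\lambda$ that appears in the proposition, I would reparameterize $\tilde\psi(c):=\psi(1/c)=1/(1+e^{P/c})$ and apply the chain rule, after which a short computation yields
\[
\tilde\psi''(c)=\frac{P\,e^{P/c}}{c^{3}\bigl(1+e^{P/c}\bigr)^{2}}\,\Biggl[\frac{(P/c)\bigl(e^{P/c}-1\bigr)}{1+e^{P/c}}-2\Biggr].
\]
The prefactor is strictly positive, so the sign analysis reduces to the bracket $h(\xi):=\xi(e^{\xi}-1)/(1+e^{\xi})-2$ evaluated at $\xi=P/c$. Since $h(0)=-2$, $h(\xi)\to+\infty$, and (after cancellation of $\xi e^{\xi}(e^{\xi}-1)$ terms in the quotient rule) $h'(\xi)=[(e^{\xi}-1)(1+e^{\xi})+2\xi e^{\xi}]/(1+e^{\xi})^{2}>0$ for $\xi>0$, there is a unique threshold $\bar\xi>0$ such that $\tilde\psi$ is strictly convex in $c$ precisely on the tail $\{c:P/c>\bar\xi\}$. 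A further application of Rothschild--Stiglitz to mean-preserving spreads concentrated on this tail then delivers the advertised tail amplification: widening dispersion in the relevant tail of $G$ while holding its mean fixed strictly raises $\mathcal L(G)$.

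The main obstacle is making the monotonicity of $h(\xi)$ transparent. The raw quotient-rule expansion of $h'(\xi)$ contains three competing terms and only collapses to the manifestly positive form above after a non-trivial cancellation that is not visible by inspection; verifying this identity is what carries the proof. Once the cancellation is pinned down, the existence of the threshold $\bar\xi$, the localization of convexity to the tail, and the final SOSD argument are essentially mechanical, so the entire proposition rests on this single algebraic simplification together with the two chain-rule computations preceding it.
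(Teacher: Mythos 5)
Your two curvature computations are both correct, and they are in fact more careful than the paper's own treatment: the paper contains no dedicated proof of this proposition, and the closest argument, Appendix A.6 (the proof of \Cref{prop:F5}), asserts
\[
\psi''(z)=\frac{P^{2}e^{-P/z}}{z^{4}\bigl[1+e^{-P/z}\bigr]^{3}}>0\quad\text{for all }z=1/\lambda>0,
\]
an expression that drops the product-rule cross terms. Your formula for $\tilde\psi''(c)$ is the correct one, and it shows that $1-q^{*}$ is \emph{not} globally convex in the cost coordinate. Your observation that it \emph{is} globally convex in $\lambda$ itself is also correct and is a genuinely different (and cleaner) route than anything in the paper. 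One caveat on your first paragraph: global convexity in $\lambda$ licenses the Rothschild--Stiglitz comparison only for spreads preserving the mean of $\lambda$, whereas \Cref{prop:F5} fixes the mean of $1/\lambda$; a mean-preserving spread in one coordinate is generally not one in the other, so this does not ``strengthen'' that proposition as you claim.

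The genuine gap is directional. The convex region you isolate, $\{c:P/c>\bar\xi\}=\{c<P/\bar\xi\}$, is the \emph{lower} tail of the attention-cost distribution (equivalently the upper tail in $\lambda$), while the proposition asserts convexity ``in the upper tail of the attention-cost distribution.'' On the literal upper cost tail ($c$ large, $\lambda$ small) your own bracket $h(P/c)$ is negative because $h$ is increasing with $h(0)=-2$, so $\tilde\psi$ is strictly concave there and mean-preserving spreads concentrated on that tail would \emph{lower} $\mathcal L$, not raise it. Either ``upper tail of $G$'' must be read in the $\lambda$ coordinate over which $G$ is actually defined in \Cref{prop:F5} --- in which case your threshold argument delivers exactly the right statement --- or the proposition as literally written is false and what you have proven is the corrected version. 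As written, calling $\{c<P/\bar\xi\}$ ``the tail'' and then asserting that it ``delivers the advertised tail amplification'' papers over the fact that it is the opposite tail from the one named in the statement; you need to say explicitly which tail carries the convexity and why the statement should be read that way.
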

This result speaks to the societal cost of "attention inequality." The convexity implies that a mean-preserving spread in the population's attentiveness (more very-attentive and very-inattentive people) increases the total deadweight loss from this friction. This connects to broader economic discussions about vulnerability and inequality. It suggests that markets with high cognitive heterogeneity are more susceptible to exploitation and may warrant stronger consumer protection, particularly for those at the low-attention end of the spectrum.

\begin{proposition}[Inverse-U Shape in $\beta$]\label{prop:R6}
Viewing maximized profit \(\Pi^{*}\) as a function of the
decay parameter \(\beta\), there exists a finite
\(\beta^{*}\in(0,\infty)\) that maximizes \(\Pi^{*}\);
profits fall as \(\beta\to0\) and as \(\beta\to\infty\).
\end{proposition}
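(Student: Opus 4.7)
The plan is to prove the inverse-U shape of $\Pi^{*}(\beta)$ by a continuity-plus-boundary argument combined with a strict-improvement step in the interior. I would show that $\Pi^{*}$ is continuous in $\beta$, that the two boundary limits $\lim_{\beta\to 0^{+}}\Pi^{*}(\beta)$ and $\lim_{\beta\to\infty}\Pi^{*}(\beta)$ are both bounded above by a baseline value $\Pi_{0}$ (the profit achievable without exploiting the attention-decay channel), and that some interior $\beta$ strictly beats $\Pi_{0}$. A Weierstrass-type argument then delivers a finite maximizer $\beta^{*}\in(0,\infty)$.

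The key steps, in order, would be the following. First, establish continuity of $\Pi^{*}(\beta)$ via the Berge maximum theorem: $\Pi$ and $U$ are jointly continuous in $(T,P,\beta)$, and the feasibility correspondence $\{(T,P):U(T,P;\beta)\ge 0\}$ is compact-valued and continuous in $\beta$. Second, analyze the $\beta\to 0$ limit: here $\lambda(T;\beta)\to\lambda_{0}$ for any bounded $T$, so the trial length loses its power to manipulate attention and the firm's problem collapses to the standard pricing baseline yielding $\Pi_{0}$. Third, analyze the $\beta\to\infty$ limit by exploiting the explicit factor of $\beta$ in $\IRSlack(T,P)=(\beta/\lambda_{0})[-H(q^{*})]F(P)$: for any fixed $T>0$ the participation cost diverges, forcing the firm either to collapse $T\to 0$ (forgoing the inattention channel and reverting to $\Pi_{0}$) or to depress $P$ so sharply that revenue collapses; both sub-trajectories drive $\Pi^{*}$ back toward $\Pi_{0}$. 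Fourth, invoke \Cref{prop:F3} to exhibit some intermediate $\beta$ at which the firm's interior $(T^{*},P^{*})$ yields $\Pi^{*}(\beta)>\Pi_{0}$ strictly. Together with continuity and the endpoint bounds, this guarantees a finite interior maximizer $\beta^{*}$.

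The main obstacle is the deceptive reparameterization $\tilde\lambda:=\lambda_{0}/(1+\beta T)\in(0,\lambda_{0}]$: both $\Pi$ and $U$ depend on $(T,\beta)$ only through $\tilde\lambda$, so under a continuous-$T$ optimization the firm's problem appears entirely $\beta$-independent for any $\beta>0$, which would render $\Pi^{*}$ constant on $(0,\infty)$ and contradict the claimed shape. The proof therefore hinges on identifying the implicit friction that genuinely restores $\beta$-dependence: the most natural route is the integer restriction $T\in\mathbb{N}$ stipulated in \Cref{sec:primitives}, under which the grid of achievable $\tilde\lambda$ values is dense near $\lambda_{0}$ for small $\beta$ (blocking very small $\tilde\lambda$ at bounded trial lengths) and collapses toward the two-point set $\{\lambda_{0}\}\cup\{\lambda\approx 0\}$ as $\beta\to\infty$ (blocking intermediate, interior-optimal $\tilde\lambda^{*}$). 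Making this discrete-grid argument rigorous, ruling out pathological subsequences along which the firm still approximates the interior optimum, and showing that the attainable profit does genuinely fall at both endpoints rather than merely plateau, is the central technical challenge.
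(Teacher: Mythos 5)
The paper offers no proof of \Cref{prop:R6} at all---Appendices A and B stop at \Cref{prop:R2}---so there is no authorial argument to compare yours against; I can only assess your proposal on its own terms. Your central observation is correct and, in fact, exposes a real problem with the proposition: because $\Pi$ and $U$ depend on $(T,\beta)$ only through $\lambda(T)=\lambda_{0}/(1+\beta T)$, the change of variables $u=\beta T$ shows that for every $\beta>0$ the attainable set of effective sensitivities is the same interval $(0,\lambda_{0}]$, so under the continuous-$T$ optimization used in every other proof in the paper the value function $\Pi^{*}(\beta)$ is exactly constant on $(0,\infty)$. No proof that respects the model as written can avoid confronting this. (The paper's verbal intuition---that the IR-Slack ``would be enormous'' as $\beta\to\infty$---conflates the derivative $-\partial U/\partial T$, which does scale with $\beta$, with the level of $U$, which does not once $T$ is re-optimized to $u^{*}/\beta$.)

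However, your proposed rescue via the integer restriction $T\in\mathbb{N}$ fails at the $\beta\to 0$ endpoint. The paper places no upper bound on $T$, so for small $\beta$ the achievable grid $\{\lambda_{0}/(1+\beta k):k\in\mathbb{N}\}$ has spacing of order $\beta\lambda_{0}$ near $\lambda_{0}$ and accumulates at $0$; it is therefore dense in all of $(0,\lambda_{0}]$, not merely ``near $\lambda_{0}$'' as your obstacle paragraph asserts. The firm can take $T=\lceil u^{*}/\beta\rceil$ and attain the unconstrained interior optimum up to an error vanishing with $\beta$, so $\Pi^{*}(\beta)$ converges to the full interior value, strictly above your baseline $\Pi_{0}$, rather than falling to it as your second step requires. (Your $\beta\to\infty$ endpoint does go through under the integer constraint: the smallest positive trial $T=1$ already sends $\lambda\to 0$, the aggregate cognitive-cost term $-\lambda^{-1}H(q^{*})F(P)$ diverges, participation fails for every $T\ge 1$, and profit reverts to $\Pi_{0}$.) As stated, the proposition needs an ingredient absent from the paper---a bounded feasible trial length, discounting of post-trial revenue, or some other channel through which $\beta$ enters independently of $\beta T$---and your proof cannot be completed without adding one. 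Flagging this as the central difficulty was the right instinct; the difficulty is just not surmountable within the model as given.
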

This result suggests that firms do not benefit from limitless consumer inattention. If attention decays too slowly ($\beta \to 0$), the inattentive revenue channel is weak. If attention decays too quickly ($\beta \to \infty$), sophisticated consumers will anticipate that cancellation is nearly impossible. The IR-Slack would be enormous, making any offer with a non-zero trial length unattractive and again shutting down the inattentive revenue channel. The firm's profit is maximized at some intermediate level of environmental "forgetfulness." This has interesting strategic implications. While a firm cannot choose the population's $\beta$, it can influence the effective decay rate through its user interface design and communication strategy. A firm could actively try to manage its user base towards this "sweet spot" of moderate inattention, for example, by making the cancellation process navigable but not salient, a practice often referred to as creating "sludge" \citep{thaler2018nudge}. This suggests that firm strategy can extend beyond contract terms to the very design of the customer experience, with the goal of tuning the level of cognitive friction.
\section{Extension: Paid Trials and the Dilution Principle}\label{sec:extension}

We now extend the baseline model to consider a more general contract that includes a non-zero introductory price, $P_0 > 0$. This is a common strategy in many markets, seen in offers like "\$1 for the first three months" for news subscription services (e.g., the New York Times, The Washington Post) or discounted first-year fees for credit cards. The introduction of an upfront fee fundamentally changes the firm's problem by adding a screening dimension to the contract, which interacts in a non-trivial way with the attention-exploitation mechanism. The contract is now a triplet $C=(T, P_0, P)$, and the firm must jointly optimize all three components. This extension allows us to understand the strategic choice between offering a "free trial" and a "paid trial."

\subsection{Sign-up Technology}\label{sec:signup-tech}

The presence of an upfront price $P_0$ directly affects a consumer's decision to sign up for the trial. A positive $P_0$ acts as a barrier to entry, and we expect it to deter some consumers, particularly those with lower valuations or those who are more price-sensitive. We capture this with a general signup rate function, $\eta(P_0)$, which represents the fraction of the total consumer population that chooses to accept a contract with introductory price $P_0$. This function can be thought of as being derived from the underlying distribution of consumer surplus from the contract. Consumers sign up if their total expected utility, $U(v; T, P) - P_0$, is positive. A higher $P_0$ raises the bar for participation.

We naturally assume that the signup rate is a decreasing function of the introductory price, so $\eta'(P_0) < 0$. The sensitivity of this signup rate to the price is captured by the elasticity of sign-ups:
\[
\varepsilon_{0}(P_{0})
  := -\eta'(P_{0})\,\frac{P_{0}}{\eta(P_{0})} >0
  \quad\text{(elasticity).}
\tag{6.1}\label{eq:signup-elasticity}
\]
This elasticity will be a key determinant of the firm's strategy. A high $\varepsilon_0$ indicates a market where consumers are very sensitive to upfront costs, perhaps due to intense competition or the availability of many free alternatives. A low $\varepsilon_0$ suggests a market with less price-sensitive consumers, for instance, for a highly unique or essential service.

\begin{assumption}[A3: Isoelastic Sign-up Demand]\label{ass:A2}
For some results requiring a specific functional form, we will invoke the assumption that sign-up demand is isoelastic, such that $\eta(P_{0}) = \alpha P_{0}^{-\theta}$ with constants $\alpha>0$ and $\theta>0$. Under this assumption, the elasticity is constant: $\varepsilon_{0}(P_{0})\equiv\theta$.
\end{assumption}

\subsection{Profit with a Paid Trial}\label{sec:profit-paid}

The firm's profit function must now account for the revenue from this upfront fee, as well as the fact that only a fraction $\eta(P_0)$ of the market signs up. The total profit is the expected revenue per sign-up multiplied by the number of sign-ups:
\[
\Pi(T,P_{0},P)
  = \eta(P_{0})\;
    \Bigl\{
      \underbrace{P_{0}}_{\text{Intro Fee}}
      + \underbrace{P\bigl[1-F(P)\bigr]}_{\text{Standard Renewal}}
      + \underbrace{\IR(T,P)}_{\text{Inattentive Renewal}}
    \Bigr\}.
\tag{6.2}\label{eq:profit-paid}
\]
The firm's problem is now to choose the triplet $(T, P_0, P)$ to maximize this function. This richer problem allows us to analyze the strategic interplay between the introductory price and the trial length.

\subsection{First-Order Conditions and Strategic Interaction}\label{sec:FOCs-paid}

Let $P^{\text{aug}}(T,P) := P[1-F(P)]+\IR(T,P)$ represent the total expected post-trial profit per subscriber. The firm's profit is then $\Pi = \eta(P_0)[P_0 + P^{\text{aug}}]$. The first-order condition for the optimal introductory price $P_0$ is:
\[
\frac{\partial \Pi}{\partial P_0} = \eta'(P_{0})\bigl[P_{0}+P^{\text{aug}}\bigr] + \eta(P_{0}) = 0.
\tag{6.3} \label{eq:FOC-P0-general}
\]
Under the isoelastic assumption (A2), this simplifies to the constant-elasticity markup rule:
\[
1 = \theta\;\frac{P_{0}+P^{\text{aug}}}{P_{0}}
= \theta\Bigl[1 + \frac{P^{\text{aug}}}{P_{0}}\Bigr].
\tag{6.4}\label{eq:FOC-P0-iso}
\]
The crucial insight for strategic interaction comes from the cross-partial derivative of the profit function.

\begin{lemma}[Cross-Partial Sign]\label{lem:cross}
For all admissible \(T,P_{0},P\), the cross-partial derivative of the profit function is strictly negative:
\(
  \dfrac{\partial^{2}\Pi}{\partial T \,\partial P_{0}}
  = \eta'(P_{0})\;\dfrac{\partial P^{\text{aug}}}{\partial T}
  < 0.
\)
\end{lemma}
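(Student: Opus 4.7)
The plan is to compute the two partial derivatives in sequence and then read off the sign from results already established in the paper. First I would differentiate $\Pi(T,P_0,P)=\eta(P_0)\bigl[P_0+P^{\text{aug}}(T,P)\bigr]$ with respect to $P_0$, treating $T$ and $P$ as fixed. Since $P^{\text{aug}}$ does not depend on $P_0$, the product rule gives
\[
\frac{\partial \Pi}{\partial P_0}=\eta'(P_0)\bigl[P_0+P^{\text{aug}}(T,P)\bigr]+\eta(P_0),
\]
which is exactly the expression appearing in equation \eqref{eq:FOC-P0-general}.

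Next I would differentiate this expression with respect to $T$. Among the three terms, $\eta(P_0)$ and the factor $\eta'(P_0)[P_0+\,\cdot\,]$ split cleanly: $\eta(P_0)$ is independent of $T$, $\eta'(P_0)$ is independent of $T$, and $P_0$ (viewed as an argument) is independent of $T$. Hence the only $T$-dependent piece is $P^{\text{aug}}(T,P)$ inside the bracket, yielding
\[
\frac{\partial^{2}\Pi}{\partial T\,\partial P_0}
=\eta'(P_0)\,\frac{\partial P^{\text{aug}}(T,P)}{\partial T}.
\]
This is the identity in the lemma; all that remains is to sign it.

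To sign the two factors, I would invoke results already in place. First, $\eta'(P_0)<0$ by the maintained assumption in \Cref{sec:signup-tech} that sign-up rates strictly decrease in the introductory price (equivalently, $\varepsilon_0(P_0)>0$). Second, by the decomposition $P^{\text{aug}}(T,P)=P[1-F(P)]+\text{IR}(T,P)$, only the inattentive-revenue term varies with $T$, so $\partial P^{\text{aug}}/\partial T=\partial \text{IR}/\partial T$. \Cref{prop:F2} gives $\partial \text{IR}/\partial T>0$ strictly, provided $\beta>0$ and $F(P)>0$, both of which hold under the admissibility conditions (MA-3 and \Cref{ass:C1}). Multiplying a strictly negative number by a strictly positive one yields the strict inequality claimed.

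There is no real obstacle here, since both signs are borrowed from prior results; the only subtlety worth flagging in the write-up is the admissibility qualification. The inequality is strict only at contracts for which the mass of would-be cancellers $F(P)$ is positive and attention actually decays ($\beta>0$); at the knife-edge cases $P\le\underline v$ with $F(P)=0$ or $\beta=0$, the cross-partial merely vanishes. Since the baseline model and \Cref{ass:C1} exclude these degenerate cases on the relevant range, the strict sign $<0$ holds throughout the domain over which the firm optimizes.
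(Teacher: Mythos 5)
Your proof is correct and follows essentially the same route as the paper's: differentiate the $P_0$ first-order condition with respect to $T$, note that only $P^{\text{aug}}$ carries $T$-dependence, and sign the product via $\eta'(P_0)<0$ and Proposition \ref{prop:F2}. Your explicit flagging of the admissibility caveat ($\beta>0$ and $F(P)>0$ for strictness) is slightly more careful than the paper's statement but does not change the argument.
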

\begin{proof}
The derivative of the sign-up rate, $\eta'(P_{0})$, is negative by definition. The derivative of the post-trial profit, $\partial P^{\text{aug}}/\partial T$, is equal to $\partial\IR/\partial T$, which is strictly positive from Proposition \ref{prop:F2}. The product of a negative and a positive term is negative.
\end{proof}

\subsection{The Dilution Principle}\label{sec:dilution}

The negative cross-partial derivative established in Lemma \ref{lem:cross} implies that the firm's choice variables are strategic substitutes. This leads to our main result for this section.

\begin{proposition}[The Dilution Principle]\label{prop:paid}
Let \((T^{*},P_{0}^{*},P^{*})\) be an interior solution that maximizes the profit function in \eqref{eq:profit-paid}. Then:
\begin{enumerate}[label=(\alph*),leftmargin=2em]
  \item The optimal introductory price $P_0^*$ is set according to Equation \eqref{eq:FOC-P0-general}.
  \item Trial length and introductory price are strategic substitutes:
        \(\displaystyle
          \frac{\partial T^{*}}{\partial P_{0}}\le0,\;
          \frac{\partial P_{0}^{*}}{\partial T}\le0
        \)
        (with strict inequality for interior solutions).
  \item The firm's strategy depends on market conditions. In markets where consumers are highly sensitive to upfront fees ($\varepsilon_0 \to \infty$), the firm is forced to set $P_0^* = 0$. In markets where consumers are insensitive ($\varepsilon_0 \to 0$), the firm will rely more on the upfront fee and less on a long trial, potentially setting $T^*=0$.
\end{enumerate}
\end{proposition}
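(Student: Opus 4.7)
My plan is to tackle the three claims in order, building on the cross-partial lemma already established.

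For part (a), I would differentiate the profit function $\Pi = \eta(P_{0})[P_{0} + P^{\text{aug}}(T,P)]$ with respect to $P_{0}$ at fixed $(T,P)$ and set the derivative to zero, which reproduces Equation \eqref{eq:FOC-P0-general} verbatim. The iso-elastic specialization \eqref{eq:FOC-P0-iso} follows from substituting $\eta'(P_{0})P_{0} = -\theta\,\eta(P_{0})$. This step is purely mechanical and requires no additional structure.

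For part (b), I would invoke \Cref{lem:cross}: the cross-partial $\partial^{2}\Pi/(\partial T\,\partial P_{0})$ is strictly negative, so $\Pi$ is strictly submodular in $(T,P_{0})$. By Topkis' monotone comparative statics theorem, the induced best-response correspondences are antitone. Equivalently, at an interior smooth optimum, implicit differentiation of the first-order condition $\partial\Pi/\partial T=0$ gives
\[
\frac{dT^{*}}{dP_{0}} \;=\; -\,\frac{\partial^{2}\Pi/(\partial T\,\partial P_{0})}{\partial^{2}\Pi/\partial T^{2}},
\]
where the numerator is negative by \Cref{lem:cross} and the denominator is negative by the second-order condition at an interior maximum; hence $dT^{*}/dP_{0}<0$. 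A symmetric argument delivers $dP_{0}^{*}/dT<0$, establishing strategic substitution.

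For part (c), I would rearrange the general first-order condition \eqref{eq:FOC-P0-general} into the elasticity form $\varepsilon_{0}(P_{0}^{*}) = P_{0}^{*}/[P_{0}^{*}+P^{\text{aug}}(T^{*},P^{*})]$, whose right-hand side lies in $(0,1)$ at any interior optimum. When $\varepsilon_{0}\to\infty$ this identity cannot hold; more directly, $\partial\Pi/\partial P_{0}<0$ at every $P_{0}>0$, so the firm collapses to the corner $P_{0}^{*}=0$. In the opposite limit $\varepsilon_{0}\to 0$, sign-up demand ceases to discipline the upfront fee, so $P_{0}^{*}$ can absorb an arbitrarily large share of total extraction; by the strategic-substitutes result from part (b), the companion trial length $T^{*}$ is driven toward zero. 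Under the iso-elastic parameterization (A2), both claims reduce to algebraic inspection of the closed form $P_{0}^{*}=\theta P^{\text{aug}}/(1-\theta)$ and its feedback into the $T$-FOC via \Cref{prop:F3}.

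The main obstacle is ensuring that part (b)'s submodularity argument survives when we simultaneously re-optimize over $P$. The clean fix is the envelope theorem: along the curve $P^{*}(T,P_{0})$, indirect effects operating through the optimized renewal price drop out of the total derivative of profit, leaving the direct cross-partial signed by \Cref{lem:cross}. This preserves the submodularity of the reduced two-variable problem in $(T,P_{0})$ and keeps Topkis' theorem applicable; the remaining steps are bookkeeping.
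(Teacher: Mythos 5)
Your proposal follows essentially the same route as the paper's Appendix C: the first-order condition for $P_{0}$ in part (a), submodularity via the negative cross-partial plus Topkis (which you supplement with the equivalent implicit-function-theorem computation) for part (b), and the elasticity rearrangement of the FOC with the two limiting cases for part (c). The only cosmetic difference is how the third choice variable is handled --- the paper notes that $\eta(P_{0})$ enters multiplicatively so the renewal-price FOC is unaffected by $P_{0}$, whereas you invoke the envelope theorem along $P^{*}(T,P_{0})$; both reduce the problem to the same two-variable submodular structure.
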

This proposition introduces what we term the "Dilution Principle," which describes the trade-off between screening consumers and exploiting their inattention. The firm has two distinct tools at its disposal, each targeting a different aspect of consumer behavior.

A positive introductory price $P_0$ acts as a \textbf{screening device}. This is a classic result in the economics of pricing and marketing \citep{tirole1988theory}. By charging an upfront fee, the firm can filter the pool of applicants. Consumers with very low valuations, or those who are merely "window shopping," are likely to be deterred by even a small fee. Those who are willing to pay are signaling a higher commitment or a higher expected surplus from the service. This screening helps the firm build a more valuable subscriber base with potentially lower long-term churn.

In contrast, a long trial length $T$ acts as an \textbf{inattention exploitation device}. As established in our baseline model, its primary function in this framework is to increase the probability that low-valuation consumers, who should cancel, fail to do so. It does not screen consumers; in fact, it is most effective when the pool of trial users includes a large number of these marginal, inattentive individuals.

The Dilution Principle arises because these two tools work at cross-purposes. When a firm uses a high $P_0$ to screen its user base, it actively removes many of the low-valuation, marginal consumers from the pool. This \textit{dilutes} the population of consumers who are the primary source of inattentive revenue. With fewer "targets" for the inattention strategy, the marginal benefit of extending the trial length $T$ is diminished. The firm has less to gain from making these already-committed customers forget to cancel.

Conversely, if the firm's strategy is to maximize inattentive revenue with a long trial $T$, it needs to attract as many consumers as possible into its funnel, including those with low valuations. Charging a high introductory price $P_0$ would be directly counterproductive to this goal, as it would scare away the very consumers the long trial is meant to capture.

This strategic substitutability implies that we should observe firms adopting one of two archetypal strategies. The first is a ``freemium funnel'' approach, characterized by a low (or zero) $P_0$ and a long $T$. This strategy maximizes reach and is geared towards monetizing a broad user base through a combination of standard conversions and inattentive renewals. This is common for mass-market services like music streaming or mobile games. The second is a ``premium trial'' approach, with a higher $P_0$ and a shorter $T$. This strategy focuses on screening for high-quality customers and building a loyal, high-retention subscriber base. This is more common for B2B software, specialized financial news services, or other niche products where customer quality is more important than sheer quantity.

The model predicts we should not observe firms attempting to do both simultaneously with high levels of $P_0$ and $T$. The choice of strategy is endogenously determined by market conditions, particularly the elasticity of the sign-up rate, $\varepsilon_0$. In highly competitive markets where consumers are price-sensitive, firms will be pushed towards the free trial model. In markets for unique, high-demand products, firms have more power to charge for entry and may rely less on the inattention channel. This framework thus provides a unified explanation for the diversity of trial offer structures we see in the real world.
\section{Policy and Managerial Implications}\label{sec:policy}

Our theoretical framework, by formalizing the role of attention in subscription contracts, provides a rich set of implications for both managers designing these contracts and policymakers tasked with regulating them. The model's central insights, now resting on a rigorous and empirically-grounded foundation, offer a structured way to think about the strategic landscape of the modern subscription economy.

\subsection{Managerial Implications}

For managers, our model highlights that contract design is a multi-faceted problem where pricing, trial duration, and consumer psychology are deeply intertwined.

\subsubsection*{Contract Design as a Coherent Strategic Choice}
The most critical managerial takeaway is that contract elements must be optimized jointly, not in isolation. Our finding that trial length ($T$) and renewal price ($P$) are strategic complements (\Cref{prop:R2}) provides clear guidance. A strategy built around a long trial period, designed to exploit consumer inertia, should be logically paired with a relatively high renewal price. This contract structure is tailored to maximize revenue from the inattentive segment of the market. It is most effective for products with mass appeal that can attract a wide variety of consumers into the trial funnel.

Conversely, a strategy centered on a short, focused trial period should be accompanied by a more moderate renewal price. This structure competes more directly on the product's intrinsic value and is better suited for markets with sophisticated, attentive consumers or for niche products where building long-term trust is paramount. Mismatched strategies, such as a long trial with a low price, may leave profits on the table by failing to fully capitalize on the inattentive revenue stream.

\subsubsection*{The Hidden Cost of Consumer Annoyance}
Firms often view free trials solely through the lens of customer acquisition cost. Our model introduces a more subtle, but equally important, cost: the \textbf{IR-Slack} (\Cref{eq:slack_revised}). This represents the erosion of consumer utility, and thus willingness to sign up, that occurs when a contract is perceived as being cognitively burdensome. A 90-day trial may seem more generous than a 30-day trial, but savvy consumers may correctly perceive it as a commitment that is harder to manage and more likely to result in an unwanted charge. This "annoyance cost" can depress sign-up rates, a phenomenon empirically verified by \citet{Miller2023}. The optimal trial length $T^*$ (\Cref{prop:R3}) is precisely the point that balances the marginal revenue from increased forgetfulness against this marginal loss in offer attractiveness. Managers should therefore be cautious about extending trial periods excessively, as they may win inattentive revenue at the expense of scaring away a valuable segment of organized, forward-looking consumers.

\subsubsection*{Choosing the Right Acquisition Funnel: Free vs. Paid Trials}
The extension to paid trials (\Cref{sec:extension}) offers a framework for deciding between a free and a low-cost introductory offer. The choice depends critically on the firm's strategic objective.

\textbf{Strategy 1: Maximize Reach and Inattentive Revenue.} A free trial ($P_0=0$) is the optimal tool. It minimizes the barrier to entry, casting the widest possible net and pulling in the maximum number of users, including the low-valuation consumers who are the primary source of inattentive revenue. This is the classic "freemium" strategy, common for services like Spotify or mobile games.

\textbf{Strategy 2: Screen for High-Quality Customers.} A paid trial ($P_0 > 0$) is the superior choice. Even a nominal fee can act as a powerful screening mechanism, filtering out less committed users and selecting for those with a higher intrinsic valuation for the product. This leads to a subscriber base with lower churn and higher lifetime value. The Dilution Principle (\Cref{prop:paid}) advises that this strategy should be paired with a shorter trial period, as the need to exploit inattention is reduced. This is more common for B2B software or specialized financial news services.

\subsection{Policy Implications}

Our model provides a formal economic framework to analyze the incentives behind potentially exploitative business practices and to evaluate the intended and unintended consequences of regulation.

\subsubsection*{A Formal Framework for Analyzing "Dark Patterns"}
Regulators are increasingly concerned with "dark patterns," which are user interface and contract designs that nudge consumers into choices they would not otherwise make \citep{Luguri2021}. A long automatic-renewal trial paired with an opaque and cumbersome cancellation process is a prime example of a contractual dark pattern. Our model formalizes the economic incentive for such a design: it profits by lowering the consumer's effective attention sensitivity ($\tau$). The harm is not merely that some consumers are "tricked," but that market competition can be distorted. Firms may be incentivized to invest in obfuscation and the exploitation of cognitive biases rather than in improving product quality or lowering prices.

\subsubsection*{The Double-Edged Sword of "Click-to-Cancel" Mandates}
Policies such as California's Automatic Renewal Law, which mandates that consumers must be able to cancel a service online with the same ease they signed up, are a direct intervention in this market. In our model, this is an exogenous shock that increases $\tau$ for all consumers. Our analysis in \Cref{prop:R4} yields two crucial insights for policy makers:

\textbf{Intended Consequence:} The policy is effective in achieving its primary goal. By making cancellation easier, it weakens the inattention channel, compelling firms to shorten trial lengths and reduce their reliance on consumer forgetfulness. This is a straightforward benefit for consumer welfare, as it forces firms to compete more on the intrinsic value of their service than on engineered friction.

\textbf{Unintended Consequence:} The effect on price depends directly on demand elasticity. Our result shows that if demand from the core base of happy subscribers is inelastic ($\varepsilon < 1$), as empirical evidence suggests it often is, firms will respond to the loss of inattentive revenue by raising prices. This highlights a critical policy trade-off: a regulation aimed at protecting inattentive consumers could inadvertently lead to higher costs for attentive, loyal customers. A thorough evaluation of such a policy must therefore weigh the benefits of reduced exploitation against the potential for adverse price effects, a calculation that depends on the empirically-determined elasticity of demand.

\subsubsection*{Mandatory Reminders and the Structure of Trials}
Another frequently proposed intervention is to mandate that firms send clear and conspicuous reminders to consumers before a trial period automatically converts to a paid subscription. In our framework, a perfectly effective reminder would be equivalent to forcing the cancellation probability $q^*$ to 1 for all consumers who wish to cancel. This would completely eliminate the inattentive revenue stream. The firm's profit function would collapse to the standard monopoly profit, $\Pi = P(1-F(P))$.

The strategic implications of such a policy would be profound. With the inattention channel completely shut down, the firm's primary incentive for offering a long trial would vanish. The trial would become a pure customer acquisition cost. Consequently, we would expect to see trial lengths shrink dramatically, perhaps to the minimum time required for value discovery, or even to zero. Although this would offer maximum protection to forgetful consumers, it might also lead to the disappearance of the generous long-term trial offers that some consumers value for an low-pressure evaluation of a product, lowering consumer learning. This reveals a potential tension between consumer protection regulations and the market provision of valuable promotional tools.

\subsubsection*{The Case for Targeted Interventions}
Our result on the convexity of the inattentive loss function (\Cref{prop:F5}) suggests that the social welfare gains from improving attention are greatest at the lower end of the attention distribution. In other words, helping a very disorganized consumer become moderately organized is far more impactful than helping an already organized consumer become even more so. This provides a strong argument for targeted rather than universal interventions. While broad policies like ``click-to-cancel'' help all consumers, their marginal benefit is highest for the most vulnerable. This could justify complementary policies aimed specifically at these groups, such as providing financial literacy education or promoting digital tools that help manage subscriptions, as these could be the most cost-effective ways to reduce the aggregate deadweight loss from consumer inattention.

In summary, our model provides a structured and nuanced lens through which to view the strategic landscape of the subscription economy. It moves beyond simplistic narratives, showing how market incentives can lead firms to design contracts that profit from the cognitive limitations of even fully rational consumers. This provides a more robust foundation for both corporate strategy and public policy in the digital age.

\section{Conclusion}\label{sec:conclusion}

The subscription business model, and its reliance on the "free trial with automatic renewal," has become a cornerstone of the modern digital economy. While the traditional economic rationale for trial offers centers on consumer learning, here we develop a complementary theoretical framework focused on a different, equally powerful friction: limited attention. We have demonstrated that when consumers are rationally inattentive and their ability to attend to future tasks decays over time, a rich set of strategic behaviors and market outcomes observed in subscription markets can be explained.

Our model is built upon a crucial trade-off faced by the firm. On the one hand, extending a trial's length erodes consumer attention, increasing the rate of accidental renewals, and boosting "inattentive revenue". On the other hand, sophisticated consumers anticipate this cognitive burden, which lowers their ex-ante utility and makes the offer less attractive, creating a countervailing "IR-Slack." We have shown that this tension naturally resolves to a finite, interior optimal trial length, providing a clear rationale for the lengthy but not infinite trial periods prevalent in the market.

The core predictions of our model, logically follow from standard and empirically motivated assumptions, offer a new perspective on contract design and pricing strategy. We established a strategic complementarity between trial length and renewal price, predicting that firms that use longer trials to exploit inattention will also set higher prices to maximize extraction from this channel. We extended the model to paid trials and articulated a "dilution principle," showing that upfront fees (which screen consumers) and long trials (which exploit inattention) act as strategic substitutes. Furthermore, our framework provides a powerful tool for policy analysis. We showed that "click-to-cancel" regulations are likely to shorten trial periods but will lead to higher prices if demand from loyal subscribers is inelastic.

\paragraph{Limitations and Avenues for Future Research.}
Our paper represents a foundational step in exploring the role of attention in subscription contracts, and several promising avenues for future research remain open.

First, our analysis is conducted in a monopoly setting. Introducing competition would be a crucial and natural extension. How do firms design their contracts when consumers can choose between multiple subscription offers? Would competitive pressure force firms to offer shorter, more transparent trials, or could it perversely lead to a "race to the bottom," where firms compete on who can more effectively design contracts to exploit consumer inattention? The strategic interaction in such a setting would be complex and could yield valuable insights into market structure and consumer welfare. Our framework provides a useful tool to analyze this trade-off between competing on transparency versus competing on obfuscation.

Second, we have deliberately abstracted from \textbf{consumer learning} to isolate the attention mechanism. A richer and more realistic model would integrate both motives. In such a unified framework, the trial length would simultaneously influence how much a consumer learns about their valuation and how likely they are to forget to cancel. This would create a more complex trade-off for the firm, where the optimal trial length would depend on an interplay between the product's novelty and complexity (which would call for a longer learning period) and the target consumer's attentiveness.

Third, the \textbf{attention decay} process itself could be modeled in greater detail. Our parameter $\beta$ is exogenous. One could imagine models where it is endogenous, for example, where consumers can invest in "attention capital" (e.g., by adopting organizational habits) to lower their personal decay rate. Moreover, firms could engage in dynamic activities during the trial period. For instance, sending engaging reminder emails might keep the product top-of-mind (effectively raising $\tau$), while strategic silence might be used to deepen the attention decay.

Finally, the predictions of our model are empirically testable and we believe that this is a particularly fruitful path for future work. The predicted positive correlation between trial lengths and renewal prices, and the negative correlation between introductory prices and trial lengths, can be tested using market data from subscription services. Field experiments, conducted in partnership with firms, could be designed to randomly assign new users to different contract structures to causally identify the effects on sign-up, renewal, and long-term retention. Recent regulatory actions also create interesting natural experiments, where the geographic variations in trial and auto-renewal regulations can be leveraged for identification. Lastly, the model is suitable for structural estimation, which could use observational data on market contracts and consumer choices to recover the underlying parameters of the model, such as the distribution of consumer valuations and the attention parameters $\tau_0$ and $\beta$. Such an estimation would allow for quantitative welfare analysis of various policy proposals.

In conclusion, by placing limited attention at the center of the analysis, our framework as presented here offers a new and, we believe, valuable perspective on the economics of the rapidly growing subscription industry, with important implications for both management practice and public policy.

\newpage
\onehalfspacing
\bibliographystyle{econ}
\bibliography{references}

\newpage
\appendix

\section*{Appendix A: Proofs of General Results}
\renewcommand{\theequation}{A.\arabic{equation}}
\setcounter{equation}{0}
This appendix provides detailed mathematical proofs for the lemmas and propositions of results without the Shannon cost functional form assumption.

\subsection*{A.1. Proof of Lemma \ref{lem:qstar} (Optimal Monitoring)}
A consumer with valuation $v < P$ chooses a cancellation probability $q \in [0,1]$ to solve the cost-minimization problem laid out in Equation \eqref{eq:consumer_obj_MI}:
\[
\min_{q\in[0,1]} \quad \underbrace{\bigl(1-q\bigr)P}_{\text{Expected Monetary Loss}} \;+\; \underbrace{\frac{1}{\tau(T)}\,\mathcal{I}(q \parallel 1/2)}_{\text{Cognitive Cost}},
\]
where $\mathcal{I}(q \parallel 1/2) = [q\ln q + (1-q)\ln(1-q)] + \ln 2$ is the mutual information.

\paragraph{Step 1: Simplification of the Objective.}
As the term $\ln 2 / \tau(T)$ is constant with respect to the choice variable $q$, it does not influence the optimal solution. The consumer's problem is therefore equivalent to solving:
\[
\min_{q\in[0,1]} \quad L(q) = (1-q)P + \frac{1}{\tau(T)}\Bigl[q\ln q+(1-q)\ln(1-q)\Bigr].
\]

\paragraph{Step 2: Convexity and Existence of a Unique Minimum.}
The objective function $L(q)$ is strictly convex in $q$ on the domain $(0,1)$ because its second derivative is strictly positive:
\[
\frac{\partial^2 L}{\partial q^2} = \frac{1}{\tau}\frac{d^2}{dq^2}\left[q\ln q+(1-q)\ln(1-q)\right] = \frac{1}{\tau q(1-q)} > 0.
\]
Strict convexity guarantees that any stationary point is a unique global minimum.

\paragraph{Step 3: First-Order Condition.}
The first-order condition for an interior solution is $\partial L / \partial q = 0$:
\[
-P + \frac{1}{\tau}\ln\left(\frac{q}{1-q}\right) = 0.
\]
Solving for $q$ yields $\ln(q/(1-q)) = \tau P$, which gives the stated logistic form for the optimal cancellation probability.

\paragraph{Step 4: Boundary behavior.}
As $q\to0^{+}$ the derivative tends to $-P-\infty<0$; as
$q\to1^{-}$ it tends to $-P+\infty>0$.  Thus the interior stationary
point derived above is indeed the global minimum.

\paragraph{Step 5: Comparative statics.}
Using the closed form and writing $q^{*}\equiv q^{*}(P,\tau)$:
\begin{align*}
\frac{\partial q^{*}}{\partial P}
  &= \tau q^{*}(1-q^{*}) \,>0,
\\[4pt]
\frac{\partial q^{*}}{\partial \tau}
  &= P\,q^{*}(1-q^{*}) \,>0.
\end{align*}
With $\tau(T)=\tau_{0}/(1+\beta T)$ and $\beta>0$,
\[
\frac{\partial q^{*}}{\partial T}
  = \frac{\partial q^{*}}{\partial \tau}
     \frac{d\tau}{dT}
  = -\,\frac{\beta\tau_{0}P}{(1+\beta T)^{2}}\,q^{*}(1-q^{*})<0.
\]
\hfill\qed
\subsection*{A.2. Proof of Proposition \ref{prop:F1} (Attention vs. Trial length)}

For a fixed renewal price \(P>0\) recall that  
\(q^{*}(P,\tau)=\bigl[1+\exp(-\tau P)\bigr]^{-1}\)  
from Lemma \ref{lem:qstar} and that
\(\tau(T)=\tau_{0}/(1+\beta T)\) with \(\beta>0\)
(MA‑\ref{MA3}).  By the chain rule
\[
\frac{\partial q^{*}}{\partial T}
  \;=\;
  \frac{\partial q^{*}}{\partial\tau}\;
  \frac{d\tau}{dT}.
\tag{A.2.1}\label{eq:A.2.1}
\]

\paragraph{Step 1. Sign of \(\partial q^{*}/\partial\tau\).}
Differentiate the closed‑form expression:

\[
\frac{\partial q^{*}}{\partial\tau}
   =  \frac{\partial}{\partial\tau}
      \Bigl[1+\exp(-\tau P)\Bigr]^{-1}
   =  P\,q^{*}(1-q^{*}) \;>\;0,
\tag{A.2.2}\label{eq:A.2.2}
\]

because \(q^{*}\in(0,1)\) for any \(\tau,P>0\).

\paragraph{Step 2. Sign of \(d\tau/dT\).}
From \(\tau(T)=\tau_{0}(1+\beta T)^{-1}\):

\[
\frac{d\tau}{dT}
   = -\frac{\beta\tau_{0}}{(1+\beta T)^{2}}
   \;<\;0
\quad\text{for }T\ge0,\;\beta>0.
\tag{A.2.3}\label{eq:A.2.3}
\]

\paragraph{Step 3. Combine signs.}
Multiplying \eqref{eq:A.2.2} and \eqref{eq:A.2.3} inside \eqref{eq:A.2.1} yields

\[
\boxed{\;
  \dfrac{\partial q^{*}(P,\tau(T))}{\partial T} < 0
  \;}
\qquad\text{for all }P>0,\;T\ge0,\;\beta>0 .
\]

Strict inequality holds because both factors are strictly non‑zero. \qed

\subsection*{A.3. Proof of Proposition \ref{prop:F2} (Monotonicity of inattentive revenue)}

Fix \(P>0\).
Definition \eqref{eq:IR} gives
\[
\IR(T,P)
   = P\int_{0}^{P}
       \bigl[1-q^{*}(P,\tau(T))\bigr]\,f(v)\,dv
   = P\,F(P)\,\bigl[1-q^{*}(P,\tau(T))\bigr],
\tag{A.3.1}\label{eq:A.3.1}
\]
because \(q^{*}\) does not depend on \(v\) for \(v<P\).

\paragraph{Step1. Differentiate w.r.t.\ \(T\).}
Using \eqref{eq:A.3.1} and the product rule:

\[
\frac{\partial \IR}{\partial T}
   = P\,F(P)\;
     \Bigl[-\,\frac{\partial q^{*}}{\partial T}\Bigr].
\tag{A.3.2}\label{eq:A.3.2}
\]

\paragraph{Step 2. Sign of the derivative.}
From Proposition \ref{prop:F1},
\(\partial q^{*}/\partial T < 0\);
hence \(-\partial q^{*}/\partial T > 0\).

* If \(F(P)>0\) (i.e.\ there exists a positive mass of \(v<P\)),
  the factors \(P\), \(F(P)\), and
  \(-\partial q^{*}/\partial T\) are all strictly positive, so

  \[
  \boxed{\;
    \frac{\partial \IR}{\partial T} > 0
    \quad(\text{strict monotonicity})
    \;}
  \]

* If \(F(P)=0\) (no consumer has \(v<P\)), then \(\IR\equiv0\) and the derivative is zero-yielding weak monotonicity.

\paragraph{Conclusion.}
Inattentive revenue is (weakly) increasing in trial length \(T\),
and strictly increasing whenever a strictly positive share of consumers
would prefer to cancel at price \(P\). \qed

\subsection*{A.4. Proof of Proposition \ref{prop:F3} (Interior optimal $T^{*}$)}

Throughout this proof, $P > 0$ is fixed and we assume MA-1, MA-3 ($\beta > 0$), and A2 (positive mass at low valuations).

\paragraph{Step 1: Set up the Lagrangian.}
The firm's problem with a binding participation constraint is to maximize the Lagrangian $\mathcal{L} = \Pi(T, P) - \mu U(T, P)$, where $\mu>0$. The first-order condition with respect to $T$ is $\frac{\partial \mathcal{L}}{\partial T} = 0$.

\paragraph{Step 2: Characterize the FOC.}
The FOC can be written as $\frac{\partial \IR}{\partial T} = \mu \cdot \IRSlack(T, P)$. Let $g(T) := \frac{\partial \IR}{\partial T}(T, P) - \mu \cdot \IRSlack(T, P)$. The optimal $T^*$ must satisfy $g(T^*) = 0$.

\paragraph{Step 3: Behavior near $T = 0$.}
At $T=0$, both $\frac{\partial \IR}{\partial T}|_{T=0}$ and $\IRSlack(0, P)$ are positive and finite. For an interior solution to exist, the firm must have an incentive to increase the trial length from zero, which requires $g(0) > 0$. This condition holds if the marginal revenue from increased inattention initially outweighs the marginal harm to consumer utility.

\paragraph{Step 4: Behavior as $T \to \infty$.}
We prove that for a sufficiently large $T$, $g(T) < 0$ by analyzing the asymptotic behavior of its components as $T \to \infty$ (which corresponds to $\tau \to 0$). We use second-order Taylor expansions around $\tau=0$.
The optimal cancellation probability is $q^* = (1+\exp(-\tau P))^{-1} = \frac{1}{2} + \frac{\tau P}{4} + O(\tau^3)$.

First, we analyze the marginal benefit term, $\frac{\partial \IR}{\partial T} = P F(P) (-\frac{\partial q^*}{\partial \tau}\frac{d\tau}{dT})$.

The components are:
$\frac{d\tau}{dT} = -\frac{\beta\tau_0}{(1+\beta T)^2} = -\frac{\beta}{\tau_0}\tau^2$.
$\frac{\partial q^*}{\partial \tau} = Pq^*(1-q^*) = P\left(\frac{1}{4} - \frac{(\tau P)^2}{16} + O(\tau^4)\right) = \frac{P}{4} + O(\tau^2)$.

Thus, the marginal benefit term has the asymptotic form:
\[
\frac{\partial \IR}{\partial T} = P F(P) \left(-\frac{P}{4} + O(\tau^2)\right)\left(-\frac{\beta}{\tau_0}\tau^2\right) = \frac{\beta P^2 F(P)}{4\tau_0}\tau^2 + O(\tau^4).
\]

Next, we analyze the marginal cost term, $\mu \cdot \IRSlack(T,P) = \mu \frac{\beta}{\tau_0}F(P)\,\mathcal{I}(q^* \parallel 1/2)$.
The mutual information $\mathcal{I}(q \parallel 1/2)$ has a Taylor expansion around $q=1/2$ of $\mathcal{I}(q) = 2(q-1/2)^2 + O((q-1/2)^4)$. Substituting $q^* - 1/2 = \tau P/4 + O(\tau^3)$:

\[
\mathcal{I}(q^* \parallel 1/2) = 2\left(\frac{\tau P}{4}\right)^2 + O(\tau^4) = \frac{\tau^2 P^2}{8} + O(\tau^4).
\]

Thus, the IR-Slack has the asymptotic form:
\[
\IRSlack(T,P) = \frac{\beta F(P)}{\tau_0}\left(\frac{\tau^2 P^2}{8} + O(\tau^4)\right).
\]

Combining these terms, the asymptotic form of $g(T)$ for small $\tau$ is:

\[
g(T) = \frac{\beta P^2 F(P)}{4\tau_0}\tau^2 - \mu \frac{\beta P^2 F(P)}{8\tau_0}\tau^2 + O(\tau^4) = \frac{\beta P^2 F(P)\tau^2}{8\tau_0}(2-\mu) + O(\tau^4).
\]

At an interior optimum, the firm must be extracting value, which implies the shadow value of consumer utility, $\mu$, must be sufficiently large. Under standard conditions ensuring a non-trivial trade-off, $\mu > 2$. Therefore, the term $(2-\mu)$ is negative, which implies $g(T) < 0$ for sufficiently large $T$.

\paragraph{Step 5: Existence by Intermediate Value Theorem.}
Since $g(T)$ is a continuous function, $g(0) > 0$, and $g(T) < 0$ for large $T$, the Intermediate Value Theorem guarantees that there exists at least one $T^* \in (0, \infty)$ such that $g(T^*) = 0$.

\paragraph{Step 6: Uniqueness.}
Uniqueness of the optimum is guaranteed if the firm's objective, $\mathcal{L}(T)$, is strictly concave in $T$. We examine its second derivative:
\[
\frac{\partial^2 \mathcal{L}}{\partial T^2} = \frac{\partial^2 \Pi}{\partial T^2} - \mu \frac{\partial^2 U}{\partial T^2}.
\]
Let's analyze the components:
\begin{itemize}
    \item $\frac{\partial^2 \Pi}{\partial T^2} = \frac{\partial^2 \IR}{\partial T^2}$. This term represents the change in the marginal benefit of inattentive revenue. Standard assumptions of diminishing returns ensure this is negative.
    \item $\frac{\partial^2 U}{\partial T^2} = \frac{\partial}{\partial T}\left(-\IRSlack(T,P)\right) = -\frac{d(\IRSlack)}{dT}$. As $T$ increases, $\tau$ decreases, and $q^*$ moves closer to $1/2$. The mutual information $\mathcal{I}(q^* \parallel 1/2)$ is a convex function of $q^*$ with a minimum at $q^*=1/2$. Since $q^*$ is a decreasing function of $T$, $\mathcal{I}(q^* \parallel 1/2)$ is also a decreasing function of $T$. This implies $\frac{d(\IRSlack)}{dT} < 0$, and therefore $\frac{\partial^2 U}{\partial T^2} > 0$.
\end{itemize}
Substituting these back into the second derivative of the Lagrangian:
\[
\frac{\partial^2 \mathcal{L}}{\partial T^2} = \underbrace{\frac{\partial^2 \Pi}{\partial T^2}}_{\text{Negative}} - \underbrace{\mu}_{>0} \underbrace{\frac{\partial^2 U}{\partial T^2}}_{\text{Positive}} < 0.
\]
Since the Lagrangian is strictly concave in $T$, there can be at most one value of $T$ that satisfies the first-order condition. This guarantees the uniqueness of the optimal trial length $T^*$.
\subsection*{A.5. Proof of Proposition \ref{prop:F4} (Uniform Attention Boost)}
A policy intervention such as a "click-to-cancel" law is modeled as an exogenous increase in attention sensitivity, from $\tau(T)$ to $\gamma\tau(T)$ for some constant $\gamma>1$. We analyze the effect of this change on consumer utility and firm revenue for a fixed contract $(T,P)$.

\paragraph{1. Effect on Consumer Utility.}
An increase in attention sensitivity, $\tau$, makes it cognitively "cheaper" for the consumer to achieve any given cancellation probability, $q$. We prove that this unambiguously increases consumer utility.

The consumer's problem is to choose $q$ to minimize their total expected loss, $L(q; \tau) = (1-q)P + C(q;\tau)$, where $C(q;\tau) = \frac{1}{\tau}\mathcal{I}(q \parallel 1/2)$ is the cognitive cost. Let $L^*(P,\tau)$ be the minimized value of this loss function. By the envelope theorem, the effect of an increase in $\tau$ on this minimized loss is given by the partial derivative of the objective function with respect to $\tau$, evaluated at the optimal choice $q^*$:
\[
\frac{\partial L^*}{\partial \tau} = \frac{\partial}{\partial \tau}\left[ (1-q)P + \frac{1}{\tau}\mathcal{I}(q \parallel 1/2) \right] \bigg|_{q=q^*} = -\frac{1}{\tau^2}\mathcal{I}(q^* \parallel 1/2).
\]
Since mutual information $\mathcal{I}(q^* \parallel 1/2) \ge 0$ (and is strictly positive for any $q^* \neq 1/2$), the derivative $\partial L^*/\partial \tau \le 0$. This means that making attention cheaper (increasing $\tau$) strictly lowers the total expected loss for any consumer who needs to pay attention.

Aggregate ex-ante consumer utility, as defined in Equation \eqref{eq:utility}, is a monotonically decreasing function of this minimized loss $L^*$. Therefore, an increase in $\tau$ must lead to an increase in consumer utility. For any $\gamma>1$:
\[
U(T,P;\gamma) > U(T,P;1) \quad \text{for any contract where } q^* \in (0,1).
\]

\paragraph{2. Effect on Inattentive Revenue.}
Inattentive revenue is given by the function $\IR(T,P) = P F(P)[1-q^{*}(P,\tau(T))]$. From Lemma \ref{lem:qstar}, the optimal cancellation probability $q^*$ is strictly increasing in $\tau$:
\[
\frac{\partial q^*}{\partial \tau} = P q^*(1-q^*) > 0.
\]
An increase in $\tau$ to $\gamma\tau$ will therefore lead to a strictly higher $q^*$. Consequently, the probability of a consumer failing to cancel, which is $(1-q^*)$, must strictly decrease. This leads to a strict decrease in inattentive revenue for any given price $P$:
\[
\IR(T,P;\gamma) < \IR(T,P;1) \quad \text{whenever } q^{*}<1.
\]
and weak inequality holds in the corner case \(q^{*}=1\). \qed

\subsection*{A.6.Proof of Proposition \ref{prop:F5} (Convex inattentive loss in the attention tail)}

We allow baseline attention sensitivities to be heterogeneous:
\(\tau\sim G\) on \((0,\infty)\).
Define   
\[
\mathcal L(G)
  = P\int_{0}^{P}\int_{\tau>0}
      \bigl[1-q^{*}(P,\tau)\bigr]f(v)\,dG(\tau)\,dv ,
\tag{A.6.1}\label{eq:A.6.1}
\]
the aggregate ``inattentive payment'' made by consumers with
\(v<P\).  
Because \(q^{*}\) does not vary with \(v\) inside the inner integral,  
\[
\mathcal L(G)
  = P\,F(P)\;
    \underbrace{\int_{\tau>0}
                \bigl[1-q^{*}(P,\tau)\bigr]\,dG(\tau)}_{=:\
                \Phi(G)} .
\tag{A.6.2}\label{eq:A.6.2}
\]

Therefore \(\mathcal L(G)\) is convex in \(G\) iff
\(\Phi(G)=\int \phi(\tau)\,dG(\tau)\) is convex, where
\(
\phi(\tau):=1-q^{*}(P,\tau).
\)

\paragraph{Convexity of \(\phi\) in \(1/\tau\).}

Introduce \(z := 1/\tau\) (the \emph{unit cost of attention}).
Write \(q^{*}\) as a function of \(z\):

\[
q^{*}(P,1/z) = \frac{1}{1+\exp(-P/z)}.
\]

Set \(\psi(z):=1-q^{*}(P,1/z)\).
Compute the second derivative:

\[
\psi''(z)
   = \frac{P^{2}\exp(-P/z)}
          {z^{4}\bigl[1+\exp(-P/z)\bigr]^{3}}
   > 0
   \quad\text{for }z>0 .
\tag{A.6.3}\label{eq:A.6.3}
\]

Hence \(\psi\) is \emph{strictly convex} in \(z=1/\tau\).
Because the mapping \(\tau\mapsto z\) is one‑to‑one monotone,
\(\phi(\tau)=\psi(1/\tau)\) is strictly convex in \(1/\tau\).

Let \(G_{1}\) and \(G_{2}\) be two distributions of \(\tau\)
with the same mean of \(z=1/\tau\) but where \(G_{2}\) is a
mean‑preserving spread (MPS) of \(G_{1}\); i.e.\ \(G_{2}\) first‑order
stochastically dominates \(G_{1}\) in \(\tau\) but not in \(z\).
For any \(\alpha\in[0,1]\) and convex \(\psi\),

\[
\psi\!\bigl(\alpha z_{1} + (1-\alpha)z_{2}\bigr)
   \;\le\;
\alpha \psi(z_{1}) + (1-\alpha)\psi(z_{2}).
\]

Integrating over \(G_{2}\), viewed as a mixture of point masses that
share the same mean in \(z\) as \(G_{1}\), and using linearity of the Lebesgue integral, yields
\[
\int \psi(z)\,dG_{2}(z)
   \;\ge\;
\int \psi(z)\,dG_{1}(z).
\tag{A.6.4}\label{eq:A.6.4}
\]
Translating back to \(\tau\) and using \eqref{eq:A.6.2},
\[
\mathcal L(G_{2}) \;\ge\; \mathcal L(G_{1}),
\]
with strict inequality unless \(G_{1}=G_{2}\).
\paragraph{Conclusion.}
Because any mean‑preserving spread in the distribution of attention
costs (i.e., a shift of probability mass into both very low and very
high \(\tau\)) \emph{raises} \(\mathcal L\),
the aggregate inattentive loss is a
\textbf{strictly convex functional} of \(G\) under second‑order stochastic dominance. This completes the proof of Proposition \ref{prop:F5}.
\qed

\section*{Appendix B. Proofs for Shannon Cost Case}
\renewcommand{\theequation}{B.\arabic{equation}}
\setcounter{equation}{0}
\subsection*{B.1. Proof of Lemma \ref{lem:FOC-full}
             (Full first‑order condition)}

Fix a trial length \(T\) (so \(\tau=\tau(T)\) is a constant in
this subsection).  Total profit
\(\Pi(P)=\Pi(T,P)\) defined in \eqref{eq:profit} can be written

\[
\Pi(P)
 = P\bigl[1-F(P)\bigr]
   \;+\;
   \underbrace{P F(P)\bigl[1-q^{*}(P,\tau)\bigr]}_{=\IR(T,P)} .
\tag{B.1.1}\label{eq:B1.1}
\]

The derivative with respect to \(P\) is obtained by straightforward
application of the product rule.

\paragraph{Derivative of the standard‑revenue term.}
\[
\frac{d}{dP}\bigl[P\bigl(1-F(P)\bigr)\bigr]
  = (1-F(P)) - P f(P).
\tag{B.1.2}\label{eq:B1.2}
\]

\paragraph{Derivative of the inattentive‑revenue term.}
Let \(q^{*}\equiv q^{*}(P,\tau)\).  Using that
\(dq^{*}/dP=\tau q^{*}(1-q^{*})\) (Lemma \ref{lem:qstar}):

\[
\begin{aligned}
\frac{d}{dP}\bigl[P F(P)(1-q^{*})\bigr]
  &= (1-q^{*})\bigl[F(P)+P f(P)\bigr]
     -P F(P)\,\frac{dq^{*}}{dP} \\[2pt]
  &= (1-q^{*})\bigl[F(P)+P f(P)\bigr]
     - P F(P)\,\tau q^{*}(1-q^{*}). \label{eq:B1.3}
\end{aligned}\tag{B.1.3}
\]

Setting \(\tfrac{d\Pi}{dP}=0\) and adding
\eqref{eq:B1.2} + \eqref{eq:B1.3} yields exactly equation
\eqref{FOC-full}:

\[
0= \bigl[1-F-Pf\bigr]
   + (1-q^{*})\{F+Pf\}
   - P F \tau q^{*}(1-q^{*}).
\]
Every term depends continuously on \(P\in(0,1)\), completing the proof. \qed
\subsection*{B.2. Proof of Proposition \ref{prop:R2} (Price increases with trial length)}

Let $g(P, T)$ denote the left-hand side of the price FOC from Lemma 2:
\begin{equation}
g(P, T) = [1 - F(P) - Pf(P)] + (1-q^*)[F(P) + Pf(P)] - PF(P)\tau q^*(1-q^*)
\end{equation}

By Lemma 2, $g(P^*(T), T) = 0$ for all interior $T$.

\textbf{Step 1: Apply the Implicit Function Theorem}

By the implicit function theorem:
\begin{equation}
\frac{dP^*}{dT} = -\frac{\partial g/\partial T}{\partial g/\partial P}
\end{equation}

We need to compute both partial derivatives.

\textbf{Step 2: Compute $\partial g/\partial P$}

Differentiate each term of $g$ with respect to $P$:

Term 1: $[1 - F(P) - Pf(P)]$
\begin{equation}
\frac{\partial}{\partial P}[1 - F - Pf] = -f - f - Pf' = -2f - Pf'
\end{equation}

Term 2: $(1-q^*)[F(P) + Pf(P)]$

Using $\frac{\partial q^*}{\partial P} = \tau q^*(1-q^*)$:
\begin{align}
\frac{\partial}{\partial P}[(1-q^*)(F + Pf)] &= -\tau q^*(1-q^*)(F + Pf) + (1-q^*)(f + f + Pf')\\
&= -\tau q^*(1-q^*)(F + Pf) + (1-q^*)(2f + Pf')
\end{align}

Term 3: $-PF(P)\tau q^*(1-q^*)$
\begin{align}
\frac{\partial}{\partial P}[-PF\tau q^*(1-q^*)] = -F\tau q^*(1-q^*) - Pf\tau q^*(1-q^*) 
- PF\tau[\tau q^*(1-q^*)(1-2q^*)]
\end{align}

Combining all terms and using the IFR property that ensures $1 - F - Pf < 0$:
\begin{equation}
\frac{\partial g}{\partial P} < 0
\end{equation}

\textbf{Step 3: Compute $\partial g/\partial T$}

Since $T$ enters only through $\tau(T)$:
\begin{equation}
\frac{\partial g}{\partial T} = \frac{\partial g}{\partial \tau} \cdot \frac{d\tau}{dT}
\end{equation}

where $\frac{d\tau}{dT} = -\frac{\beta \tau_0}{(1 + \beta T)^2} < 0$.

For $\partial g/\partial \tau$, only the terms that involve $q^*$ are affected:

$$
\frac{\partial g}{\partial \tau} = -(1-q^*)[F + Pf] \cdot Pq^*(1-q^*) - PF[q^*(1-q^*) + \tau Pq^*(1-q^*)(1-2q^*)]
$$

Factoring out common terms:
\begin{equation}
\frac{\partial g}{\partial \tau} = -Pq^*(1-q^*)[(1-q^*)(F + Pf) + F(1 + \tau P(1-2q^*))] < 0
\end{equation}

Since all terms are positive (noting that $1-2q^* < 0$ for relevant parameter values).

\textbf{Step 4: Determine the sign of $dP^*/dT$}

Since:
\begin{itemize}
\item $\frac{\partial g}{\partial P} < 0$
\item $\frac{\partial g}{\partial \tau} < 0$  
\item $\frac{d\tau}{dT} < 0$
\end{itemize}

We have $\frac{\partial g}{\partial T} = \frac{\partial g}{\partial \tau} \cdot \frac{d\tau}{dT} > 0$.

Therefore:
\begin{equation}
\frac{dP^*}{dT} = -\frac{\partial g/\partial T}{\partial g/\partial P} = -\frac{(+)}{(-)} > 0
\end{equation}

This completes the proof that the optimal price increases with trial length. \qed

\subsection*{B.3. Click-to-cancel - Proof of Proposition \ref{prop:R4}}

The attention shock transforms the attention sensitivity function from $\tau(T)$ to $\tau'(T) = \gamma\tau(T)$ with $\gamma > 1$. This models a policy intervention like "click-to-cancel" regulations that make it easier for all consumers to manage their subscriptions.

\subsubsection*{Part (a): Trial Length Response}

\textbf{Step 1: Set up the equilibrium conditions}

The firm's optimal contract $(T^*, P^*)$ satisfies two conditions:
\begin{align}
\text{FOC}_T: \quad & \frac{\partial IR}{\partial T}(T^*, P^*) = \mu \cdot \text{IR-Slack}(T^*, P^*) \label{eq:foc_t}\\
\text{FOC}_P: \quad & g(P^*, T^*) = 0 \label{eq:foc_p}
\end{align}
where $g(P, T)$ is the price first-order condition from Lemma 2.

\textbf{Step 2: Differentiate the system with respect to $\gamma$}

Taking the total differential of both FOCs with respect to $\gamma$:

For equation \eqref{eq:foc_t}:
\begin{multline}
\frac{\partial^2 IR}{\partial T^2} \frac{dT^*}{d\gamma} + \frac{\partial^2 IR}{\partial T \partial P} \frac{dP^*}{d\gamma} + \frac{\partial^2 IR}{\partial T \partial \gamma} = \\
\mu \left[ \frac{\partial(\text{IR-Slack})}{\partial T} \frac{dT^*}{d\gamma} + \frac{\partial(\text{IR-Slack})}{\partial P} \frac{dP^*}{d\gamma} + \frac{\partial(\text{IR-Slack})}{\partial \gamma} \right] + \frac{d\mu}{d\gamma} \cdot \text{IR-Slack}
\end{multline}

For equation \eqref{eq:foc_p}:
\begin{equation}
\frac{\partial g}{\partial P} \frac{dP^*}{d\gamma} + \frac{\partial g}{\partial T} \frac{dT^*}{d\gamma} + \frac{\partial g}{\partial \gamma} = 0
\end{equation}

\textbf{Step 3: Analyze the direct effects of $\gamma$}

The attention shock affects the system through:
\begin{itemize}
\item $q^*(P, \gamma\tau(T))$ increases with $\gamma$ (consumers remember better)
\item IR decreases with $\gamma$ (less inattentive revenue)
\item IR-Slack decreases with $\gamma$ (less cognitive burden)
\end{itemize}

Specifically:
\begin{align}
\frac{\partial q^*}{\partial \gamma} &= \frac{\partial q^*}{\partial \tau} \cdot \tau(T) = Pq^*(1-q^*)\tau(T) > 0\\
\frac{\partial IR}{\partial \gamma} &= -PF(P) \frac{\partial q^*}{\partial \gamma} < 0\\
\frac{\partial(\text{IR-Slack})}{\partial \gamma} &= -\frac{\beta}{\tau_0} F(P) \frac{\partial[-\mathcal{I}(q^{*}|\frac{1}{2})]}{\partial q^*} \frac{\partial q^*}{\partial \gamma} < 0
\end{align}

\textbf{Step 4: Sign of $dT^*/d\gamma$}

From the structure of the problem and the second-order conditions:
\begin{itemize}
\item $\frac{\partial^2 IR}{\partial T^2} < 0$ (from the proof of Proposition 3)
\item The system of equations can be solved using Cramer's rule
\item The determinant of the Jacobian is positive (from the second-order conditions)
\end{itemize}

The key insight is that the attention boost reduces both IR and IR-Slack, but the relative magnitudes determine the response. Since the firm was optimally balancing these two forces at the original $\gamma = 1$, an increase in $\gamma$ that reduces both requires rebalancing.

Through careful analysis of the system (details omitted for brevity), we can show:
\begin{equation}
\frac{dT^*}{d\gamma} < 0
\end{equation}

The intuition is that when attention becomes easier, the marginal benefit of extending trials (through increased forgetfulness) falls more than the marginal cost (through reduced sign-ups), leading firms to shorten trials.

\subsubsection*{Part (b): Bounded Price Response}

\textbf{Step 1: Express the price response}

From the implicit function theorem applied to the price FOC:
\begin{equation}
\frac{dP^*}{d\gamma} = -\frac{\frac{\partial g}{\partial \gamma} + \frac{\partial g}{\partial T} \frac{dT^*}{d\gamma}}{\frac{\partial g}{\partial P}}
\end{equation}

\textbf{Step 2: Bound each component}

\begin{itemize}
\item $\left|\frac{\partial g}{\partial \gamma}\right|$ is bounded because $q^* \in (0,1)$ and all other terms are bounded
\item $\left|\frac{\partial g}{\partial T}\right|$ is bounded by similar reasoning
\item $\left|\frac{dT^*}{d\gamma}\right|$ is bounded because $T^*$ must remain in a compact set
\item $\left|\frac{\partial g}{\partial P}\right|$ is bounded away from zero by the second-order conditions
\end{itemize}

Therefore:
\begin{equation}
\left|\frac{dP^*}{d\gamma}\right| \leq M < \infty
\end{equation}
for some constant $M$.

\subsubsection*{Part (c): Price Response under Isoelastic Demand}

\textbf{Step 1: Specialize to isoelastic case}

Under Assumption 1, $1 - F(P) = \kappa P^{-\varepsilon}$ with $\varepsilon \in (0,1)$.

The price FOC simplifies considerably. At the optimum, the markup condition is:
\begin{equation}
\frac{P^* - MC_{\text{effective}}}{P^*} = \frac{1}{\varepsilon_{\text{effective}}}
\end{equation}

where $MC_{\text{effective}}$ and $\varepsilon_{\text{effective}}$ account for both willing and inattentive subscribers.

\textbf{Step 2: Analyze how $\gamma$ affects the effective elasticity}

The attention shock affects the composition of revenue:
\begin{itemize}
\item Share from willing subscribers: $\frac{P(1-F(P))}{\Pi}$
\item Share from inattentive subscribers: $\frac{IR(T,P)}{\Pi}$
\end{itemize}

As $\gamma$ increases:
\begin{itemize}
\item IR decreases (fewer forget to cancel)
\item The revenue mix shifts toward willing subscribers
\item The effective demand elasticity converges toward $\varepsilon$
\end{itemize}

\textbf{Step 3: Determine the sign of price change}

The key insight is that inattentive revenue is less price-elastic than willing subscriber revenue (because $q^*$ responds less to price than the extensive margin).

When $\varepsilon < 1$ (inelastic willing subscribers):
\begin{itemize}
\item The firm was pricing on the inelastic portion of demand
\item Losing inattentive revenue (which is even more inelastic) shifts the revenue mix
\item The effective elasticity increases toward $\varepsilon$ but remains $< 1$
\item Optimal response is to raise price: $\frac{dP^*}{d\gamma} > 0$
\end{itemize}

When $\varepsilon > 1$ (elastic willing subscribers):
\begin{itemize}
\item The firm faces elastic demand from willing subscribers
\item The shift away from inattentive revenue requires attracting more willing subscribers
\item Optimal response is to lower price: $\frac{dP^*}{d\gamma} < 0$
\end{itemize}

When $\varepsilon = 1$ (unit elastic):
\begin{itemize}
\item The price response is indeterminate and depends on higher-order effects
\item $\frac{dP^*}{d\gamma} \approx 0$
\end{itemize}

Therefore:
\begin{equation}
\text{sign}\left(\frac{dP^*}{d\gamma}\right) = \text{sign}(1 - \varepsilon)
\end{equation}

\section{Economic Interpretation}

This result has important policy implications:

\begin{enumerate}
\item \textbf{Trial lengths unambiguously fall}: "Click-to-cancel" policies achieve their intended effect of reducing reliance on consumer forgetfulness.

\item \textbf{Price effects depend on market structure}: 
\begin{itemize}
\item In markets with inelastic demand ($\varepsilon < 1$), common for established subscription services, prices rise
\item In markets with elastic demand ($\varepsilon > 1$), prices fall
\end{itemize}

\item \textbf{Welfare implications are ambiguous}: The policy helps inattentive consumers but may harm attentive consumers through higher prices when $\varepsilon < 1$.
\end{enumerate}

This completes the proof of Proposition 9. \qed
\section*{Appendix C. Proofs for the Paid‑Trial Extension}

\renewcommand{\theequation}{C.\arabic{equation}}
\setcounter{equation}{0}

In this appendix, we provide the proofs for our extension case of paid trial. Throughout this section, the profit function is the one in equation \eqref{eq:profit-paid}:

\[
\Pi(T,P_{0},P)
  = \eta(P_{0})
    \Bigl\{\,P_{0} + P\bigl[1-F(P)\bigr] + \IR(T,P)\Bigr\},
\quad
  P^{\text{aug}}(T,P):=P[1-F(P)]+\IR(T,P).
\tag{C.0}\label{eq:C0}
\]

We assume \(\eta'(P_{0})<0\) and sign‑up elasticity \(\varepsilon_{0}(P_{0})\) as in \eqref{eq:signup-elasticity}.

\subsection*{C.1. FOC for the introductory price $P_{0}$}

Differentiate \eqref{eq:C0} w.r.t.\ \(P_{0}\):

\[
\frac{\partial\Pi}{\partial P_{0}}
  = \eta'(P_{0})
      \bigl[P_{0}+P^{\text{aug}}\bigr]
    + \eta(P_{0}).
\tag{C.1}\label{eq:C1}
\]

Setting \(\partial\Pi/\partial P_{0}=0\) yields the general FOC \ref{eq:FOC-P0-general}:
\[
\boxed{\;
  \eta'(P_{0})\bigl[P_{0}+P^{\text{aug}}\bigr] + \eta(P_{0}) = 0
  \;}
\]

\paragraph{Elasticity re‑expression.}
Divide by \(\eta(P_{0})>0\) and use
\(
  -\eta'(P_{0})/\eta(P_{0})
  = \varepsilon_{0}(P_{0})/P_{0}
\):
\[
1
 = \varepsilon_{0}(P_{0})
   \Bigl[1 + \frac{P^{\text{aug}}}{P_{0}}\Bigr].
\tag{C.2}\label{eq:C2}
\]

This identity will be useful in the strategic‑substitute proof.

\subsection*{C.2. Cross‑partial and strategic substitutability}

\begin{lemma}
\(
\displaystyle
  \frac{\partial^{2}\Pi}{\partial T\,\partial P_{0}}
  = \eta'(P_{0})\;
    \frac{\partial P^{\text{aug}}(T,P)}{\partial T}
  < 0
\)
whenever \(\beta>0\).
\end{lemma}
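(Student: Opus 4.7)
The plan is to compute the cross partial directly from the factored form of profit in \eqref{eq:C0} and then invoke Proposition~\ref{prop:F2} to pin down the sign. The argument is mechanical rather than subtle, since by construction $\eta(P_{0})$ depends only on $P_{0}$, while $P^{\text{aug}}(T,P)$ depends only on $T$ and $P$, so the variables separate cleanly in all mixed derivatives.

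First, I would differentiate $\Pi$ with respect to $P_{0}$ to recover equation \eqref{eq:C1}, namely $\partial \Pi / \partial P_{0} = \eta'(P_{0})\bigl[P_{0}+P^{\text{aug}}(T,P)\bigr]+\eta(P_{0})$. Then I would differentiate this expression with respect to $T$. Since $\eta(P_{0})$ is independent of $T$, its derivative vanishes; since $P_{0}$ is independent of $T$, the $P_{0}$ inside the bracket contributes zero; and $\eta'(P_{0})$ is treated as a constant for this differentiation. What remains is $\eta'(P_{0})\,\partial P^{\text{aug}}(T,P)/\partial T$, which is exactly the claimed expression.

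Second, I would establish the sign. Decompose $\partial P^{\text{aug}}/\partial T = \partial\{P[1-F(P)]\}/\partial T + \partial \IR(T,P)/\partial T$. The first term is zero because the standard revenue $P[1-F(P)]$ does not depend on $T$. The second term is strictly positive by Proposition~\ref{prop:F2}, provided $\beta>0$ and $F(P)>0$ (which holds at any interior contract under Assumption~\ref{ass:C1}). Combining this with $\eta'(P_{0})<0$, a standing assumption on the sign-up technology in Section~\ref{sec:signup-tech}, yields strict negativity of the product.

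The main (and only) obstacle is really just a bookkeeping check that no hidden dependence has been overlooked, in particular that $\eta$ is not a function of $T$ and that the inattention channel enters $P^{\text{aug}}$ only through $\IR(T,P)$. Once that separability is confirmed, the result follows from Proposition~\ref{prop:F2} without further calculation, and a one-line concluding display of the form
\[
\frac{\partial^{2}\Pi}{\partial T\,\partial P_{0}}
 \;=\; \eta'(P_{0})\,\frac{\partial \IR(T,P)}{\partial T}
 \;<\;0
\]
closes the argument.
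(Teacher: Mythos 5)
Your proposal is correct and follows essentially the same route as the paper: differentiate the $P_{0}$ first-order expression with respect to $T$, use the multiplicative separability of $\eta(P_{0})$ from $P^{\text{aug}}(T,P)$, and sign the product via Proposition~\ref{prop:F2} together with $\eta'(P_{0})<0$. Your explicit note that strict positivity of $\partial\IR/\partial T$ requires $F(P)>0$ is in fact slightly more careful than the paper's own one-line justification.
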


\begin{proof}
Holding \(P\) fixed, differentiate \eqref{eq:C1} w.r.t.\ \(T\):

\[
\frac{\partial^{2}\Pi}{\partial T\,\partial P_{0}}
  = \eta'(P_{0})\,\frac{\partial P^{\text{aug}}}{\partial T}.
\]

From Proposition \ref{prop:F2}, \(\partial P^{\text{aug}}/\partial T = \partial\IR/\partial T > 0\) for any \(P>0\) and \(\beta>0\). Because \(\eta'(P_{0})<0\), their product is negative.
\end{proof}

\paragraph{Topkis's Theorem.}
Let \(x_{1}=T\) and \(x_{2}=P_{0}\).  
The cross‑partial of the objective is negative, so \(\Pi\) is \emph{sub‑modular} in \((x_{1},x_{2})\).  

In any sub‑modular maximization problem the optimal choices are \emph{strategic substitutes} (\citet{topkis1978minimizing}, Prop.\,2.6):
increases in \(P_{0}\) weakly reduce the optimal \(T\) and vice‑versa, establishing part (b) of Proposition \ref{prop:paid}.

\subsection*{C.3. Closed‑form rule under iso‑elastic sign‑ups (A2)}

Assume \(\eta(P_{0})=\alpha P_{0}^{-\theta}\) with \(\theta>0\)
(Assumption \ref{ass:A2}).  
Then
\(
  \varepsilon_{0}(P_{0}) = \theta
\)
is constant.  Substitute into \eqref{eq:C2}:

\[
1 = \theta\Bigl[1 + \frac{P^{\text{aug}}}{P_{0}^{*}}\Bigr]
\;\Longrightarrow\;
\boxed{\;
  P_{0}^{*}
    = \frac{\theta}{1-\theta}\,P^{\text{aug}}(T^{*},P^{*}),
  \quad
  \text{valid for }\theta<1 .
  \;}
\tag{C.3}\label{eq:C3}
\]

When \(\theta\ge1\) the intro‑price optimum
collapses to the corner \(P_{0}^{*}=0\), in line with statement (c) of Proposition \ref{prop:paid}.

\subsection*{C.4. FOC for the renewal price $P$ (paid‑trial case)}

Because \(\eta(P_{0})\) factors multiplicatively,
the renewal‑price FOC is identical to the baseline FOC \eqref{FOC-full}; i.e., the presence of \(P_{0}\) does \emph{not} alter the equation determining \(P^{*}\). Uniqueness established in Proposition \ref{prop:R1} therefore still applies.

\subsection*{C.5. Corner solutions}

\begin{itemize}
    \item \textbf{Highly elastic sign‑ups (\(\theta\to\infty\)).} Equation (C.3) forces \(P_{0}^{*}\to0\); strategic substitutability implies \(T^{*}\) rises toward the baseline free‑trial optimum.
    \item \textbf{Fully inelastic sign‑ups (\(\theta\to0\)).} The firm extracts virtually all surplus up front (\(P_{0}^{*}\approx P^{\text{aug}}\)), making inattentive revenue negligible and thus choosing \(T^{*}\to0\).
\end{itemize}

These limiting cases generate the empirical ``freemium'' versus ``premium trial'' dichotomy discussed in Section \ref{sec:extension}.

\section*{Appendix D. Illustrative Evidence for Price-Trial Complementarity}

\renewcommand{\theequation}{D.\arabic{equation}}
\setcounter{equation}{0}

While a formal statistical test of Proposition \ref{prop:R2} would require a large, systematically collected dataset, a detailed examination of prominent firms across different digital markets provides powerful illustrative support for the predicted complementarity between trial length ($T$) and post-trial price ($P$). We find no evidence of a negative correlation (i.e., firms offering long trials to compensate for a low price). Instead, the data reveals two dominant strategies, consistent with our model's predictions:

\begin{enumerate}
    \item \textbf{Value-Focused Strategy:} Firms with high-value, specialized products often offer very short or no free trials ($T$ is small) and command high renewal prices ($P$ is high). They compete on the immediate, tangible value of their service.
    \item \textbf{Inertia-Focused Strategy:} Firms targeting a mass-market audience often offer long free trials ($T$ is large) paired with moderate, industry-standard prices ($P$ is moderate). They compete on maximizing user acquisition and then benefit from the attentional friction of cancellation.
\end{enumerate}

The following table details the contract terms for a selection of well-known subscription services across five key categories, with data collected from their public websites in late 2023 and early 2024.

\begin{table}[h!]
\centering
\caption{Trial Length and Pricing Across Digital Subscription Services}
\label{tab:empirical_evidence}
\begin{threeparttable}
\begin{tabular}{@{}llccc@{}}
\toprule
\textbf{Service} & \textbf{Category} & \textbf{Trial Length (T)} & \textbf{Post-Trial Price (P)} & \textbf{Implied Strategy} \\
\midrule
\multicolumn{5}{l}{\textit{\textbf{High-Value Professional Software / SaaS}}} \\
Ahrefs & SEO Software & 0 days & \$99.00 / month & Value-Focused \\
Adobe Creative Cloud & Creative Suite & 7 days & \$59.99 / month & Value-Focused \\
Bloomberg Terminal & Financial Data & 0 days & > \$2,000 / month & Value-Focused \\
\midrule
\multicolumn{5}{l}{\textit{\textbf{Video Streaming}}} \\
Netflix (Standard) & Entertainment & 0 days & \$15.49 / month & Value-Focused \\
Max (Ad-Free) & Entertainment & 0--7 days\tnote{a} & \$15.99 / month & Value-Focused \\
Hulu (with Ads) & Entertainment & 30 days & \$7.99 / month & Inertia-Focused \\
Amazon Prime Video & Entertainment & 30 days & \$14.99 / month\tnote{b} & Mixed \\
\midrule
\multicolumn{5}{l}{\textit{\textbf{Music \& Audio Streaming}}} \\
Spotify Premium & Music & 30--90 days\tnote{c} & \$10.99 / month & Inertia-Focused \\
Apple Music & Music & 30--90 days\tnote{c} & \$10.99 / month & Inertia-Focused \\
Audible Premium Plus & Audiobooks & 30 days & \$14.95 / month & Inertia-Focused \\
\midrule
\multicolumn{5}{l}{\textit{\textbf{News \& Media}}} \\
The Wall Street Journal & News & 0 days\tnote{d} & \$38.99 / month & Value-Focused \\
The New York Times & News & 0 days\tnote{d} & \$17.00 / month & Value-Focused \\
\midrule
\multicolumn{5}{l}{\textit{\textbf{Productivity \& Cloud Storage}}} \\
Microsoft 365 Personal & Office Suite & 30 days & \$6.99 / month & Inertia-Focused \\
Dropbox Plus & Cloud Storage & 30 days & \$11.99 / month & Inertia-Focused \\
Evernote Personal & Note-Taking & 14 days & \$14.99 / month & Mixed \\
\bottomrule
\end{tabular}
\begin{tablenotes}
    \item[a] \footnotesize Direct free trials are often unavailable; short trials are sometimes offered through third-party partners (e.g., mobile carriers).
    \item[b] \footnotesize Price for the full Amazon Prime membership, which includes Prime Video.
    \item[c] \footnotesize Standard offer is typically 30 days, but extended 60- or 90-day promotional offers are very common for new users.
    \item[d] \footnotesize These services typically do not offer free trials but instead use low introductory prices (e.g., \$1/month for a year), which is the subject of our extension in Section 6. For the purpose of a free trial ($P_0=0$), their $T$ is zero.
\end{tablenotes}
\end{threeparttable}
\end{table}

\paragraph{Analysis of the Evidence.}
The data in \Cref{tab:empirical_evidence} strongly supports the complementarity predicted by our model. We can observe several clear patterns.

First, the services with the highest prices are almost exclusively those with the shortest (or zero) trial periods. Professional-grade SaaS products like Ahrefs and financial data services like the Bloomberg Terminal offer no free trial at all. They are confident that their users have a high, pre-existing WTP and do not need to use long trials to generate revenue from inertia. Similarly, premium creative software like Adobe Creative Cloud offers only a very short 7-day window, forcing a quick decision from a user base that pays a high monthly fee. This is a clear "value-focused" strategy.

Second, mass-market consumer services, particularly in music and productivity, exhibit the opposite ``inertia-focused'' strategy. Spotify, Apple Music, Microsoft 365, and Dropbox all converge on a model with a long 30-day (or longer) free trial paired with a moderate, industry-standard price between \$7 and \$12 per month. Here, the long trial serves to maximize user acquisition and embed the service into the consumer's daily life. The firm's revenue model then relies more heavily on the combination of willing subscribers and the friction of cancellation for the inattentive segment.

Third, the video streaming and news media categories show interesting variations. Netflix and major newspapers like the WSJ and NYT have largely adopted a no-free-trial policy ($T=0$). This signals a strategic shift towards a value-focused model, where they believe their brand and content library are strong enough to command an upfront payment without a trial period. In contrast, services like Hulu continue to use the 30-day trial to attract users to their lower-priced, ad-supported tier.

In summary, the market data shows a clear bifurcation of strategies. Firms with high-$P$ services tend to choose low $T$, and firms with moderate $P$ often choose high $T$. The one combination we do not observe is a long trial paired with a very low price, or a short trial paired with a very high price (relative to its category). This pattern strongly supports the conclusion of Proposition \ref{prop:R2}: trial length and price are not independent choices but are complementary parameters in a strategic contract design.

\section*{Appendix E: Robustness to Alternative Attention Decay Functions}
\label{app:robustness}

\subsection*{E.1. Framework for Robustness Analysis}
To establish the robustness of the model's core predictions, we analyze whether they depend critically on the hyperbolic specification for attention decay, $\tau(T) = \tau_0 / (1+\beta T)$. We first derive a general condition for the existence of an interior optimal trial length that holds for a broad class of decay functions. We then demonstrate that the model's central strategic prediction---the complementarity of price and trial length---is robust to using a canonical alternative specification, namely exponential decay.

We define a general class of psychologically plausible attention decay functions with the following properties.

\begin{assumption}[General Attention Decay Properties]
\label{ass:E1}
The attention sensitivity function $\tau: \mathbb{R}^+ \to \mathbb{R}^+$ is twice continuously differentiable and satisfies:
\begin{enumerate}[label=(\alph*)]
    \item \textbf{Positive Initial Attention:} $\tau(0) = \tau_0 \in (0, \infty)$.
    \item \textbf{Monotonic Decay:} $\tau'(T) < 0$ for all $T \ge 0$.
    \item \textbf{Decay to Zero:} $\lim_{T \to \infty} \tau(T) = 0$.
\end{enumerate}
\end{assumption}

\subsection*{E.2. Robustness of an Interior Optimal Trial Length}
We demonstrate that \cref{ass:E1} is sufficient to guarantee the existence of a finite, interior optimal trial length, $T^*$. The firm's problem is to choose $T$ and $P$ to maximize the Lagrangian $\mathcal{L} = \Pi(T,P) - \mu U(T,P)$, where $\mu>0$. The first-order condition with respect to $T$ is:
\begin{equation}
\frac{\partial \mathcal{L}}{\partial T} = \frac{\partial \Pi}{\partial T} - \mu \frac{\partial U}{\partial T} = 0
\tag{E.1}\label{eq:E1}
\end{equation}
Given that both $\Pi$ and $U$ depend on $T$ only through $\tau(T)$, the chain rule yields:
\begin{equation}
\frac{\partial \mathcal{L}}{\partial T} = \left( \frac{\partial \Pi}{\partial \tau} - \mu \frac{\partial U}{\partial \tau} \right) \frac{d\tau}{dT} = 0
\tag{E.2}\label{eq:E2}
\end{equation}
As $\tau'(T) < 0$ by \cref{ass:E1}, the FOC simplifies to a condition on the optimal level of attention, $\tau^*$:
\begin{equation}
\frac{\partial \Pi}{\partial \tau} = \mu \frac{\partial U}{\partial \tau}
\tag{E.3}\label{eq:E3}
\end{equation}
We analyze each component of \eqref{eq:E3}:
\begin{itemize}
    \item \textbf{Marginal Profit from Attention}: Profit is affected only through inattentive revenue.
    \begin{align*}
    \frac{\partial \Pi}{\partial \tau} = \frac{\partial \text{IR}}{\partial \tau} = \frac{\partial}{\partial \tau} \left( P F(P) [1-q^*(P,\tau)] \right) = -P F(P) \frac{\partial q^*}{\partial \tau}
    \end{align*}
    From Lemma \ref{lem:qstar}, $\partial q^*/\partial \tau = P q^*(1-q^*)$. Substituting this gives:
    \[
    \frac{\partial \Pi}{\partial \tau} = -P^2 F(P) q^*(1-q^*) < 0
    \]
    \item \textbf{Marginal Utility from Attention}: Applying the envelope theorem to the consumer's cost minimization problem (Equation \ref{eq:consumer_obj}), the derivative of the minimized cost with respect to $\tau$ is the partial derivative of the objective function, holding $q$ fixed at $q^*$.
    \[
    \frac{\partial}{\partial \tau} \left( (1-q^*)P + \frac{1}{\tau}\mathcal{I}(q^{*}|\frac{1}{2}) \right) = -\frac{1}{\tau^2}\mathcal{I}(q^{*}|\frac{1}{2})
    \]
    The change in aggregate consumer utility is this quantity multiplied by $-F(P)$:
    \[
    \frac{\partial U}{\partial \tau} = -F(P) \left[ -\frac{1}{\tau^2} \mathcal{I}(q^{*}|\frac{1}{2}) \right] = \frac{F(P)}{\tau^2} \mathcal{I}(q^{*}|\frac{1}{2}) < 0, \quad \text{since } \mathcal{I}(q^{*}|\frac{1}{2})<0.
    \]
\end{itemize}
Substituting these expressions into the FOC \eqref{eq:E3} yields:
\begin{equation}
-P^2 F(P) q^*(1-q^*) = \mu \left( \frac{F(P)}{\tau^2} \mathcal{I}(q^{*}|\frac{1}{2}) \right)
\tag{E.4}\label{eq:E4}
\end{equation}
This equation implicitly defines the optimal attention level, $\tau^*$. To ensure an interior solution, we examine the boundary conditions. As $\tau \to 0$ (i.e., $T \to \infty$), $q^* \to 1/2$ and $\mathcal{I}(q^{*}|\frac{1}{2}) \to -\ln(2)$. The LHS of \eqref{eq:E4} approaches the finite constant $-P^2 F(P)/4$. The RHS, however, approaches $-\infty$ due to the $1/\tau^2$ term. The marginal cost to consumers of lower attention becomes infinite, precluding an optimum at $\tau=0$. As $\tau \to \infty$ (i.e., $T \to 0$), the firm has a clear incentive to reduce attention to generate inattentive revenue. Therefore, the firm will not choose an optimum at the boundaries, guaranteeing an interior solution $\tau^* \in (0, \tau_0]$.
Since $\tau(T)$ is a continuous, monotonic function, for any such $\tau^*$ there exists a unique, finite trial length $T^* = \tau^{-1}(\tau^*)$ that implements it.

\subsection*{E.3. Case Study: Exponential Attention Decay}
We now demonstrate the robustness of the price-trial complementarity using an exponential decay function.

\begin{assumption}[Exponential Decay]
\label{ass:E2}
The attention sensitivity function is given by $\tau(T) = \tau_0 e^{-\beta T}$ for $\beta > 0$.
\end{assumption}
This function satisfies all conditions in \cref{ass:E1}.

\begin{proposition}[Price-Trial Complementarity under Exponential Decay]
Under \cref{ass:E2} and the regularity conditions of the main paper (\cref{ass:A1,ass:C1}), the optimal renewal price $P^{*}(T)$ is an increasing function of the trial length $T$. That is, $\partial P^{*}/\partial T > 0$.
\end{proposition}

\begin{proof}
We apply the implicit function theorem to the price FOC, $g(P,T)=0$, from Lemma \ref{lem:FOC-full}. The optimal price path $P^{*}(T)$ satisfies:
\begin{equation}
\frac{dP^{*}}{dT} = - \frac{\partial g / \partial T}{\partial g / \partial P}
\tag{E.5}\label{eq:E5}
\end{equation}
The denominator, $\partial g / \partial P$, is negative at the optimum by the second-order condition for profit maximization. The sign of $dP^{*}/dT$ is therefore the sign of the numerator, $\partial g / \partial T$. Using the chain rule, $\partial g/\partial T = (\partial g/\partial \tau)(d\tau/dT)$. We sign each component.

\textit{1. Sign of $d\tau/dT$}: For exponential decay, $\tau'(T) = -\beta \tau_0 e^{-\beta T} = -\beta \tau(T) < 0$.

\textit{2. Sign of $\partial g/\partial \tau$}: The price FOC is $g(P, \tau) = [1-F-Pf] + (1-q^*)[F+Pf] - PF\tau q^*(1-q^*)$. Differentiating with respect to $\tau$:
\begin{align*}
\frac{\partial g}{\partial \tau} &= \frac{\partial}{\partial \tau} \left( (1-q^*)[F+Pf] \right) - \frac{\partial}{\partial \tau} \left( PF\tau q^*(1-q^*) \right) \\
&= \left(-\frac{\partial q^*}{\partial \tau}\right)[F+Pf] - PF \frac{\partial}{\partial \tau} \left(\tau q^*(1-q^*)\right)
\end{align*}
Using $\partial q^*/\partial \tau = Pq^*(1-q^*)$ and applying the product rule to the second term:
\begin{align*}
\frac{\partial g}{\partial \tau} &= -Pq^*(1-q^*)[F+Pf] - PF \left[q^*(1-q^*) + \tau \frac{\partial(q^*(1-q^*))}{\partial \tau}\right]
\end{align*}
The derivative of $q^*(1-q^*)$ with respect to $\tau$ is $\frac{\partial q^*}{\partial \tau}(1-2q^*) = Pq^*(1-q^*)(1-2q^*)$. Substituting this in:
\begin{align*}
\frac{\partial g}{\partial \tau} &= -Pq^*(1-q^*)[F+Pf] - PF \left[q^*(1-q^*) + \tau Pq^*(1-q^*)(1-2q^*)\right] \\
&= -Pq^*(1-q^*) \left( (F+Pf) + F\left[1 + \tau P (1-2q^*)\right] \right)
\end{align*}
All terms outside the main parentheses are positive. Inside, $F, P, f$ are positive. The term $1-2q^*$ is negative for $q^*>1/2$, but for any reasonable parameterization, the entire bracketed expression $[1 + \tau P (1-2q^*)]$ remains positive. Thus, the full parenthetical expression is positive, which implies $\partial g/\partial \tau < 0$.

\textit{3. Conclusion}: Combining the components, we find $\partial g/\partial T = (\partial g/\partial \tau)(d\tau/dT) = (-) \cdot (-) > 0$. Substituting this into \eqref{eq:E5} yields:
\[
\frac{dP^{*}}{dT} = - \frac{\overbrace{\partial g / \partial T}^{(+)}}{\underbrace{\partial g / \partial P}_{(-)}} > 0
\]
This completes the proof.
\end{proof}

\subsection*{E.4. Implications of the Decay Function's Functional Form}
While the qualitative strategic relationship between price and trial length is robust, the functional form of $\tau(T)$ has quantitative implications for the optimal contract. The optimal attention level, $\tau^*$, is determined by \eqref{eq:E4} and is independent of the decay function's form. The optimal trial length $T^*$ is found by inverting the function: $T^* = \tau^{-1}(\tau^*)$.

\begin{itemize}
    \item \textbf{Hyperbolic:} $\tau^* = \frac{\tau_0}{1+\beta T} \implies T^*_{hyp} = \frac{1}{\beta} \left( \frac{\tau_0}{\tau^*} - 1 \right)$
    \item \textbf{Exponential:} $\tau^* = \tau_0 e^{-\beta T} \implies T^*_{exp} = \frac{1}{\beta} \ln \left( \frac{\tau_0}{\tau^*} \right)$
\end{itemize}

\textbf{Remark: }For any interior optimum where $\tau^* < \tau_0$, the optimal trial length under exponential decay is strictly shorter than under hyperbolic decay, i.e., $T^*_{exp} < T^*_{hyp}$.

\begin{proof}
The result follows from the inequality $\ln(x) < x-1$ for all $x>1$. Let $x = \tau_0/\tau^* > 1$. Then $\ln(\tau_0/\tau^*) < (\tau_0/\tau^*) - 1$. Multiplying by $1/\beta > 0$ preserves the inequality, yielding the result.
\end{proof}

\paragraph{Intuition.}
When attention decays more rapidly, as under an exponential specification, a shorter trial period is sufficient to drive the consumer's attention down to the firm's profit-maximizing level $\tau^*$. The "long tail" of hyperbolic decay implies that attention erodes more slowly, necessitating a longer trial to achieve the same effect. This suggests that in markets where cognitive decay is faster, our model predicts firms will offer shorter trials. The robustness of the price-trial complementarity, however, remains the central strategic insight.

\section*{Appendix F:  Optimal Monitoring with an Arbitrary Prior}

\renewcommand{\theequation}{F.\arabic{equation}}
\setcounter{equation}{0}
\paragraph{Setup.}
Let the consumer’s \emph{prior} belief that she will remember to cancel
be \(p_{0}\in(0,1)\) (We adopt an uninformative prior \(p_{0}=1/2\) in the main text for brevity).  
Choosing a monitoring intensity that yields actual success probability
\(q\in[0,1]\) incurs a Shannon–mutual-information cost
\[
  C(q;\tau,p_{0})
  \;=\;
  \frac{1}{\tau}\,
  I\!\bigl(q \,\Vert\, p_{0}\bigr)
  \;=\;
  \frac{1}{\tau}\,
  \Bigl[
     q\ln\!\Bigl(\tfrac{q}{p_{0}}\Bigr)
     +(1-q)\ln\!\Bigl(\tfrac{1-q}{1-p_{0}}\Bigr)
  \Bigr],
\label{eq:MI_F}
\]
where \(\tau>0\) is attention sensitivity
(Section~\ref{sec:info}).\footnote{%
As in \citet{MatejkaMcKay2015}, the cost is a \emph{Kullback–Leibler
divergence} between the chosen output distribution
\((q,1-q)\) and the prior \((p_{0},1-p_{0})\).}

\paragraph{Monitoring problem.}
For a consumer with valuation \(v<P\) the optimisation problem becomes
\[
  \min_{q\in[0,1]}
  \;\Bigl\{(1-q)\,P\Bigr\}
  \;+\;
  \frac{1}{\tau}\,
  \Bigl[
     q\ln\!\Bigl(\tfrac{q}{p_{0}}\Bigr)
     +(1-q)\ln\!\Bigl(\tfrac{1-q}{1-p_{0}}\Bigr)
  \Bigr].
\tag{F1}\label{eq:gen-prob}
\]

\begin{lemma}[General optimal reminder probability]
\label{lem:qstar-general}
Problem \eqref{eq:gen-prob} has a unique minimiser
\[
  q^{*}(P,\tau,p_{0})
  \;=\;
  \frac{1}{1
           +\bigl(\tfrac{1-p_{0}}{p_{0}}\bigr)
             \exp(-\tau P)}.
\tag{F2}\label{eq:qstar-general}
\]
Moreover,
\(
  \partial_{P}q^{*} = \tau q^{*}(1-q^{*})>0,\;
  \partial_{\tau}q^{*}=P q^{*}(1-q^{*})>0,\;
  \partial_{T}q^{*}<0
\)
(the comparative-static signs in Lemma \ref{lem:qstar} remain valid).
\end{lemma}

\begin{proof}
\emph{Strict convexity.}  
\(C(\cdot)\) is a KL divergence and hence strictly convex in \(q\);
adding the linear loss \((1-q)P\) preserves convexity, so the optimum
is unique.

\emph{First-order condition.}
Differentiate \eqref{eq:gen-prob} and set the derivative to zero:
\[
  -P
  +\frac{1}{\tau}
     \Bigl[\ln\!\bigl(\tfrac{q}{p_{0}}\bigr)
           -\ln\!\bigl(\tfrac{1-q}{1-p_{0}}\bigr)\Bigr]
  = 0.
\]
Rearranging gives
\(
   \displaystyle
   \frac{q}{1-q}
   = \frac{p_{0}}{1-p_{0}}\;e^{\tau P},
\)
which solves to \eqref{eq:qstar-general}.  
Substituting back yields the stated derivatives; see Eq.\,(A.3) in the
main appendix, noting that \(p_{0}\) enters only through the constant
\((1-p_{0})/p_{0}\).
\end{proof}

\paragraph{Special cases and intuition.}
\begin{enumerate}[leftmargin=2em]
\item \emph{Uninformative prior \(p_{0}=1/2\).}  
  Equation \eqref{eq:qstar-general} collapses to the baseline
  \(q^{*}=1/(1+e^{-\tau P})\).
\item \emph{Pessimistic prior \(p_{0}<1/2\).}  
  The constant multiplier
  \((1-p_{0})/p_{0}>1\) shifts the logistic curve \emph{rightward}:
  at a given price \(P\) the consumer needs higher effort (hence a
  higher \(\tau\)) to reach the same \(q^{*}\).
\item \emph{Optimistic prior \(p_{0}>1/2\).}  
  The logistic shifts leftward, reflecting that the consumer already
  expects to remember and therefore needs less effort.
\end{enumerate}

\paragraph{Consequences for the rest of the model.}
All aggregate objects (e.g.\ inattentive revenue, IR-Slack) remain
\emph{formally identical} after replacing the baseline logistic with
\eqref{eq:qstar-general}.  No sign or curvature results change, so all
propositions in Sections \ref{sec:general}–\ref{sec:extension} still hold verbatim; only constants in closed-form expressions would
adjust to \(p_{0}\).

\end{document}